\theoremstyle{plain}
\newtheorem{mth}{Theorem}[section]
\newtheorem{lem}[mth]{Lemma}
\theoremstyle{remark}
\newtheorem*{defn}{Definition}
\newtheorem*{rem}{Remark}
\newtheorem*{ex}{Example}
\numberwithin{equation}{section}
\newcommand{\I}{\mathrm{I}}
\newcommand{\II}{\mathrm{II}}
\newcommand{\III}{\mathrm{III}}
\newcommand{\pert}{u}
\newcommand{\bfR}{\mathbf{R}}
\newcommand{\point}{a}
\newcommand{\calB}{{\mathcal B}}
\newcommand{\calH}{{\mathscr H}}
\newcommand{\calN}{{\mathscr N}}
\newcommand{\calQ}{{\mathcal Q}}
\newcommand{\calP}{{\mathscr P}}
\newcommand{\calV}{{\mathscr V}}
\newcommand{\calR}{{\mathcal R}}
\newcommand{\bfP}{{\mathbf P}}
\newcommand{\bfK}{{\mathbf K}}
\newcommand{\bfE}{{\mathbf E}}
\newcommand{\bfL}{{\mathbf L}}
\newcommand{\bfk}{{\mathbf k}}
\newcommand{\Int}{\operatorname{Int}}
\newcommand{\trace}{\operatorname{trace}}
\newcommand{\tr}{\operatorname{trace}}
\newcommand{\fl}{\operatorname{fluct}}
\newcommand{\R}{{\mathbb R}}
\newcommand{\D}{D}
\newcommand{\Z}{{\mathbb Z}}
\newcommand{\C}{{\mathbb C}}
\newcommand{\const}{\mathrm{const.}}
\newcommand{\eps}{{\varepsilon}}
\newcommand{\re}{\operatorname{Re}}
\newcommand{\im}{\operatorname{Im}}
\newcommand{\Ham}{\mathbf{H}}
\newcommand{\calZ}{\mathscr{Z}}
\newcommand{\Pol}{\operatorname{Pol}}
\renewcommand{\d}{{\partial}}
\newcommand{\dbar}{\bar{\partial}}
\newcommand{\1}{\mathbf{1}}
\newcommand{\Pc}{\operatorname{Pc}}
\newcommand{\Var}{\operatorname{Var}}
\newcommand{\dist}{\operatorname{dist}}
\newcommand{\supp}{\operatorname{supp}}
\newcommand{\Lap}{\Delta}
\newcommand{\lnorm}{\left\|}
\newcommand{\rnorm}{\right\|}
\def\norm#1{\lnorm {#1} \rnorm}
\def\labs{\left |}
\def\rabs{\right |}
\def\babs#1{\labs {#1} \rabs}
\newcommand*\bigcdot{\mathpalette\bigcdot@{.5}}
\newcommand*\bigcdot@[2]{\mathbin{\vcenter{\hbox{\scalebox{#2}{$\m@th#1\bullet$}}}}}
\begin{document}

\title[Insertion of a point charge]{The random normal matrix model: insertion of a point charge}

\author{Yacin Ameur}

\address{Yacin Ameur\\
Department of Mathematics\\
Faculty of Science\\
Lund University\\
P.O. BOX 118\\
221 00 Lund\\
Sweden}
\email{Yacin.Ameur@maths.lth.se}

\author{Nam-Gyu Kang}
\address{Nam-Gyu Kang\\
School of Mathematics\\
Korea Institute for Advanced Study\\
85 Hoegiro\\
Dongdaemun-gu\\
Seoul 02455\\
Republic of Korea}
\email{namgyu@kias.re.kr}

\author{Seong-Mi Seo}
\address{Seong-Mi Seo\\
School of Mathematics\\
Korea Institute for Advanced Study\\
85 Hoegiro\\
Dongdaemun-gu\\
Seoul 02455\\
Republic of Korea}
\email{seongmi@kias.re.kr}

\keywords{Planar Coulomb gas, 
microscopic limit, conical singularity, Fock-Sobolev space, 
Ward's equation}
\subjclass[2010]{82D10, 60G55, 46E22, 42C05, 30D15}

\thanks{Nam-Gyu Kang was partially supported by Samsung Science and Technology Foundation (SSTF-BA1401-51), a
KIAS Individual Grant (MG058103) at Korea Institute for Advanced Study, and National Research Foundation of
Korea under grant number NRF-2019R1A5A1028324.}

\thanks{Seong-Mi Seo was partially supported by a KIAS Individual Grant (MG063103) at Korea Institute for Advanced Study and by the National Research Foundation of Korea, grant number 2019R1F1A1058006 and NRF-2019R1A5A1028324.}

\begin{abstract} In this article, we study microscopic properties of a two-dimensional Coulomb gas ensemble
near a conical singularity arising from insertion of a point charge in the bulk of the droplet. In the determinantal case,
we characterize all rotationally symmetric
scaling limits (``Mittag-Leffler fields'') and obtain universality of them when the underlying potential is algebraic.
Applications include a central limit theorem for $\log|p_n(\zeta)|$ where $p_n$ is the characteristic polynomial of an $n$:th order random normal matrix.
\end{abstract}

\maketitle

\section{Introduction and main results} \label{imr}
Insertion operations appear frequently in Coulomb gas theory, conformal field theory, and allied topics. In this article we study planar Coulomb gas ensembles $\{\zeta_j\}_1^n$ in the determinantal case, under the influence of an external confining potential $Q$ and we investigate the effect of inserting a point charge at a point $p$ in the bulk of the droplet created by the external field. The inserted charge is assumed to be of the same order of magnitude as the individual charges $\zeta_j$.

 From a field-theoretical point of view, this kind of insertion operation is quite natural, for example it can be directly related with
the statistics of the random logarithmic potential $Y_n(p)=\sum_1^n\log|p-\zeta_j|$. The investigation of random fields such as $Y_n$ and their asymptotic properties as $n\to\infty$
is currently an active area; see for instance \cite{DS,DS2,FHK,FKS,HKO01,KM,KM0,Ka,Kra,La18,La20,LP19,LCCW,LY,S2,WW} for related works.
Our main contribution in this direction is a central limit theorem for $Y_n(p)$ which is valid for the class of
\textit{algebraic} potentials (of the form $Q(\zeta)=\tau_0|\zeta-p|^{2\lambda}+P(\zeta)$ near the droplet, where $\tau_0,\lambda>0$, $P$ is a harmonic polynomial, and (say) $Q\equiv+\infty$ near $\infty$).

 Our exposition depends on a knowledge of the microscopic scaling limit near the inserted charge. Assuming that the potential is algebraic and using theory for rescaled Ward identities in the spirit of the papers \cite{AHM2,AK,AKM,AS}, we are in fact able to find an explicit microscopic correlation kernel, depending on three parameters: $\tau_0$, $\lambda$, and the strength of the inserted charge.
This part of our work is connected with the theory of certain Fock-Sobolev spaces of entire functions, and might be of some independent interest.

\subsubsection*{General notation} \label{notcon} If $g$ is a function, then $\bar{g}(z)$ means the complex-conjugate of $g(z)$.
A function $h(\zeta,\eta)$ is \textit{Hermitian} if $h(\eta,\zeta)=\bar{h}(\zeta,\eta)$ and \textit{Hermitian-analytic} (\textit{entire}) if
it is moreover analytic (entire) in $\zeta$ and $\bar{\eta}$.
A \textit{cocycle} is a function
$c(\zeta,\eta)=g(\zeta)\bar{g}(\eta)$ where $g$ is a continuous unimodular function.
$\C^*=\C\setminus\{0\}$ is the punctured plane, $\hat{\C}=\C\cup\{\infty\}$ is the Riemann sphere, $\D(a;r)$ is the open disk with center $a$, radius $r$, and $\Pol(n)$ is the set of analytic polynomials of degree at most $n-1$. If $K\subset\C$ is compact then the ``polynomially convex hull'' $\Pc K$ is the complement of the unbounded component of $\C\setminus K$. (Cf. e.g. \cite[p. 53]{ST}.)

We write $\Lap=\d\dbar=\tfrac 1 4(\tfrac {\d^2} {\d x^2}+\tfrac {\d^2} {\d y^2})$ (usual Laplacian
divided by $4$) and $dA=\tfrac 1 \pi dxdy$ (area measure divided by $\pi$).
Also, we sometimes write $\mu(f)$ for $\int f\, d\mu$.

\subsection{The Random Normal Matrix model} \label{bset} We now introduce our basic setup. In what follows, we will assume some basic familiarity with reproducing kernels and Bergman spaces; see \cite{DuS} and \cite{Ar} for introductions
to these theories.

We start by fixing a suitable lower semicontinuous function (external potential) $Q:\C\to \R\cup\{+\infty\}$ of sufficient increase near $\infty$,
\begin{equation}\label{gro}\liminf_{\zeta\to\infty}\frac {Q(\zeta)}{\log|\zeta|}>2
\end{equation}
For a Borel measure $\mu$ on $\C$ we
associate the weighted logarithmic energy,
$$I_{Q}[\mu]=\iint _{\C^2} \log \frac{1}{|\zeta-\eta|}\, d\mu(\zeta)d\mu(\eta)+\mu(Q).$$
The minimizer $\sigma=\sigma[Q]$ of this energy, among compactly supported Borel probability measures $\mu$, is
the equilibrium measure in external potential $Q$. The
support $S=S[Q]:=\supp\sigma$ is called the \textit{droplet}. Assuming
some smoothness of $Q$ in a neighbourhood of the droplet we ensure that $S$ is compact and  $\sigma$ is a probability measure taking the form
$d\sigma=\Lap Q\cdot \1_S\, dA.$
In particular we have $\Lap Q\ge 0$ on $S$. (We refer to~\cite{ST} as a source
for these results; see especially Theorem I.1.3, Theorem II.1.3. In our case $Q$ will not be everywhere smooth near the droplet, but merely satisfy a H\"{o}lder condition at the origin,
so the remarks about nonsmooth potentials on page 138 are also relevant.)

In the following, we will need some further regularity of the droplet.
To guarantee this, by means of Sakai's theory, it suffices, besides the growth \eqref{gro}, to assume that $Q$ be real-analytic in some neighbourhood of the boundary $\d S$. We refer to
the discussion in \cite[Sect. 1]{AKMW} for further details.

To simplify the discussion, we will assume that $Q$ be real-analytic wherever $Q<+\infty$, except possibly at the origin, where we merely assume that the limit
\begin{equation}
\label{lam_lim}\lambda-1:=\lim_{\zeta\to 0}\frac {\log\Lap Q(\zeta)}{2\log|\zeta|}
\end{equation}
exists and exceeds $-1$, i.e., we require that
$\lambda>0.$ This implies that $Q$ is H\"{o}lder continuous at the origin; see \eqref{candec} below.

Let us fix $\point\in\Int S$, say $\point=0$. Also fix a number $c>-1$,
a positive integer $n$, and a suitable, smooth, real-valued function $\pert$.
Given this, we define an $n$-dependent potential
\begin{equation}\label{pot1}V_{n}(\zeta) = Q(\zeta)+ \tfrac {2c}n\log \tfrac 1{|\zeta|}-\tfrac 1n\pert(\zeta).\end{equation}

We refer to $Q$ as the \textit{underlying potential} - this is what determines the global properties of the droplet.
The logarithmic term in \eqref{pot1} has an effect on the microscopic distribution of particles
near $0$ and near the boundary of the droplet. More precisely, it corresponds to a point charge of strength $c$ at the origin, and at the same time, a harmonic measure on the boundary of the droplet with opposite sign. This effect can be understood as a balayage operation, sweeping of a measure on a domain (here, a point mass in the droplet) to the boundary.
The extra freedom afforded by the term $\pert/n$
is used to
accommodate different kinds of microscopic behaviour near the boundary.

\smallskip

We now define ensembles.
Given $V_n$ of the form \eqref{pot1}, we consider a random system $\{\zeta_j\}_1^n\subset\C$ of $n$ identical point charges
under the influence of the external field $nV_n$.
The system is picked at random with respect to the Boltzmann-Gibbs probability law
\begin{equation}\label{bogi}d\bfP_{n}=\tfrac{1}{Z_{n}} e^{-\Ham_{n}}d\!A_n,\qquad \Ham_n:=\sum_{j\neq k} \log \tfrac{1}{|\zeta_j-\zeta_k|}+n\sum_{j=1}^{n}V_n(\zeta_j)\end{equation}
where $Z_{n}=\int_{\C^n} e^{-\Ham_{n}}\,dA_n$ is the partition function and $dA_n=dA^{\otimes n}$ is suitably normalized Lebesgue measure on $\C^n$.

(The system $\{\zeta_j\}_1^n$ represents the eigenvalues of $n \times n$ random normal matrices equipped with the probability measure proportional to
$e^{-n\mathbf{tr} \, V_n(M)} dM$.
Here, $\bf{tr}$ is the trace and $dM$ is the measure on the space of normal matrices induced by the Euclidean metric on $\C^{n^2}$. See e.g. \cite{CZ} and \cite{EF} for details.)

Recalling that $0\in\Int S$ we shall study the distribution of rescaled system
\begin{equation*}z_j=r_n^{-1}\cdot \zeta_j,\qquad r_n:=n^{-1/2\lambda},\end{equation*}
where $\lambda>0$ is given by \eqref{lam_lim}.
The number $r_n=n^{-1/2\lambda}$ might be called
a \textit{microscopic scale}. (This scale is chosen so that $\sigma(D(0,r_n))=Cn^{-1}+\cdots$ for some positive constant $C$.)

To analyze the large $n$ behaviour of $\{z_j\}_1^n$, we shall
use the \textit{canonical decomposition} of $Q$ about the origin, by which we mean
\begin{equation}\label{candec}Q(\zeta)=Q_0(\zeta)+h(\zeta)+O(|\zeta|^{2\lambda+\epsilon}),\qquad (\zeta\to 0),\end{equation}
where $Q_0$ is a non-negative, homogeneous function of degree $2\lambda$, and where $h$ is a harmonic polynomial of degree at most $2\lambda$ and $\epsilon$ is a positive number. In this article, we assume that the potential has the above canonical decomposition and that the $O$-term remains small after taking Laplacians:
$$\Lap Q=\Lap Q_0+O(|\zeta|^{2\lambda-2+\epsilon}),\qquad (\zeta\to 0).$$
If $Q$ is real analytic at the origin, then we surely obtain the canonical decomposition by Taylor series expansion.
We also assume that $Q_0$ and $\Lap Q_0$ are positive definite, i.e., $Q_0>0$ and $\Lap Q_0>0$ when $z\ne 0$.

We refer to the homogeneous function $Q_0$ as the \textit{dominant part} of $Q$. Typically, we shall find that the rescaled ensemble depends in a ``universal'' way on $Q_0$ and $c$.

\begin{rem} Subtracting an $n$-dependent constant from $Q$ does not change the problem, so we can assume $Q(0)=h(0)=0$. We can likewise assume that $\pert(0)=0$.
In the following, except when otherwise is explicitly stated, we assume that these
normalizations are made.
\end{rem}

\begin{defn}
We say that the point $p=0$ is a \textit{regular} point if $c=0$ and $\lambda=1$; otherwise it is \textit{singular} of type $(\lambda,c)$. If $c\ne 0$ and $\lambda=1$
we speak of a \textit{conical singularity}. We have a \textit{bulk singularity} if $c=0$ and $\lambda\ne 1$,
and a \textit{combined singularity} if $c\ne 0$ and $\lambda\ne 1$.
\end{defn}

We now recall some terminology with respect to the system $\{\zeta_j\}_1^n$ and its rescaled counterpart $\{z_j\}_1^n$.
As a general rule, we designate non-rescaled objects by boldface symbols $\bfR,\bfK,\bfL$, etc., while rescaled objects are written in italics, $R,K,L$, etc.

\smallskip

For a subset $D\subset\C$ we consider the random variable $\calN_D=\#\{j:\zeta_j\in D\}$, and we define,
for distinct $\eta_1,\ldots,\eta_p\in\C$, the \textit{intensity $p$-point function}
$$\bfR_{n,p}(\eta_1,\cdots,\eta_p)=
\lim_{\eps\to 0}\tfrac 1
{\eps^{2p}}\cdot \bfE_n(\prod_{j=1}^p \calN_{\D(\eta_j;\eps)}).$$
The most basic intensity function is the $1$-point function
$\bfR_n(\zeta):=\bfR_{n,1}(\zeta).$

As is well-known (see e.g. the computation in \cite[Section IV.7.2]{ST}), the function $\bfR_{n,p}$ can be expressed as a $p\times p$ determinant
$\bfR_{n,p}(\zeta_1,\cdots,\zeta_p) = \det\left(\bfK_n(\zeta_i,\zeta_j)\right)_{i,j=1}^{p},$
where $\bfK_n$ is a Hermitian function called a correlation kernel of the process. Indeed, we can take $\bfK_n$ as
$$\bfK_n(\zeta,\eta) = \sum_{j=0}^{n-1}p_{n,j}(\zeta)\bar{p}_{n,j}(\eta) e^{-nV_n(\zeta)/2-nV_n(\eta)/2},$$
where $p_{n,j}$ is the orthonormal polynomial of degree $j$ with respect to the measure $e^{-nV_n} dA$.
 We shall always use the symbol $\bfK_n$ to denote this \textit{canonical} correlation kernel.

The rescaled system $\{z_j\}_1^n$ has $p$-point function
$R_{n,p}(z_1,\ldots,z_p)=\det(K_n(z_i,z_j))$
where the kernel $K_n$ is given by
\begin{equation}\label{scala}K_n(z,w)=r_n^2\,\bfK_n(\zeta,\eta),\qquad (z=r_n^{-1}\zeta,\, w=r_n^{-1}\eta).\end{equation}

\smallskip

We now define function spaces. We will employ a function $V_0$, which we call the \textit{microscopic potential} at $0$, whose
definition is
\begin{equation}\label{V0}V_0(z)=Q_0(z)-2c\log|z|.\end{equation}
Here, $Q_0$ is the dominant part of $Q$ in \eqref{candec}.
We also define a corresponding measure $\mu_0$ by
$$d\mu_0=e^{-V_0}\, dA,$$
 and write $L^2_a(\mu_0)$ for the Bergman space (``generalized Fock-Sobolev space'') of all entire functions $f$ of finite norm,
$\|f\|_{L^2(\mu_0)}=(\int_\C |f|^2\, d\mu_0)^{1/2}.$
The Bergman kernel of this space will be denoted by
$L_0(z,w).$

\begin{ex} (``Model Mittag-Leffler ensemble''). Fix three real parameters $\tau_0,\lambda,c$ with $\tau_0,\lambda>0$ and $c>-1$
and define
\begin{equation*}Q_0(\zeta)=\tau_0\cdot \babs{\zeta}^{2\lambda}\quad \text{and}\quad V_{n}(\zeta)=Q_0(\zeta)-\tfrac{2c}{n}\log\babs{\zeta}.\end{equation*}

Rescaling via $\zeta= n^{-1/2\lambda}z$, we obtain \begin{align*}
R_n(z) &= n^{-1/{\lambda}}\, \bfR_n(\zeta) = n^{-1/{\lambda}} \sum_{j=0}^{n-1} |p_{n,j}(\zeta)|^2 e^{-nV_n(\zeta)}.
\end{align*}
A straightforward computation of the
orthonormal polynomials $p_{n,j}$ shows that
$$ R_n(z) = \lambda\tau_0^{(1+c)/\lambda} \sum_{j=0}^{n-1} \ \frac{(\tau_0^{1/\lambda}\babs{z}^{2})^j}{\Gamma(\frac{j+1+c}{\lambda})} e^{-\tau_0\babs{z}^{2\lambda}+2c\log\babs{z}},$$
whence $R_n(z)$ converges to the limit
\begin{equation}\label{mlob}R(z)= \lambda\tau_0^{(1+c)/\lambda}\cdot E_{1/\lambda,(1+c)/\lambda}(\tau_0^{1/\lambda}|z|^2)\cdot
e^{-\tau_0\babs{z}^{2\lambda}+2c\log\babs{z}},\end{equation}
where $E_{a,b}$ is the \textit{Mittag-Leffler function} (cf. \cite{GKMR})
\begin{equation}\label{ml}E_{a,b}(z)=\sum_{j=0}^\infty\frac {z^j}{\Gamma(aj+b)}.\end{equation}
For example, when $\tau_0=\lambda=c=1$ we obtain the well-known one-point function $R(z)=1-e^{-|z|^2}$, cf. \cite[Section 7.6]{AHM2}.

In the present case $V_0$, the microscopic potential, is just $V_0(z)=\tau_0|z|^{2\lambda}-2c\log|z|$, and $\mu_0=e^{-V_0}\, dA$.
It is easy to compute the Bergman kernel $L_0$ of the space $L^2_a(\mu_0)$.
Indeed, using that $L_0(z,w)=\sum_0^\infty e_j(z)\bar{e}_j(w)$ where
$e_j$ are the orthonormal polynomials with respect to $\mu_0$, one finds
\begin{equation}\label{lnoll}L_0(z,w)=\lambda\cdot \tau_0^{(1+c)/\lambda}\cdot E_{1/\lambda,(1+c)/\lambda}(\tau_0^{1/\lambda}z\bar{w}).\end{equation}

Comparing \eqref{mlob} and \eqref{lnoll} we find the basic relation $R(z)=L_0(z,z)e^{-V_0(z)}$ between the microscopic density and the Bergman-kernel
in Fock-Sobolev space. We shall see that this relation remains true in a much more general situation.
\end{ex}

\subsection{Main results} We now state our two principal results, which concern singular points with a certain ``rotational symmetry'', which holds up to a harmonic polynomial.

More precisely, we shall suppose that $Q$ and $V_n$ are of the form
\begin{equation}\label{sqv}Q(\zeta)=Q_r(\zeta)+P(\zeta)\qquad \text{and}\quad V_n(\zeta)=Q(\zeta)+\tfrac {2c}n\log\tfrac 1 {|\zeta|}\end{equation}
where $Q_r$ is radially symmetric and $P$ is a harmonic polynomial. Let $\lambda=1+\lim_{\zeta\to 0}\tfrac {\Delta\log Q_r(\zeta)}{2\log|\zeta|}$ as in \eqref{lam_lim}.

The positive homogeneous part $Q_0$ of degree $2\lambda$
in the canonical decomposition $Q=Q_0+h+\ldots$ is then radially symmetric, and is therefore given by
 $Q_0(z)=\tau_0|z|^{2\lambda}$ for some constant $\tau_0>0$.

Our first main result asserts that the kernel \eqref{lnoll} appears universally at this kind of singular point.

\begin{mth}\label{mt22} Suppose that $Q$ and $V_n$ are of the form \eqref{sqv} and that $Q_0$ is given by $Q_0(z)=\tau_0|z|^{2\lambda}$.

If $S$ is connected and if the outer boundary $\d\Pc S$ is everywhere regular, then the point-processes $\{z_j\}_1^n$ converge as $n\to\infty$ to a unique determinantal point field
with correlation kernel $K(z,w)=L_0(z,w)e^{-V_0(z)/2-V_0(w)/2}$ where $L_0$ is explicitly given as the Mittag-Leffler kernel \eqref{lnoll}.
 \end{mth}

\begin{rem}
The topological assumptions on $S$ in Theorem \ref{mt22} are made for convenience, in order to apply the main result on orthogonal polynomials from \cite{HW}.
\end{rem}

Our next result is a central limit theorem for the normalized Coulomb potential generated by the eigenvalues of a random normal matrix ensemble with respect to
an \textit{algebraic} underlying potential, namely a potential of the special form (near the droplet)
\begin{equation}\label{algo}Q(\zeta)=|\zeta|^{2\lambda}+P(\zeta)\end{equation}
where $P$ is a harmonic polynomial. Correspondingly we define the insertion potential
\begin{equation}\label{hspot}V_n(\zeta)=Q(\zeta)+\tfrac {2c} n\log\tfrac 1 {|\zeta|}.
\end{equation}

This definition requires some extra care, since it may well
happen that the growth condition \eqref{gro} fails (if the degree of $P$ is large). The standard way to circumvent this problem is by considering \textit{local droplets} as in e.g. the papers \cite{EF,LM}, i.e., we redefine $Q$ to be $+\infty$ outside some large compact set.

To be precise, given any (non-polar) compact set
$\Sigma\subset\C$, we may redefine $Q$ to be $+\infty$ outside of $\Sigma$.
The redefined potential has a well-defined droplet $S=S_{Q,\Sigma}$. In the following we shall fix a compact set $\Sigma$ large enough that $S\subset \Int \Sigma$, and then
redefine the potential $Q$ in \eqref{algo} as $+\infty$ outside of $\Sigma$. (The particular choice of such a set $\Sigma$ will be immaterial.)
We shall always adopt this convention about algebraic potentials in the sequel.

 Denote by $\{\zeta_j\}_1^n$ a random sample picked with respect to \eqref{bogi} with $Q$ in \eqref{algo} and write  $\ell(\zeta)=\log|\zeta|$.
It is natural to define a random variable $\tr_n \ell$ by
$$\tr_n\ell=\sum_{j=1}^n\ell(\zeta_j).$$
We shall study the fluctuation about the mean,
$\fl_n \ell:=\tr_n \ell-\bfE_n \tr_n \ell.$ When $0$ is a regular
point, it is expected that $\fl_n\ell$
be in some sense a good approximation to the Gaussian free field evaluated at $0$
provided
that $n$ is large, see e.g. the concluding remarks in \cite{AHM2} or
\cite[Appendix 6]{KM0}.

A problem with making this rigorous is that the variance of the fluctuations grows logarithmically in $n$. There are various ways one could try to circumvent this difficulty; our approach here is to consider the ``normalized'' fluctuations
$$X_n=\tfrac 2 {\sqrt{\log n}}(\tr_n\ell-\bfE_n\tr_n\ell).$$

\begin{mth} \label{mt7} Suppose that $Q$ is algebraic, of the form \eqref{algo}. Let $\{\zeta_j\}$ be picked randomly with respect to $V_n$ in \eqref{hspot}.
Suppose also that the conditions of Theorem \ref{mt22} are satisfied.
Then $X_n$ converges in distribution to the normal distribution with mean $0$, variance $1/\lambda$.
\end{mth}

 The special case of Theorem \ref{mt7} when
 $V_n(\zeta)=|\zeta|^2+2\re t\zeta-2(c/n)\log|\zeta|$, $(|t|<1)$ follows from the work of Webb and Wong in \cite[Corollary 1.2]{WW}.

 We also note that the distribution of random variables $$\tilde{X}_n(f):=\frac {\tr_n f-\bfE_n\tr_n f}{\sqrt{\Var_n(\tr_n f)}},
 \qquad \tr_n f:=\sum_1^n f(\zeta_i),$$ has been well-studied when $f$ is, for example, a characteristic function. In fact,
 Soshnikov in \cite[Theorem 1]{S2} has shown that $\tilde{X}_n(f)$ converges in distribution to the standard normal for a large class of \textit{bounded} test functions $f$ such that the variance $\Var_n(\tr_n f)\to\infty$ as $n\to \infty$. Theorem \ref{mt7} gives a similar result for the random variables $\tilde{X}_n(f)$ in the case where $f=\ell$ is unbounded. Indeed, our result immediately extends to a central limit theorem for the random fields $Y_n(p)=\trace_n\ell_p$, $\ell_p(\zeta)=\log|\zeta-p|$ with $p$ in the bulk. This is consistent with the hypothesis that $Y_n$ comes close to a $\log$-correlated Gaussian field inside the droplet.
We refer to \cite{La20} for a more thorough discussion of the relationship between $Y_n$, the Gaussian free field, and multiplicative chaos theory in the special case of the Ginibre ensemble. Recently, the fluctuations when $f$ is a characteristic function have been studied in \cite{FL} for the Ginibre ensemble.
Other relevant references in the context of unitary invariant random matrix ensembles are \cite{CFLW,FHK,HKO01,LP19}.

We remark also that cases of algebraic
potentials
have appeared in several works, e.g.~ \cite{BEG,BS,KT,LM},
while \cite{LY,WW} investigate the insertion potential
$V_n=|\zeta|^2+2\re (t\zeta)+\tfrac {2c}n\log\tfrac 1 {|\zeta|}$, ($|t|<1$, $c>-1$).

\subsection{Outline of the strategy; further results}

In the following, we assume that the general conditions on the potentials $Q$ and $V_n$ in Subsection \ref{bset}
are satisfied.

Recall that a Hermitian function $K$ is called a \textit{positive matrix} if the quadratic form $\sum_{j,k=1}^{N}\alpha_j \bar{\alpha}_k K(z_j,z_k)$ is non-negative for all scalars $\alpha_i \in \C$ and points $z_i \in \C$.

Our starting point is the following structure theorem for limiting kernels.

\begin{mth} \label{mt1} Let $K_n$ be the rescaled canonical correlation kernel. There exists a sequence of cocycles $c_n$ such that
\begin{equation}\label{mcon}c_n(z,w)K_n(z,w)=L_n(z,w)e^{-V_0(z)/2-V_0(w)/2}(1+o(1))\end{equation}
where $L_n(z,w)$ is Hermitian-entire, $o(1)\to 0$ locally uniformly
on $\C^2$ as $n\to\infty$.

The kernels $\{L_n\}$ have the compactness property that each subsequence
has a further subsequence converging
locally uniformly on $\C^2$ to a Hermitian-entire function $L$ which satisfies
the mass-one inequality:
\begin{equation}\label{mass}\int|L(z,w)|^2\, d\mu_0(w)\le L(z,z).\end{equation}
Moreover, $L$ is a positive matrix and the inequality
$L\le L_0$
holds in the sense of positive matrices, where $L_0$ is the Bergman kernel of $L^2_a(\mu_0)$.
\end{mth}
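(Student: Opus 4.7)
The plan is to produce $L_n$ explicitly from $K_n$ by absorbing a harmonic factor into a cocycle, then to extract subsequential limits by a normal-families argument, and finally to verify each of the listed properties in turn.

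First I would write out the rescaled weight. Using the canonical decomposition \eqref{candec} and writing $h = \re H$ for some analytic polynomial $H$ of degree at most $2k$ with $H(0)=0$, and using the homogeneity $nQ_0(r_n z) = Q_0(z)$ (because $r_n^{2k}=1/n$), one gets
\[
nV_n(r_n z) = V_0(z) + n\re H(r_n z) - 2c \log r_n - \pert(r_n z) + O\bigl(r_n |z|^{2k+1}\bigr).
\]
The term $n\re H(r_n z)$ is harmonic in $z$, so it can be split as $\tfrac{1}{2} n H(r_n z) + \tfrac{1}{2} n \overline{H(r_n z)}$ and absorbed into a cocycle of the form $c_n(z,w) = g_n(z)\overline{g_n(w)}$, with $g_n$ a continuous unimodular function built from the imaginary part of $H(r_n \cdot)$ (the constant $-2c\log r_n$ and the smooth perturbation $\pert$ contribute harmless factors of $1+o(1)$). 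This gives \eqref{mcon} and, because the cocycle converts the weight to an analytic-antianalytic factor times the exponential of a Hermitian function of $(z,w)$, the resulting $L_n(z,w)$ is automatically Hermitian-entire.

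Next I would establish compactness via locally uniform pointwise bounds $L_n(z,z) \le C_K$ on each compact $K\Subset \C$. Since $L_n(z,z) = K_n(z,z)\,e^{V_0(z)}(1+o(1))$ and $K_n(z,z)$ is the extremal value in the reproducing problem for $\bfK_n$, the bound follows from a mean-value inequality: an analytic function $f$ of unit weighted norm must satisfy $|f(z)|^2 \lesssim \int_{\D(z;\delta)} |f|^2 e^{-nV_n}\, dA \cdot \sup e^{nV_n}$ on a small disc, and the canonical decomposition makes this comparable to $e^{V_0(z)}$. The integrability constraint $c>-1$ ensures $d\mu_0$ is locally finite near $0$, which is what makes the argument go through at the singular point. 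With uniform bounds in hand, Montel's theorem (applied to the Hermitian-entire function $(z,w)\mapsto L_n(z,\bar w)$ via the polarization identity) extracts subsequential limits $L$ that are themselves Hermitian-entire.

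The remaining three properties are then essentially formal. The mass-one inequality: at the finite-$n$ level the reproducing property of $\bfK_n$ gives the exact identity $\int |K_n(z,w)|^2\, e^{-V_0(z)-V_0(w)}(1+o(1))\,dA(w) = K_n(z,z)e^{-V_0(z)}(1+o(1))$ after rescaling; rewriting in terms of $L_n$ and passing to the limit via Fatou's lemma (using local uniform convergence of $L_n$ and nonnegativity of the integrand, with no control on tails) yields \eqref{mass} with an inequality rather than equality. That $L$ is a positive matrix is inherited from the finite-$n$ positive-matrix property of $K_n$ (it is a reproducing kernel) under pointwise limits. Finally, the bound $L\le L_0$ is the standard extremal property of Bergman kernels: applying Cauchy--Schwarz to the reproducing formula $f(z) = \int f(w)\overline{L_0(z,w)}\,d\mu_0(w)$ with $f(w)=L(z,w)$ and using the mass-one inequality gives $L(z,z) \le L_0(z,z)$; polarizing this diagonal bound between the two positive matrices $L$ and $L_0$ produces the matrix inequality.

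The main obstacle is the locally uniform bound on $L_n(z,z)$ near the singular point $z=0$, especially in the combined regime $k>1$, $c\ne 0$: the high-order flatness $\Lap^j Q(0)=0$ for $j<k$ dictates the anomalous scale $r_n=n^{-1/2k}$, and the logarithmic insertion makes the weighted extremal problem degenerate at the origin. I expect the bound to be proved by carefully choosing test polynomials (e.g. of the form $z^m$ with $m$ matched to the behaviour of $|z|^{2c}e^{-Q_0(z)}$) in the extremal description of $\bfK_n(\zeta,\zeta)$, or by invoking the a priori kernel estimates already developed in \cite{AKM,AHM2} after verifying that they adapt to the modified potential $V_n$ of the form \eqref{pot1}.
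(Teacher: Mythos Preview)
Your overall strategy---extract the cocycle from the imaginary part of $H$, pass to subsequential limits by normal families, and deduce the mass-one inequality via Fatou---matches the paper's. There are, however, two substantive gaps.

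\medskip
\textbf{Compactness at the origin.} You correctly identify the uniform bound on $L_n(z,z)$ near $z=0$ as the main obstacle, but leave it as ``I expect\ldots''. The paper does not attempt to prove this bound directly on $\C^2$. Instead it cites \cite[Corollary~4.7]{AS2} to obtain compactness of $\{L_n\}$ only on $(\C^*)^2$, where the logarithmic singularity is absent and standard subharmonicity estimates apply. The limit $L$ is then extended from $(\C^*)^2$ to $\C^2$ by a removable-singularity argument: for fixed $w\ne 0$, the entire function $z\mapsto L(z,w)$ has a removable singularity at $0$, and then Hartogs' theorem fills in the remaining point. Finally, the convergence $L_n\to L$ is upgraded from $(\C^*)^2$ to $\C^2$ by the Cauchy integral formula (Lemma~\ref{hlem} in the paper: if Hermitian-entire $f_n\to 0$ locally uniformly on $(\C^*)^2$, then the same holds on $\C^2$). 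This sidesteps entirely the problem of bounding $L_n(0,0)$ directly, which your mean-value approach would have to confront.

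\medskip
\textbf{The inequality $L\le L_0$.} Your Cauchy--Schwarz argument correctly yields the diagonal inequality $L(z,z)\le L_0(z,z)$, but the claim that ``polarizing this diagonal bound between the two positive matrices $L$ and $L_0$ produces the matrix inequality'' is false in general: a diagonal inequality between Hermitian kernels does not imply the corresponding positive-matrix inequality. The paper instead argues that $L_n$ is the reproducing kernel of the space $\calH_n$ in \eqref{calH} with respect to a measure $\mu_{0,n}\to\mu_0$; passing to the limit, $L$ is the reproducing kernel of some Hilbert space $\calH_*$ of entire functions contractively embedded in $L^2_a(\mu_0)$. The matrix inequality $L\le L_0$ then follows from Aronszajn's theorem on differences of reproducing kernels \cite{Ar}. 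You need either this route or a direct argument that genuinely controls the off-diagonal terms, not just the diagonal.
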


The kernel $L$ in the theorem is called a \textit{limiting holomorphic kernel}. We define the
corresponding \textit{limiting correlation kernel} $K$ by
$$K(z,w)=L(z,w)e^{-V_0(z)/2-V_0(w)/2}.$$
We also write
$R(z)=K(z,z)$
and speak of a \textit{limiting 1-point function} or \textit{microscopic density}.

We have the following ``tightness theorem''.

\begin{mth} \label{m1.2} Each limiting kernel is
nontrivial, in the sense that $R$ does not vanish identically.
More precisely,
there exists a constant $\alpha>0$ such that
\begin{align}\label{abc}R(z)&=\Lap Q_0(z)\cdot (1+O(e^{-\alpha|z|^{2\lambda}})), &(z\to\infty)\\
\label{def} R(z)&=O(|z|^{2c}), &(z\to 0).
\end{align}
The kernel $K$ is the correlation kernel of a unique point field which is
determined by the collection of $p$-point functions
$R_p(z_1,\ldots,z_p)=\det (K(z_i,z_j))_{i,j=1}^p$. The point field is the limit of the ensemble $\{z_j\}_1^n$
in the sense of point processes as $n\to\infty$. \end{mth}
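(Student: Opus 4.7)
The plan has four parts, roughly matching the four assertions.

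\textbf{Behavior at the origin.} Because $L$ is Hermitian-entire, the diagonal restriction $L(z,z)$ is real-analytic and in particular bounded on a neighborhood of $0$. Since $R(z)=L(z,z)e^{-V_0(z)}$ and $e^{-V_0(z)}=|z|^{2c}e^{-Q_0(z)}$ with the normalization $Q_0(0)=0$, the estimate $R(z)=O(|z|^{2c})$ as $z\to 0$ follows immediately.

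\textbf{Asymptotic at infinity.} This is the main obstacle. The strategy is a two-sided estimate. For the upper bound, combine the inequality $L\le L_0$ in the sense of positive matrices (Theorem \ref{mt1}) with the known diagonal asymptotic of the generalized Fock-Sobolev Bergman kernel,
\begin{equation*}
L_0(z,z) = \Delta Q_0(z)\bigl(1+O(e^{-\alpha|z|^{2k}})\bigr),\qquad z\to\infty,
\end{equation*}
which in turn reduces, away from the conical vertex, to the standard Ginibre-type estimate because the weight $|z|^{2c}$ is smooth and bounded between positive constants there. For the matching lower bound one works with the pre-limit: by standard bulk regularity for the non-rescaled ensemble, $\bfR_n(\zeta)=n\Delta Q(\zeta)+O(1)$ uniformly on compact subsets of $\Int S\setminus\{0\}$, together with exponential off-diagonal decay of $\bfK_n$ at bulk regular points. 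Transferring this through the rescaling $K_n(z,z)=r_n^2\bfR_n(r_n z)$ with $r_n=n^{-1/2k}$ and using the homogeneity $\Delta Q_0(r_n z)=r_n^{2k-2}\Delta Q_0(z)$, one obtains the matching lower bound for $|z|$ in an annulus of the form $M\le |z|\le M'$; letting $M'\to\infty$ and applying normal families to the entire function $L_n$ produces the exponential remainder. The hard step is to import the Gaussian off-diagonal decay of the non-rescaled kernel uniformly in $n$, which is what converts the $O(1)$ correction in the bulk into the exponential error term.

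\textbf{Non-triviality.} This is now a corollary: since $\Delta Q_0(z)\to \infty$ (as $Q_0$ is a non-negative, non-trivial homogeneous polynomial of degree $2k$), the asymptotic \eqref{abc} forces $R\not\equiv 0$, and \textit{a fortiori} $L\not\equiv 0$.

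\textbf{Existence and convergence of the point field.} By Theorem \ref{mt1}, $L$ is a positive matrix satisfying the mass-one inequality. Rewriting the latter in terms of $K$ yields
\begin{equation*}
\int_\C |K(z,w)|^2\,dA(w) \le K(z,z),
\end{equation*}
so the integral operator $\mathbf{K}$ on $L^2(\C,dA)$ with kernel $K$ is self-adjoint, locally trace class, and its spectrum lies in $[0,1]$. The Macchi--Soshnikov theorem then produces a unique determinantal point process $\{z_j\}_1^\infty$ with correlation functions $R_p(z_1,\ldots,z_p)=\det(K(z_i,z_j))_{i,j=1}^p$. Convergence $\{z_j\}_1^n\to\{z_j\}_1^\infty$ as point processes is then standard: the intensities $R_{n,p}\to R_p$ locally uniformly (because $L_n\to L$ locally uniformly along the chosen subsequence and determinants depend continuously on their entries), and the bound $R_{n,p}\le R_n(z_1)\cdots R_n(z_p)$ from Hadamard's inequality allows a dominated-convergence argument for the moment functionals of test configurations, which determines the process uniquely.
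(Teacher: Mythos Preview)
Your arguments for \eqref{def}, for non-triviality, and for the existence and convergence of the point field are essentially correct and match the paper's approach (the paper phrases the last part via Lenard's theory in \cite{S}, which is the same circle of ideas as Macchi--Soshnikov).

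The genuine gap is in \eqref{abc}. Your two-sided estimate strategy does not produce the exponential remainder as written. On the upper-bound side, the assertion that $L_0(z,z)e^{-V_0(z)}=\Lap Q_0(z)(1+O(e^{-\alpha|z|^{2k}}))$ is exactly the kind of statement that needs proof; calling it a ``standard Ginibre-type estimate'' because the weight $|z|^{2c}$ is smooth away from $0$ does not supply one, since the usual Bergman-kernel expansions for smooth weights give polynomial, not exponential, remainders. On the lower-bound side, the bulk estimate $\bfR_n(\zeta)=n\Lap Q(\zeta)+O(1)$ holds on \emph{fixed} compacts of $\Int S\setminus\{0\}$, but after rescaling you are evaluating at $\zeta=r_n z\to 0$, which leaves every such compact; so the estimate is not available at the points where you need it. And the sentence ``letting $M'\to\infty$ and applying normal families to $L_n$ produces the exponential remainder'' is not an argument: normal families yield subsequential limits, not quantitative decay rates.

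What is actually required---and what the paper imports from the companion paper \cite{AS2}---is a \emph{uniform} pre-limit estimate of the form $|R_n(z)-\Lap Q_0(z)|\le C|z|^{2k-2}e^{-\alpha|z|^{2k}}$ valid for $M\le|z|$ up to a scale growing with $n$. This is obtained there by approximate Bergman projection techniques (of the flavor you see in Section~\ref{asyl} of the present paper), not by combining a fixed-compact bulk estimate with rescaling. Your sketch correctly identifies this as ``the hard step,'' but does not carry it out; the surrounding scaffolding cannot substitute for it.
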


Here ``convergence in the sense of point processes'' means convergence of all $k$-point functions, cf. \cite[Sect. 1.5]{AKMW} and the references therein.

Note the meaning of the asymptotic in \eqref{abc}: the first intensity  quickly approaches the classical equilibrium density $\Lap Q$ as
one moves away from the singular point. See Figure 1.

\begin{ex} Consider the model Mittag-Leffler ensemble, $V_n(\zeta)=|\zeta|^{2\lambda}+\tfrac {2c} n\log\tfrac 1 {|\zeta|}$.

It is interesting to compare the asymptotic in \eqref{abc} with expansion
formulas for Mittag-Leffler functions in \cite{GKMR}.
To this end, we note that if $\lambda>1/2$ and if $|z|^2>\eps>0$ we have by \cite[eq. (4.7.2)]{GKMR}
\begin{equation}\label{mop}R(z)-\Lap Q_0(z)=\lambda^2|z|^{2c}e^{-|z|^{2\lambda}}\cdot\frac {1}{2\pi i}\int_{\gamma(\eps,\delta)}
\frac {e^{\zeta^\lambda}\zeta^{\lambda-1-c}}{\zeta-|z|^2}\, d\zeta,\end{equation}
where $\delta$ is some number in the interval $\pi/2\lambda<\delta<\pi/\lambda$ and $\gamma(\eps,\delta)$ is the contour
consisting of the two rays $S_{\pm\delta,\eps}=\{\arg\zeta=\pm\delta,\,|\zeta|\ge\eps\}$ and the circular arc
$C_{\delta,\eps}=\{|\zeta|=\eps,\, |\arg\zeta|\le\delta\}$.

The right hand side in \eqref{mop} has the asymptotic expansion, as $z\to\infty$:
$$-\lambda|z|^{2c}e^{-|z|^{2\lambda}}(\tfrac 1 {\Gamma(c/\lambda)}|z|^{-2}
+\tfrac 1 {\Gamma((c-1)/\lambda)}|z|^{-4}+\cdots).$$ (Cf. \cite[eq. (4.7.4)]{GKMR}.)
By contrast, the asymptotic in \cite[Theorem 4]{AS2} holds for all $\lambda>0$ (and for more general potentials) but gives less precise
information when $\lambda>1/2$.

\begin{figure}[ht]\label{m_graph}
\begin{center}

\includegraphics[width=.3\textwidth]{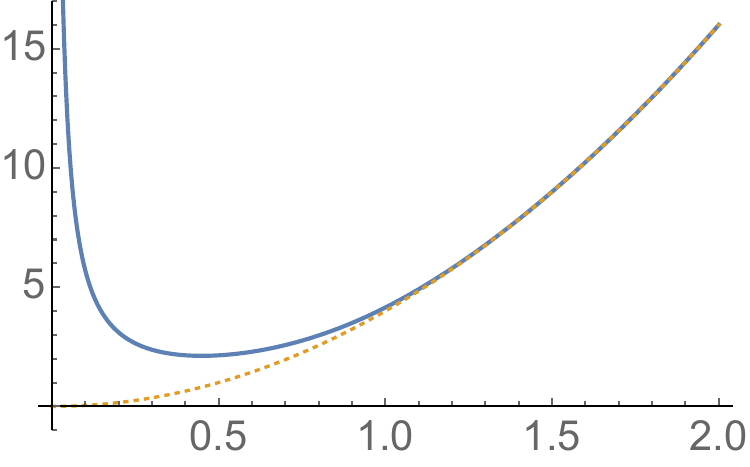}
\hspace{.03\textwidth}
\includegraphics[width=.3\textwidth]{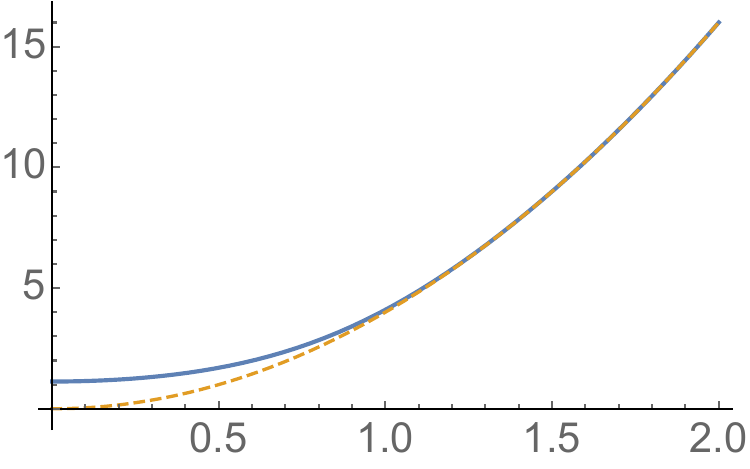}
\hspace{.03\textwidth}
\includegraphics[width=.3\textwidth]{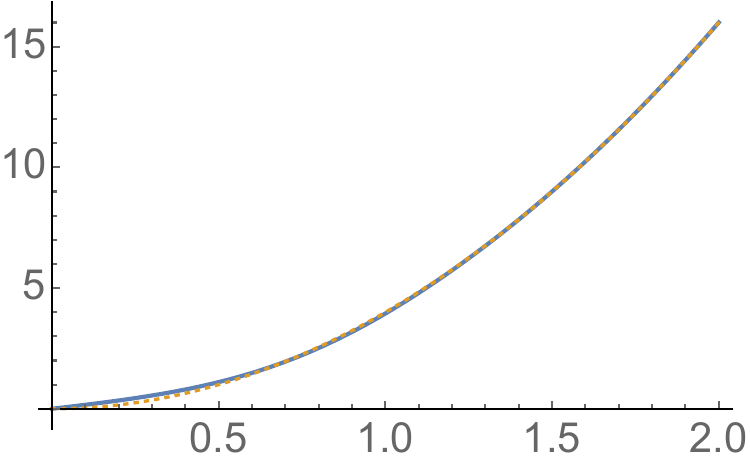}
\end{center}
\caption{The graph of $R$ restricted to the positive real axis for combined singularities of type $(c,2)$, with $c=-0.5$, $c=0$, and $c=0.5$. For comparison, the graph of $\Lap Q_0$ is drawn with an orange dashed line.}
\end{figure}

If $c$ is negative, the inserted charge is attractive and
$R=+\infty$ at the location of the charge, while if $c$ is positive, the inserted charge is repulsive and $R=0$ there.  (See Figure 1 as well as Section \ref{sec5}).

We note that the special case of Mittag-Leffler ensembles where $\lambda=k$ is an integer and $c=0$ was (except for the name) considered by Chau and Zaboronsky in \cite[Section 3]{CZ}.
The limit kernel is there expressed in terms of a Kummer function $_1F_1$ rather than the Mittag-Leffler function $E_{1/k,1/k}$. The potential $Q=|\zeta|^{2\lambda}$ also appears in \cite{AB}
under the name ``Freud potential''.
\end{ex}

\smallskip

To describe a limiting point field, we must determine the corresponding limiting kernel $L$ or, what amounts to the same thing,
the correlation kernel $K$. Our next result, \textit{Ward's equation}, gives some general information about these kernels.
To formulate this result we define (at points where $R(z)\ne 0$) the \textit{Berezin kernel}
$$B(z,w)=\frac {|K(z,w)|^2}{R(z)}$$
as well as the \textit{Cauchy transform}
\begin{equation*}C(z)=\int_\C \frac {B(z,w)}{z-w}\, dA(w).\end{equation*}
The Berezin kernel $B(z,w)$ expresses the intensity of repulsion from a particle located at $z$. More precisely, if $K$ is a correlation kernel of a point process $\Phi$ and $R$ is its $1$-point function, then $B(z,w) = R(w) - \tilde{R}_{z}(w)$, where $\tilde{R}_{z}$ is the $1$-point function for the conditional point process $\Phi_{z}$ given that a point of $\Phi$ is located at $z$. See \cite[Section 7.6]{AHM2} for more  detailed discussion on the Berezin kernel.

We now have the ingredients to formulate Ward's equation.

\begin{mth} \label{m1.5} Keep the assumptions from Subsection \ref{bset} except that the origin is not assumed to necessarily be in the bulk of $S$.
Then each limiting $1$-point function $R$ is either trivial ($R=0$ identically) or is
strictly positive on $\C^*$ and
$$\dbar C=R-\Lap V_0-\Lap\log R$$
pointwise on $\C^*$ and in the sense of distributions on $\C$.
\end{mth}

Combining with Theorem \ref{m1.2} we see that if $0$ is assumed to be in the bulk, then $R$ in Theorem \ref{m1.5} is non-trivial, that is, $R>0$ at each point of $\C^*$ and Ward's equation
is satisfied.

We stress that $C$ is uniquely determined by $R$, so Ward's equation gives a feedback relation for the sole unknown function $R$. Also it should be noted that Ward's equation itself typically has many solutions. To guarantee uniqueness we need to supply suitable apriori conditions on the density $R$, which depend on the nature of the zooming-point. For example, the same Ward equation holds whether we zoom on a bulk point or a boundary point, and yet the densities look very different.

Our next result concerns the case where
the dominant part $Q_0$ in the decomposition \eqref{candec}
satisfies $Q_0(\zeta)=Q_0(|\zeta|)$. We can then find $\tau_0$ and $\lambda$ such that $Q_0(z)=\tau_0|z|^{2\lambda}$.

In this case it is natural
 to expect that a limiting kernel $L$ be \textit{symmetric} in the sense that
\begin{equation*}L(z,w)=E(z\bar{w})\end{equation*}
for some entire function $E$.

\begin{mth} \label{mt2}
If $Q_0(z)=\tau_0|z|^{2\lambda}$ then each
symmetric limiting holomorphic kernel $L$ equals to $L_0$, where $L_0$ is given by \eqref{lnoll}.
\end{mth}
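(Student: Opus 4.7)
The plan is to use rotational symmetry to reduce Ward's equation (Theorem \ref{m1.5}) to an algebraic recursion for the Taylor coefficients of $E$, and then to combine this recursion with the distributional content of Ward at the origin and the asymptotic in Theorem \ref{m1.2} to force $L = L_0$.

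\medskip

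Since $L(z,w) = E(z\bar w)$, expand $L(z,w) = \sum_{j \ge 0} c_j z^j \bar w^j$. Hermitian symmetry gives $c_j \in \R$, positive-definiteness gives $c_j \ge 0$, and radiality of $d\mu_0$ makes $\{z^j\}$ orthogonal in $L^2_a(\mu_0)$ with $a_j := 1/\|z^j\|_{L^2(\mu_0)}^2$, so $L_0(z,w) = \sum a_j z^j \bar w^j$ and the inequality $L \le L_0$ reduces to $0 \le c_j \le a_j$. Rotation-invariance of the Berezin kernel forces $C(e^{i\theta}z) = e^{-i\theta} C(z)$, so $F(|z|^2) := zC(z)$ is a well-defined radial function with $F(0) = 0$. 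Writing $|L(z,w)|^2 = E(z\bar w)E(\bar z w)$ as a double power series and computing the Cauchy integrals $\int \bar w^j w^m/(z-w)\,d\mu_0(w)$ separately on $\{|w|<|z|\}$ and $\{|w|>|z|\}$ in polar coordinates yields, with $t=|z|^2$,
\[ F(t) = \int_0^t R(\sqrt u)\,du - 2\Sigma(t), \qquad \Sigma(t) := \frac{\sum_{m \ge 0} c_m S_m\, t^m}{E(t)}, \qquad S_m := \sum_{j<m} \frac{c_j}{2 a_j}. \]

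\medskip

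On the other hand, $\dbar C = F'(|z|^2)$ on $\C^*$, and a direct computation using $\Lap V_0 = \tau_0 k^2 t^{k-1}$ and $\Lap \log R = t(\log E)''(t) + (\log E)'(t) - \tau_0 k^2 t^{k-1}$ reduces Ward's equation to the scalar ODE $2\Sigma'(t) = \frac{d}{dt}\bigl[t(\log E)'(t)\bigr]$. Denoting by $j_0 \ge 0$ the order of vanishing of $E$ at $0$, one has $t(\log E)'(t) \to j_0$ as $t \to 0^+$ and $\Sigma(0) = 0$, so integration gives $2\Sigma(t) E(t) = tE'(t) - j_0 E(t)$, whence matching Taylor coefficients yields the recursion
\[ c_m(2 S_m - m + j_0) = 0, \qquad m \ge 0. \]
To force $j_0 = 0$, I examine Ward distributionally at $0$: with $R(z) \sim c_{j_0}|z|^{2(j_0+c)}$ as $z \to 0$ and $\Lap\log|z| = \tfrac12\delta_0$, one has $\Lap\log R = (j_0+c)\delta_0 + (\text{non-Dirac})$ and $\Lap V_0 = \Lap Q_0 - c\delta_0$, so the RHS of Ward carries Dirac mass $-j_0\delta_0$; but by Leibniz $\dbar C = \dbar[F(|z|^2)/z] = F'(|z|^2) + F(0)\delta_0 = F'(|z|^2)$ carries none, forcing $j_0 = 0$.

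\medskip

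With $j_0 = 0$ the recursion reads $c_m(2S_m - m) = 0$, and induction finishes the proof: if $c_j = a_j$ for $j < m$ but $c_m < a_m$, then the constraint at $m+1$ is $c_{m+1}(c_m/a_m - 1) = 0$ with $c_m/a_m < 1$, forcing $c_{m+1} = 0$, and iterating $c_j = 0$ for all $j > m$. But then $E$ is a polynomial, so $R(z) = E(|z|^2)|z|^{2c}e^{-\tau_0|z|^{2k}}$ decays super-exponentially as $|z|\to\infty$, contradicting $R(z) \sim \Lap Q_0(z) = \tau_0 k^2 |z|^{2k-2}$ from Theorem \ref{m1.2}. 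Hence $c_j = a_j$ for every $j$, and $L = L_0$.

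\medskip

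The main technical obstacle is the explicit computation of $F(t)$: interchanging the double sum with the integral over $w$, handling the Cauchy integrals separately on $|w|\lessgtr|z|$, and reorganizing the resulting double sum into the closed form involving the single scalar function $\Sigma(t)$. Once that is done, the reduction of Ward's PDE to a scalar ODE is immediate, the distributional step is a short application of the fundamental solution of $\Lap$, and the recursion analysis is elementary.
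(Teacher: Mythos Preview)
Your proof is correct and follows essentially the same route as the paper's Section \ref{ud}: explicit computation of $C(z)$ in the symmetric case, reduction of Ward's equation to the coefficient recursion $c_m\bigl(\sum_{j<m} c_j/a_j - m\bigr) = 0$, and induction using $c_j \le a_j$ together with the asymptotic from Theorem \ref{m1.2}. The only organizational difference is in handling the integration constant: the paper applies Weyl's lemma on $\C$ to obtain an auxiliary entire function $G$ and shows $G\equiv 0$ (which yields the recursion without the $j_0$ term and forces $c_0\ne 0$ as the first induction step), whereas you integrate the scalar ODE and eliminate the boundary term $j_0$ by a direct Dirac-mass comparison---both devices encode the same distributional content of Theorem \ref{m1.5} at the origin.
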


From a naive point of view it might seem obvious that each limiting kernel in Theorem \ref{mt2} should be symmetric.
The question is however quite subtle; it is connected to the vanishing of a certain entire function $G(z)$, as explained
in Section \ref{ud}. However, if the potential is algebraic (and the droplet is connected), then we are in fact able to show that each limiting kernel is symmetric. Hence Theorem \ref{mt22} will follow
as a consequence of Theorem \ref{mt2}.

\subsection{Yet further results and plan of the paper} In Section \ref{gener}, we use estimates
from the companion paper \cite{AS2} to prove
Theorem \ref{mt1}.

In Section \ref{zw} we prove Theorem \ref{m1.2}. We then introduce the distributional Ward equation and prove
Theorem \ref{m1.5}.

In Section \ref{ud} we prove Theorem \ref{mt2} and analyze Ward's equation in
the radially symmetric case, thus extending the analysis from the papers \cite{AK,AS}.

In Section \ref{ginen} we complete the analysis of the case when the dominant part $Q_0$ is radially symmetric by proving
apriori symmetry under the hypotheses in Theorem \ref{mt22}.

In Section \ref{sec5} we study the effect of inserting a point charge, by comparing the $1$-point functions with and without insertion. It turns out that the difference
gives rise to a balayage operation, taking mass from the insertion and distributing it near the boundary, according to a harmonic measure.

In Section \ref{CLT} we prove Theorem \ref{mt7}, modulo an estimate of the 1-point function which is postponed to Section \ref{asyl}.

In Section \ref{asyl} we prove that, for algebraic underlying potentials, the asymptotics $\bfR_n(\zeta)\sim
n\lambda^2|\zeta|^{2\lambda-2}$ as $n\to\infty$ holds to within a very small error,
when $\zeta\in\Int S$ is far enough away from the singular point as well as from the boundary. This result can be regarded as a Tian-Catlin-Zelditch type expansion for algebraic insertion potentials, which we were unable to find in the existing literature. We will
use the result to complete our proof of Theorem \ref{mt7}.

\section{Limiting kernels and their basic properties} \label{gener}
In this section we prove Theorem \ref{mt1} on the structure of limiting kernels.

\smallskip

We start with Theorem \ref{mt1}. The proof will follow easily from the estimates in \cite{AS2}, once
the proper notation has been introduced.

\smallskip

Write $d\mu_n=e^{-nV_n}\, dA$, (see \eqref{pot1}), and let $\calP_n=\Pol(n)$
equipped with the norm of $L^2(\mu_n)$. Writing $\bfk_n$ for the reproducing kernel
of $\calP_n$, we have
$$\bfR_n(\zeta)=\bfk_n(\zeta,\zeta)e^{-nV_n(\zeta)}.$$
(This follows by the classical Dyson determinant formula, which goes back to the early days of random matrix theory, cf. \cite{M}.)

Now rescale: let $r_n=n^{-1/2\lambda}$ and put $z=r_n^{-1}\zeta$, $w=r_n^{-1}\eta$. We define
$$k_n(z,w)=r_n^{2+2c}\bfk_n(\zeta,\eta).$$

It is convenient to introduce a ``rescaled potential'' by
$$\tilde{V}_n(z)=nQ(r_nz)-2c\log|z| -\pert(r_nz).$$
Using this notation, the kernel $K_n$ in \eqref{scala} can be expressed as
$$K_n(z,w)=k_n(z,w)e^{-\tilde{V}_n(z)/2-\tilde{V}_n(w)/2}.$$

Next recall the canonical decomposition \eqref{candec}. We  recognize that $h$ is the real part of the holomorphic polynomial $H$ whose degree is a positive integer $2d \leq 2\lambda$.
(Recall that we have assumed that $Q(0)=H(0)=\pert(0)=0$.)

Now consider the factorization
$$K_n(z,w)=L_n(z,w)\cdot E_n(z,w)\cdot |zw|^c$$ where $E_n$ and $L_n$ are defined by
\begin{equation}\label{Ln}
\begin{cases}E_n(z,w) &= e^{n(H(r_n z)+\bar{H}(r_n w)-Q(r_nz)-Q(r_nw))/2+(\pert(r_nz)+\pert(r_nw))/2},\cr
& \cr
L_n(z,w) &=k_n(z,w)\,e^{-nH(r_n z)/2-n \bar{H}(r_n w)/2}.\cr
\end{cases}
\end{equation}

Note that $L_n$ is Hermitian-entire while, by Taylor's formula,
\begin{equation*}R_n(z)=L_n(z,z)E_n(z,z)|z|^{2c}=L_n(z,z)e^{-V_0(z)(1+o(1))}
\end{equation*}
where $V_0$ is the microscopic potential in \eqref{V0} and
$o(1)\to 0$ as $n\to\infty$, uniformly on compact subsets of $\C$.

\begin{lem} \label{hlem} Let $f_n(z,w)$ be a sequence of Hermitian-entire functions such that $f_n\to 0$ locally
uniformly on $\C^{*2}$. Then $f_n\to 0$ locally uniformly on $\C^2$.
\end{lem}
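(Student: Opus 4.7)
The plan is to exploit the Hermitian–entire structure to reduce the statement to an extension-of-holomorphy assertion for entire functions of two complex variables, and then conclude by the maximum modulus principle on polydisks.

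First, I would rewrite the data holomorphically. Since $f_n$ is Hermitian-entire, the assignment $g_n(z,u):=f_n(z,\bar u)$ defines an entire function on $\C^2$ (holomorphic in $z$ and in $u$ separately, hence jointly by Hartogs, or directly from the power series expansion in $z$ and $\bar w$ inherent in the definition). The hypothesis that $f_n\to 0$ locally uniformly on $(\C^*)^2$ translates verbatim to $g_n\to 0$ locally uniformly on $(\C^*)^2$, because conjugation in $u$ is a homeomorphism preserving compact subsets of $\C^*$.

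Next, I would apply the maximum modulus principle (or, equivalently, Cauchy's formula) on bidisks. Fix a compact set $K\subset\C^2$ and choose $R>0$ large enough that $K\subset\bar\D(0;R)^2$. The distinguished boundary torus
\[
T_R=\{(\zeta,\xi)\in\C^2:|\zeta|=|\xi|=R\}
\]
is a compact subset of $(\C^*)^2$. Since $g_n$ is entire on $\C^2$, the maximum modulus principle on the closed polydisk $\bar\D(0;R)^2$ yields
\[
\sup_K|g_n|\le\sup_{\bar\D(0;R)^2}|g_n|\le\sup_{T_R}|g_n|.
\]
The right-hand side tends to $0$ by hypothesis, so $g_n\to 0$ uniformly on $K$. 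Translating back, $f_n\to 0$ uniformly on any compact subset of $\C^2$, which is the claim.

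There is no real obstacle here; the only thing to verify with any care is that the Hermitian-entire hypothesis really does upgrade separate analyticity in $z$ and $\bar w$ to joint holomorphy of $(z,u)\mapsto f_n(z,\bar u)$, so that the Cauchy/maximum-modulus argument on bidisks is legitimate. Once that is in hand, the extension from $(\C^*)^2$ to $\C^2$ is immediate because the potential trouble locus $\{zw=0\}$ has real codimension $\ge 2$ and is automatically straddled by the torus $T_R$ which sits inside $(\C^*)^2$.
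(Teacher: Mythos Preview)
Your proof is correct and follows essentially the same approach as the paper: the paper invokes Cauchy's integral formula over the torus $|u|=|v|=R$ (written in the Hermitian form with $(u-z)(\bar v-\bar w)$ in the denominator), which is exactly the representation underlying your maximum-modulus estimate on the distinguished boundary. Your extra step of passing to the jointly holomorphic $g_n(z,u)=f_n(z,\bar u)$ just makes explicit what the paper's formula encodes implicitly.
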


\begin{proof}
Use Cauchy's formula
$f_n(z,w)=\frac 1 {(2\pi i)^2}\iint_{|u|=|v|=R}\frac {f_n(u,v)}{(u-z)(\bar{v}-\bar{w})}\, dudv$ for large enough $R$.
\end{proof}

By \cite[Corollary 4.7]{AS2}, each subsequence of $\{L_n\}$ has a further
 subsequence (renamed as $L_n$) such that $L_n\to L$ locally uniformly on $\C^{*2}$ where
$L$ is Hermitian-analytic and locally bounded. It is easy to see that $L$ extends to a Hermitian-entire function. Indeed, for fixed $w\ne 0$ the function $z\mapsto L(z,w)$ has a removable singularity
at $0$, by Riemann's theorem on removable singularities. Likewise for the functions $w\mapsto L(z,\bar{w})$ with $z\ne 0$. Thus $L$ extends
to $\C^2\setminus \{0\}$, and hence to $\C^2$ by the Hartogs' theorem (see \cite[Theorem 2.10.1]{K}). In this way,
we always regard $L$ as a Hermitian-entire function in the sequel.

\begin{lem} We have that $L_n\to L$ locally uniformly on $\C^2$.
\end{lem}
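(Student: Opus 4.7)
The plan is to apply the preceding Lemma~\ref{hlem} directly to the difference $f_n := L_n - L$. First I would observe that $L_n$ is manifestly Hermitian-entire (it is the product $k_n(z,w) \cdot e^{-nH(r_nz)/2} \cdot e^{-n\bar{H}(r_nw)/2}$ of a reproducing kernel with a Hermitian-entire cocycle), and that the limit $L$ has been promoted to a Hermitian-entire function on all of $\mathbb{C}^2$ by the Hartogs extension argument carried out in the paragraph preceding the statement. Hence $f_n$ is itself Hermitian-entire.

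Next I would invoke the hypothesis, delivered by \cite[Corollary 4.7]{AS2} along the passage to a subsequence, that $L_n \to L$ locally uniformly on $\mathbb{C}^{*2}$; equivalently $f_n \to 0$ locally uniformly on $\mathbb{C}^{*2}$. This is precisely the premise required by Lemma~\ref{hlem}, so applying that lemma yields $f_n \to 0$ locally uniformly on the full product $\mathbb{C}^2$, which is the claim.

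I expect no real obstacle beyond verifying these bookkeeping points. The only delicate issue is the behavior near the coordinate axes $\{z=0\}$ and $\{w=0\}$, where the factor $|zw|^c$ in the decomposition $K_n = L_n \cdot E_n \cdot |zw|^c$ is singular when $c<0$; but that singularity has already been absorbed into the weight and does not enter $L_n$ itself, and moreover the Cauchy-integral step inside Lemma~\ref{hlem} requires only the existence of some circle $|u|=|v|=R$ contained in the region of locally uniform convergence, which is automatic. Thus the proof is a one-line application of the preceding lemma once the Hermitian-entire character of the summands has been noted.
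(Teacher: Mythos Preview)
Your proposal is correct and matches the paper's own proof exactly: the paper simply says ``Apply Lemma~\ref{hlem} to the difference $f_n=L_n-L$.'' Your additional remarks verifying that $f_n$ is Hermitian-entire and that the convergence hypothesis on $\C^{*2}$ is in place are sound bookkeeping, but add nothing beyond what the paper's one-line argument already presumes.
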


\begin{proof} Apply Lemma \ref{hlem} to the difference $f_n=L_n-L$.
\end{proof}

The mass-one inequality \eqref{mass} now follows from the following argument. Let us write the canonical decomposition \eqref{candec} as
$$Q=Q_0+\re H+Q_1,$$ where $Q_1=O(|\zeta|^{2\lambda+\epsilon})$ as $\zeta\to 0$ for some $\epsilon>0$.

Define a measure $\mu_{0,n}$ by
$$d\mu_{0,n}(z)=e^{-V_0(z)-nQ_1(r_nz)+\pert(r_nz)}\, dA(z),$$
and note that the kernel $L_n$ has the following reproducing property,
$$\int_\C |L_n(z,w)|^2\, d\mu_{0,n}(w)=L_n(z,z).$$
The mass one inequality \eqref{mass} follows from this and Fatou's lemma, since $\mu_{0,n}\to\mu_0$ in the vague sense of measures where $d\mu_0=e^{-V_0}\, dA$.
(By ``vague convergence'', we mean that $\mu_{0,n}(f)\to \mu_0(f)$ as $n\to\infty$ for each continuous, compactly supported function $f$.)

\smallskip

We now recognize $L_n(z,w)$ as the reproducing kernel for the Hilbert space $\calH_n$ of entire functions defined by
\begin{equation*}
\calH_n = \{ f(z)=p(z)\cdot e^{-nH(r_nz)/2}\ ;\ p\in \mathrm{Pol}(n) \}\end{equation*}
with the norm of $L^2(\mu_{0,n}).$

Since $L_n\to L$ and since $\mu_{0,n}\to\mu_0$ vaguely, it follows by the arguments in \cite[Section 2.3]{AS} (or see \cite[Section 3.7]{AKM}, \cite{AS2}) that
$L$ is the Bergman kernel of some semi-normed Hilbert space $\calH_*$ of entire functions, which sits
contractively inside $L^2_a(\mu_0)$ (i.e., the natural inclusion is a contraction). Hence $L$ is a positive matrix. Moreover, by Aronszajn's theorem on differences of reproducing kernels in \cite[Part I.7]{Ar},
the difference $L_0-L$ is a positive matrix, i.e., we have $0\le L\le L_0$.

\smallskip

We now turn to the kernels $E_n$ from
\eqref{Ln}. By Taylor's formula (since we have assumed $\pert(0)=0$)
$$E_n(z,w)=e^{-Q_0(z)/2-Q_0(w)/2}\cdot e^{i\im (nH(r_nz)-nH(r_nw))/2}\cdot(1+o(1)),$$
where the second factor in the right hand side is a cocycle. Letting $c_n(z,w)$ be the reciprocal cocycle,
we thus have the convergence
$$c_n(z,w)K_n(z,w)=L_n(z,w)e^{-V_0(z)/2-V_0(w)/2}(1+o(1)),$$
where $o(1)\to 0$ locally uniformly as $n\to\infty$.
By this, Theorem \ref{mt1} is proved. q.e.d.

\section{Zero-one law and Ward's equation} \label{zw}
Our main goal with this section is to verify Theorem \ref{m1.5}.
We start however with Theorem \ref{m1.2}.

\begin{proof}[Proof of Theorem \ref{m1.2}] The estimate for $R(z)$ as $z\to\infty$ in \eqref{abc} is proved in \cite[Theorem 4]{AS2}.
To prove \eqref{def},
we start by noting that the $1$-point functions
$$R_n(z)=K_n(z,z)=L_n(z,z)e^{-V_0(z)}(1+o(1))$$
obey a uniform bound of the form $R_n(z)\le M|z|^{2c}$ when $|z|\le 1$. Indeed, this follows from Theorem
\ref{mt1}, via the relation \eqref{mcon}, on noting that the functions $L_n(z,z)$ which converge uniformly
to $L(z,z)$ on the unit disc, remain uniformly bounded there.

The statement about convergence of point fields follows from the upper bound $R(z)\le L_0(z,z)e^{-V_0(z)}$, since this guarantees that $R_n\to R$ in
$L^1_{\text{loc}}$
(cf. \cite[Lemma 1]{AKMW}).
\end{proof}

\subsection{A distributional Ward identity} \label{dward}
We are first going to verify Ward's identity (or loop equation) in the sense of distributions. To this end,
it is convenient to use the integration by parts approach from \cite[Section 4.2]{BBNY2}.

As usual, we write
$$\Ham_n=\sum_{j\ne k}\log\tfrac 1 {\babs{\,\zeta_j-\zeta_k\,}}+n\sum_{j=1}^nV_n(\zeta_j).$$
Let $\psi$ be a test-function. Interpreting the $\d$-derivative in the sense of distributions, we have (for all $j$)
\begin{equation}\label{enn}\bfE_n[{\d} \psi(\zeta_j)]= \bfE_n[\d_j\Ham_n(\zeta_1,\ldots,\zeta_n)\cdot \psi(\zeta_j)],\end{equation}
where $\d_j=\d/\d\zeta_j$ and where $\bfE_n$ is expectation with respect to the Boltzmann-Gibbs law in \eqref{bogi}.

Summing over $j$ in \eqref{enn} gives
\begin{align*}\tfrac 1 n\sum_{j=1}^n\bfE_n[{\d}\psi(\zeta_j)]&=
 \bfE_n\sum_{j=1}^n \psi(\zeta_j)({\d}V_n(\zeta_j)-\tfrac{1}{n}\sum_{k\ne j}\tfrac 1 {\zeta_j-\zeta_k})\end{align*}
We have shown that
\begin{equation}\label{ward:id}\bfE_n[W_n^+[\psi]]=0\end{equation}
where
$$W_n^+[\psi]=\sum \d\psi(\zeta_j)-n\sum[\psi\d V_n](\zeta_j)+ \tfrac 1  2 \sum_{j\ne k}\tfrac {\psi(\zeta_j)-\psi(\zeta_k)}
{\zeta_j-\zeta_k}.$$
This is the distributional form of Ward's identity that we need; the point is that $\psi$ is
compactly supported in $\C$, not just in $\C^*$.

\subsection{Rescaling}

(Cf. \cite{AKM,AS}.)
Fix a test function $\psi \in C_0^{\infty}(\C)$ and observe that
$$W_n^{+}[\psi]=\I_n[\psi]-\II_n[\psi]+\III_n[\psi]$$
where
\begin{align*} &\I_n[\psi] = \tfrac{1}{2}\sum_{j\neq k}^{n} \tfrac{\psi(\zeta_j)-\psi(\zeta_k)}{\zeta_j-\zeta_k}, \quad \II_n[\psi]=n\sum_{j=1}^{n}\d V_{n}(\zeta_j) \cdot \psi(\zeta_j),\quad\mathrm{and}\\
&\III_n[\psi]= \sum_{j=1}^{n} \d \psi(\zeta_j).
\end{align*}

Rescaling via $z =r_n^{-1}\zeta$ and $w=r_n^{-1}\eta$, we define the Berezin kernel and its Cauchy transform by
\begin{align*}
B_n(z,w)&=\tfrac {R_n(z)R_n(w)-R_{n,2}(z,w)}
{R_n(z)},\cr
C_n(z)&=\int_\C\tfrac {B_n(z,w)}{z-w}\, dA(w).
\end{align*}

\begin{lem}\label{lem:ward}
We have
\begin{equation*} \bar{\d} C_n(z) = R_n (z) - \Lap V_0(z) -\Lap \log R_n(z) + o(1)
\end{equation*}
where $o(1)\to 0$ in the sense of distributions on $\C$ and uniformly on each compact subset of $\C^*$ as $n\to\infty$.
\end{lem}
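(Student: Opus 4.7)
The strategy is to derive an unscaled distributional Ward equation from the identity of Section 3.1 and then rescale, carefully controlling the asymptotics of the potential term.

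First I would expand $\bfE_n^\beta[W_n^+[\psi]]=0$ as integrals against $\bfR_n^\beta$ and $\bfR_{n,2}^\beta$ and integrate by parts in the $\III$-term. Using symmetry of $\bfR_{n,2}^\beta$ and the determinantal identity $\bfR_{n,2}^\beta(\zeta,\eta)=\bfR_n^\beta(\zeta)[\bfR_n^\beta(\eta)-\bfB_n^\beta(\zeta,\eta)]$, the double integral rewrites as $\int\psi\,\bfR_n^\beta\bigl(\mathcal{C}[\bfR_n^\beta]-\mathbf{C}_n^\beta\bigr)\,dA$, where $\mathcal{C}[g](\zeta):=\int g(\eta)/(\zeta-\eta)\,dA(\eta)$ and $\mathbf{C}_n^\beta$ denotes the unscaled Cauchy transform of the Berezin kernel. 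Since the identity holds for every $\psi\in C_c^\infty(\C)$, dividing by $\bfR_n^\beta$ (valid on $\C^*$, where $\bfR_n^\beta>0$) and applying $\bar\partial$, with $\bar\partial\mathcal{C}[g]=g$, produces the unscaled Ward equation
$$\bar\partial \mathbf{C}_n^\beta = \bfR_n^\beta - n\Delta V_n - \tfrac{1}{\beta}\Delta\log \bfR_n^\beta,$$
valid pointwise on $\C^*$ and distributionally on $\C$; the distributional cancellation at the origin comes from the matching delta contributions of $-2c\Delta\log|\zeta|$ in $n\Delta V_n$ and the $|\zeta|^{2c}$ factor in $\bfR_n^\beta$.

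Next I would rescale $\zeta=r_n z$ and multiply through by $r_n^2$. A direct chain-rule computation gives $\mathbf{C}_n^\beta(r_n z)=r_n^{-1} C_n^\beta(z)$, hence $r_n^2(\bar\partial \mathbf{C}_n^\beta)(r_n z)=\bar\partial C_n^\beta(z)$; similarly $r_n^2\bfR_n^\beta(r_n z)=R_n^\beta(z)$, and since $\log \bfR_n^\beta(r_n z)=\log R_n^\beta(z)-2\log r_n$, additive constants drop and $(r_n^2/\beta)(\Delta\log \bfR_n^\beta)(r_n z)=(1/\beta)\Delta\log R_n^\beta(z)$. What remains is the asymptotic $r_n^2 n\Delta V_n(r_n z)=\Delta V_0(z)+o(1)$.

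For this last step I would plug in the canonical decomposition $Q=Q_0+h+O(|\zeta|^{2k+1})$, with $h$ harmonic and $Q_0$ homogeneous of degree $2k$. Since $\Delta h=0$ and $\Delta Q_0(r_n z)=r_n^{2k-2}\Delta Q_0(z)$ by homogeneity, the relation $nr_n^{2k}=1$ yields $r_n^2 n\Delta Q(r_n z)=\Delta Q_0(z)+O(r_n)$ uniformly on compacta. The insertion term contributes $-2cr_n^2(\Delta\log|\cdot|)(r_n z)=-2c\Delta\log|z|$ (the pullback of a Dirac mass scales by $r_n^{-2}$), and the smooth perturbation gives $r_n^2\Delta \pert(r_n z)=O(r_n^2)$. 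Summing the three produces the claim in both senses required. The main obstacle is the rigorous distributional interpretation of the unscaled Ward equation near $0$, specifically the cancellation of the delta contributions of $\Delta\log|\zeta|$ and of the near-zero behavior of $\log\bfR_n^\beta$. The key structural point, however, is that the harmonic polynomial $h$ -- which, upon naive rescaling of $\partial V_n$, would produce terms of order $nr_n^j$ with $j<2k$ that blow up -- is annihilated by $\Delta$. This is precisely why the Ward equation is best phrased in terms of $\Delta\log R_n$ rather than $\partial V_n$, and why the rescaling succeeds despite the awkward scaling of $h$.
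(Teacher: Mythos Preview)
Your proposal is correct and follows essentially the same route as the paper, with one cosmetic difference: you derive the \emph{unscaled} Ward equation first and then rescale, whereas the paper applies the Ward identity directly to a rescaled test function $\psi_n(\zeta)=\psi(\zeta/r_n)$ and obtains the rescaled equation in one step. The computations are termwise identical. Two small remarks: what you call the ``determinantal identity'' $\bfR_{n,2}^\beta=\bfR_n^\beta(\bfR_n^\beta-\bfB_n^\beta)$ is simply the definition of the Berezin kernel and holds for all $\beta$; and for general $\beta$ the near-origin behaviour is $\bfR_n^\beta\sim|\zeta|^{2\beta c}$, which is exactly what makes the $\delta_0$-contributions from $n\Lap V_n$ and $\frac1\beta\Lap\log\bfR_n^\beta$ cancel. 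Your closing observation---that the harmonic part $h$ of the canonical decomposition is annihilated by $\Lap$, which is why the rescaling of $\Lap V_n$ closes despite the bad scaling of $\d V_n$---is a useful structural point that the paper leaves implicit.
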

\begin{proof}
Let $\psi_n(r_nz)=\psi(z)$.
By \eqref{ward:id} we obtain
$$\bfE_n W_n^{+}[\psi_n]= \bfE_n \I_n[\psi_n]-\bfE_n \II_n[\psi_n]+\bfE_n \III_n[\psi_n]=0.$$
By changing variables, we calculate each expectation as follows:
\begin{align*}
\bfE_n \I_n[\psi_n]&= r_n^{-1} \int_\C \psi(z) dA(z)\int_\C \tfrac{R_{n,2}(z,w)}{z-w} dA(w),\\
\bfE_n \II_n[\psi_n]&= n \int_\C \d V_{n}(r_n z) \psi(z) R_{n}(z) dA(z),\\
\bfE_n \III_n[\psi_n]&= r_n^{-1} \int_\C \d\psi(z) R_{n}(z) dA(z)= -r_n^{-1} \int_\C \psi(z) \d R_{n}(z) dA(z).
\end{align*}

From the above, we obtain that
$$ \int_\C \tfrac{R_{n,2}(z,w)}{z-w} dA(w) = n r_n \d V_{n}(r_n z) R_{n}(z) + \d R_{n}(z)$$
in the sense of distributions on $\C$. Since $$R_{n,2}(z,w)=R_{n}(z) (R_{n}(w)-B_{n}(z,w)),$$ we have
$$ \int_\C \tfrac{B_n(z,w)}{z-w} dA(w) = \int_\C \tfrac{R_{n}(w)}{z-w} dA(w) - n r_n \d V_{n}(r_n z) - \d \log R_{n}(z).$$
This equation holds pointwise on $\C^*$ and in the sense of distributions on $\C$.
Differentiating both sides in the sense of distributions with respect to $\bar{z}$, we have
$$\bar{\d} C_n(z) = R_{n}(z) - nr_n^2 \Lap V_{n}(r_n z) - \Lap \log R_{n}(z).$$
It remains to note that, in the sense of distributions,
$$nr_n^2\Lap V_n(r_nz)=c\delta_0(z)+\Lap Q_0(z)+O(r_n)=\Lap V_0(z)+O(r_n),$$
where the $O$-constant
is also uniform on each compact subset of $\C^*$.
\end{proof}

\subsection{Ward's equation} We now observe
that the Berezin kernel $B_n$ can be written as
\begin{equation}\label{jugo}
B_n(z,w) = \tfrac{\babs{K_n(z,w)}^2}{R_n(z)}=\tfrac{\babs{L_n(z,w)\cdot E_n(z,w)}^2}{L_n(z,z)\cdot E_n(z,z)} |w|^{2c}
\end{equation}
(See \eqref{Ln} for the definitions of the functions $L_n,E_n$.)

In order to prove Ward's equation, it is convenient to use a formulation in terms of
a limiting holomorphic kernel $L$. We remind of the basic relation
$$R(z)=L(z,z)e^{-V_0(z)}.$$

\begin{lem}\label{L:zero}
Let $L$ be a limiting holomorphic kernel and $z_0\in\C$. If $L(z_0, z_0)=0$, then $L(z,z)=\babs{z-z_0}^2 \tilde{L}(z,z)$ for some Hermitian-entire function $\tilde{L}$. Moreover, if $L(z,z)$ is not identically zero, then each zero of $L(z,z)$ is isolated.
\end{lem}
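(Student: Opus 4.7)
The plan rests on combining positive definiteness of $L$ (which forces an analogue of the Schwarz inequality) with the Hermitian-entire structure to propagate a single zero on the diagonal into a full rank-one factorization.

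First I would exploit positivity. For any $z\in\C$, the $2\times 2$ matrix $\bigl(L(z_i,z_j)\bigr)_{i,j\in\{1,2\}}$ with $z_1=z$, $z_2=z_0$ is positive semidefinite, so its determinant is non-negative. Assuming $L(z_0,z_0)=0$ this yields $|L(z,z_0)|^2\le L(z,z)\cdot 0=0$, hence $L(z,z_0)\equiv 0$ in $z$, and by Hermitian symmetry $L(z_0,w)\equiv 0$ in $w$ as well.

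Next I would extract the factor $|z-z_0|^2$. Since $L(z,w)$ is entire in $z$ and vanishes identically at $z_0$ for every $w$ (this follows by specializing the conclusion of Step 1 after a translation argument, using the $L\le L_0$ inequality to rule out pathologies, or simply by observing that for each fixed $w$ the function $z\mapsto L(z,w)$ is entire and $L(z_0,w)=0$), we obtain the quotient
\[
M(z,w):=\frac{L(z,w)}{z-z_0},
\]
which is entire in $z$ by the removable singularity theorem, and still entire in $\bar w$. Because $L(z,z_0)\equiv 0$, Step 1 applied with the roles of the variables reversed shows $M(z,z_0)=0$ for $z\neq z_0$, hence everywhere by continuity. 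So we may factor once more and set
\[
\tilde L(z,w):=\frac{L(z,w)}{(z-z_0)(\bar w-\bar z_0)},
\]
which extends to a Hermitian-entire function (Hermitian symmetry is inherited from $L$ because the denominator is itself Hermitian; entireness in $z$ and in $\bar w$ follows from Hartogs' theorem, exactly as in Section \ref{gener}). Restricting to the diagonal gives the desired identity $L(z,z)=|z-z_0|^2\tilde L(z,z)$.

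For the second assertion, suppose $L(z,z)$ is not identically zero but has a non-isolated zero at $z_0$. Pick a sequence $z_n\to z_0$ with $L(z_n,z_n)=0$. By Step 1, $L(\,\cdot\,,z_n)\equiv 0$ for every $n$. Now fix $z\in\C$ and consider $w\mapsto L(z,w)$: it is entire in $\bar w$ and vanishes along the sequence $\{z_n\}$, which has an accumulation point at $z_0$. By the identity principle (applied in the $\bar w$ variable) this function vanishes identically, and since $z$ was arbitrary $L\equiv 0$, contradicting the hypothesis. The main subtlety is just the careful verification that the quotient $\tilde L$ is genuinely Hermitian-entire across the point $(z_0,z_0)$, which is where Hartogs' theorem enters as in the construction of limiting kernels earlier in Section \ref{gener}.
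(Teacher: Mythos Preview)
Your proof is correct and follows essentially the same route as the paper's. The only difference is in how you deduce $L(z_0,\cdot)\equiv 0$: the paper uses the mass-one inequality \eqref{mass} directly (if $L(z_0,z_0)=0$ then $\int|L(z_0,w)|^2\,d\mu_0(w)\le 0$), whereas you use the $2\times 2$ positive-semidefiniteness of $L$ to obtain the Cauchy--Schwarz inequality $|L(z,z_0)|^2\le L(z,z)L(z_0,z_0)=0$. Both are immediate consequences of $L$ being a positive matrix, and the remainder of the argument (factoring out $(z-z_0)(\bar w-\bar z_0)$, then the identity principle for the isolated-zeros claim) is the same. Your aside about ``using the $L\le L_0$ inequality to rule out pathologies'' is unnecessary, and the invocation of Hartogs' theorem is harmless but not really needed here, since the quotients are manifestly separately entire with removable singularities.
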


\begin{proof}
If $L(z_0,z_0)=0$, then
\eqref{mass} gives that $\int \babs{L(z_0,w)}^2 d\mu_0(w) = 0$, so $L(z_0,w)=L(w,z_0)=0$, and $L(z,w)=(z-z_0)(\bar{w}-\bar{z}_0)\tilde L(z,w)$ for some Hermitian-entire $\tilde{L}$.

Now assume that $L(z,z)$ is a nontrivial kernel which has a zero $z_0$ which is not isolated, i.e., that there exist distinct zeros $z_j$ such that $z_j \to z_0$. Then $L(z_j,w)=0$ for all
$w$ and all $j$. Noting that $L(z,w)$ is entire in $z$, we obtain a contradiction.
\end{proof}

\begin{lem}\label{logsub}
$z\mapsto L(z,z)$ is logarithmically subharmonic on $\C$. Moreover, if
$R(z_0)=0$ for some $z_0$, and if we put
$$L(z,z)=\babs{z-z_0}^2 g(z),$$
then
$g$ is logarithmically subharmonic some neighbourhood of $z_0$.
\end{lem}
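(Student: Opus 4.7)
The key observation is that each pre-limit kernel $L_n(z,w)$ is the reproducing kernel of the $n$-dimensional Hilbert space $\calH_n$ of entire functions from \eqref{calH}. Expanding in any orthonormal basis $\{f_{n,j}\}_{j=1}^n$ of $\calH_n$ yields the pointwise identity
\[
L_n(z,z)=\sum_{j=1}^n|f_{n,j}(z)|^2,
\]
a finite sum of log-subharmonic functions (the modulus squared of an entire function being log-subharmonic), hence itself log-subharmonic on $\C$. The first claim then reduces to passing this property through the locally uniform convergence $L_n(z,z)\to L(z,z)$ furnished by Theorem \ref{mt1}. On the open set $\{L(z,z)>0\}$ one has $\log L_n\to\log L$ locally uniformly, so $\log L$ is subharmonic there. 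Off this set, Lemma \ref{L:zero} forces any zeros to be isolated (assuming $L\not\equiv 0$, the opposite case being vacuous), so $\log L(z,z)=-\infty$ at each such zero; upper semicontinuity follows from continuity of $L(z,z)$, and the sub-mean inequality is trivial at $-\infty$. Hence $z\mapsto L(z,z)$ is log-subharmonic on all of $\C$.

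For the moreover statement, suppose $R(z_0)=0$, i.e.\ $L(z_0,z_0)=0$. By Lemma \ref{L:zero}, $L(z,w)=(z-z_0)(\bar w-\bar z_0)\tilde L(z,w)$ for some Hermitian-entire kernel $\tilde L$, so $g(z)=\tilde L(z,z)$ is smooth on $\C$. Moreover $g\ge 0$: off $z_0$ this is immediate from $g(z)=L(z,z)/|z-z_0|^2$ together with the positive-matrix property $L(z,z)\ge 0$, and at $z_0$ it follows by continuity. On a punctured disc about $z_0$ where $L(z,z)$ does not vanish (available by isolation of the zero), the identity
\[
\log g(z)=\log L(z,z)-\log|z-z_0|^2
\]
presents $\log g$ as a subharmonic minus harmonic function, hence subharmonic. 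Since $g$ is continuous, $\log g$ is locally bounded above near $z_0$, so the removable singularity theorem for subharmonic functions extends $\log g$ subharmonically across $z_0$, giving the claim.

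The main potential pitfall is the limit-passage argument: one must handle zeros of $L(z,z)$ carefully, but Lemma \ref{L:zero} is precisely the input that makes this clean, since isolated $-\infty$-values of $\log L$ pose no obstruction to subharmonicity; the rest is routine use of positivity of $L$ as a reproducing kernel together with the removable singularity theorem.
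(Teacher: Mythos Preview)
Your proof is correct. For the first part, you take a different but equally valid route: the paper invokes the fact (established in Theorem~\ref{mt1}) that $L$ itself is the Bergman kernel of a Hilbert space $\calH_*$ of entire functions, and then cites general Bergman-space theory (essentially the variational formula $L(z,z)=\sup\{|f(z)|^2:\|f\|\le 1\}$) to conclude log-subharmonicity directly. You instead work with the finite-rank approximants $L_n(z,z)=\sum_j|f_{n,j}(z)|^2$, whose log-subharmonicity is immediate from Cauchy--Schwarz, and pass to the locally uniform limit, handling the isolated zeros via Lemma~\ref{L:zero}. Your approach is more self-contained in that it avoids the abstract $\calH_*$ construction, at the cost of the (routine) limit argument; the paper's approach is shorter once one accepts the cited lemma. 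For the ``moreover'' part the two arguments are essentially the same: both subtract the harmonic term $\log|z-z_0|^2$ on a punctured neighbourhood and then extend across $z_0$. The paper splits explicitly into the cases $g(z_0)>0$ (smooth extension, continuity of $\Lap\log g$) and $g(z_0)=0$ (trivial sub-mean inequality at $-\infty$), whereas you package both cases into one invocation of the removable-singularity theorem for subharmonic functions bounded above---which is fine, since continuity of $g$ ensures the extended value agrees with $\log g(z_0)$.
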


\begin{proof} Recall first that $R$ does not vanish identically by Theorem \ref{m1.2}.

By Theorem \ref{mt1}, we know that $L$ is the Bergman kernel of a contractively embedded, semi-normed
Hilbert space $\calH_*\subset L^2_a(\mu_0)$. (This is because of the inequality $L\le L_0$.)
That $\log L(z,z)$ is subharmonic now follows from general Bergman space theory, as in
\cite[Lemma 3.4]{AS}.

 Next assume that $L(z,z)=\babs{z-z_0}^2 g(z)$ and choose a small neighborhood $D$ of $z_0$ which does not contain any zero of $L(z,z)$ except $z_0$.
Since $\Lap_z\log g(z)= \Lap_z \log L(z,z)-\delta_{z_0}$, $\Lap \log g\geq0$ on $D\setminus\{z_0\}$.

There are now two possibilities:
if $g(z_0) >0$, then we extend $\log g$ analytically to $z_0$ and have $\Lap\log g(z_0)\geq 0$; if $g(z_0)=0$, then $\log g(z_0)=-\infty$ and $\log g(z)$ again satisfies the sub-mean value property in $D$.
\end{proof}

We now set out to find suitable subsequential limits of the Berezin kernels $B_n$, defined in \eqref{jugo}.
For this, we fix
a subsequence $L_{n_\ell}$ which converges locally uniformly to a limiting holomorphic kernel $L$.

The main observation is that if $L(z_0,z_0)>0$, then the convergence
\begin{equation*}B_{n_\ell}(z,w)\cdot |w|^{-2c}\to B(z,w)\cdot |w|^{-2c},\qquad (\ell\to\infty)\end{equation*}
is uniform for all $(z,w)\in D\times K$ where $D$ is some neighbourhood of $z_0$ and $K$ is a given compact subset of $\C$. To see this, it suffices to
note that
\begin{equation*}\begin{cases} B_{n}(z,w)&=\tfrac {|L_{n}(z,w)|^2}{L_{n}(z,z)}|w|^{2c}e^{-Q_0(w)+O(r_n)}\cr
& \cr
B(z,w)&=\tfrac {|L(z,w)|^2}{L(z,z)}|w|^{2c}e^{-Q_0(w)}\cr
\end{cases},
\end{equation*}
and that $L(z,z)=\lim L_{n_\ell}(z,z)\ge \const >0$ in a neighbourhood of $z_0$.

\smallskip

We need to check that the convergence $B_{n_\ell}\to B$ implies a suitable convergence
$C_{n_\ell}\to C$ on the level of Cauchy transforms. For this purpose, we formulate the next lemma.

\begin{lem}\label{C:bound}
Suppose that $R(z)=L(z,z)e^{-V_0(z)}$ does not vanish identically. If $\calZ$ is the set of isolated zeros of $L(z,z)$, then $C_{n_\ell}\to C$ locally uniformly on $\C\setminus (\calZ\cup\{0\})$ as $\ell\to\infty$. Moreover,
the function $z\mapsto zC(z)$ is bounded on $\calV\setminus\calZ$ for each compact subset $\calV\subset \C$.
\end{lem}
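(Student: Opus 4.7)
The plan rests on combining the explicit form of the rescaled Berezin kernel derived above,
$$B_n(z,w) = \frac{|L_n(z,w)|^2}{L_n(z,z)}\, E_n(w,w)\, |w|^{2c},$$
with the mass-one identity $\int_\C B_n(z,w)\,dA(w) = 1$ coming from the reproducing property $\int|K_n(z,\cdot)|^2\,dA = R_n$. Since $E_n(w,w)$ converges locally uniformly to $e^{-Q_0(w)}$ and $L_{n_\ell} \to L$ locally uniformly with $L(z,z) > 0$ off $\calZ$, one obtains a uniform lower bound $L_{n_\ell}(z,z) \ge c_0 > 0$ on any compact $K \subset \C \setminus (\calZ \cup \{0\})$, and hence $B_{n_\ell} \to B$ uniformly on compact subsets of $K \times (\C \setminus \{0\})$.

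For the first assertion, I will split the Cauchy integral defining $C_{n_\ell}(z)$ for $z \in K$ into four regions: a disk $|z-w|<\delta$ around the diagonal, a disk $|w|<\eps$ around the origin, a tail $|w|>R$, and the intermediate annular region. The tail is bounded by $2/R$ uniformly in $n$ and $z \in K$, since $|z-w| \ge |w|/2$ for $|w| \ge 2\max_K|z|$ and $\int B_n\,dA = 1$. The near-origin contribution is $O(\eps^{2c+2})$, using local boundedness of $L_n(z,\cdot)$ near $w=0$, separation of $K$ from $0$, and the integrability of $|w|^{2c}$ (valid since $c>-1$). The near-diagonal piece is $O(\delta)$ since $B_n$ is uniformly bounded on a compact neighbourhood of the diagonal in $K \times K$ and $\int_{|z-w|<\delta}|z-w|^{-1}\,dA(w) = 2\delta$. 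The same three estimates transfer to $B$ via Fatou. On the remaining intermediate region $B_{n_\ell} \to B$ uniformly, so that integral converges. A standard $\eps$-$\delta$ argument, with $\delta, \eps$ small and $R$ large fixed in advance, then yields $C_{n_\ell} \to C$ uniformly on $K$.

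For the boundedness of $zC(z)$, I will use the Cauchy-Schwarz estimate $|L(z,w)|^2 \le L(z,z) L(w,w)$ (available because $L$ is a positive matrix by Theorem \ref{mt1}) to deduce the pointwise majorant $B(z,w) \le R(w)$, combined with $R(w) = O(|w|^{2c})$ near $0$ from Theorem \ref{m1.2}. I split at $|w| = 2|z|$: on $|w| > 2|z|$, the factor $|z|/|z-w| \le 1$, so this contribution to $|zC(z)|$ is $\le \int B(z,w)\,dA(w) \le 1$. On $|w| \le 2|z|$, when $|z|$ is small, the rescaling $w = |z|u$ turns the remaining integral into
$$C|z|^{2c+2}\int_{|u|\le 2}\frac{|u|^{2c}}{|u - e^{i\arg z}|}\,dA(u),$$
whose second factor is a finite constant independent of $z$ (the singularities at $u=0$ and $u = e^{i\arg z}$ are integrable); for $z$ bounded away from $0$ inside $\calV$, both $R(w)$ and $|z-w|^{-1}$ are locally integrable with uniformly bounded $L^1$-norms on $\{|w|\le 2\max_\calV|z|\}$.

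The main technical subtlety is the tail of the Cauchy integral: the pointwise bound $B \le R$ is not integrable at infinity (since $R(w) \to \Lap Q_0(w)$ grows polynomially), so a dominated-convergence approach fails there. This is sidestepped by relying only on the exact mass-one identity together with the geometric lower bound $|z-w| \gtrsim |w|$ for $|w|$ large, which is enough to force the tail contribution to be uniformly small in $n$ and $\ell$.
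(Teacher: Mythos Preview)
Your argument is correct and rests on the same three ingredients the paper uses: the exact mass-one identity $\int B_n(z,\cdot)\,dA=1$ to control the tail, the pointwise majorant $B_n(z,w)\le R_n(w)=O(|w|^{2c})$ (and likewise $B\le R$) to control the contribution near $w=0$, and the locally uniform convergence $L_{n_\ell}\to L$ together with $L(z,z)>0$ off $\calZ$ to handle the intermediate region.

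The organization differs in two minor respects. For the convergence $C_{n_\ell}\to C$, the paper uses a three-region split governed by a single parameter $\epsilon$ (namely $|w|<\epsilon/2$, $\{|w|>\epsilon/2,\,|z-w|<1/\epsilon\}$, $|z-w|>1/\epsilon$) and absorbs the diagonal into the middle piece by using that $|B_{n_\ell}-B|$ is \emph{uniformly small} there together with the local integrability of $|z-w|^{-1}$; you instead carve out a separate disk $|z-w|<\delta$ and bound $B_{n_\ell}$ and $B$ there individually. Both are fine.

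For the second assertion, there is a genuine difference in strength. The paper proves the uniform bound $|C_{n_\ell}(z)|\le M|z|^{-1}$ for all $\ell$ and all $z\in\calV\setminus(\calZ\cup\{0\})$, and then obtains the bound on $zC(z)$ by passing to the limit. You work directly with $C$ via the mass-one inequality and your clean rescaling $w=|z|u$. This suffices for the lemma as stated, but note that the \emph{uniform} estimate on $C_{n_\ell}$ is invoked in the very next lemma, where one needs $\int_{D(0,\epsilon)}|C_{n_\ell}-C|\,dA\to 0$ as $\epsilon\to 0$, uniformly in $\ell$, in order to pass to the distributional limit on all of $\C$. Your argument as written does not supply this; however, your same splitting (tail $|w|>2|z|$ controlled by mass-one, near piece $|w|\le 2|z|$ controlled by $B_{n_\ell}\le R_{n_\ell}\le M|w|^{2c}$ and the rescaling) applies verbatim to $C_{n_\ell}$ and gives the uniform bound with no extra work.
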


\begin{proof} Fix a small number $\epsilon>0$.
We define a compact subset $K_\epsilon$ of $\C^2$ by
$$K_\epsilon=\{(z,w);\, |z|\le 1/\epsilon,\, |w|\le 2/\epsilon,\, \dist(z,\calZ)\ge \epsilon\}.$$

The remarks preceding the lemma show that
we can find $N$ such that if $\ell\geq N$ then
\begin{equation}\label{Bnl}
\babs{B_{n_\ell}(z,w)-B(z,w)}\cdot |w|^{-2c}<\epsilon^{2+2|c|},\quad (z,w)\in K_\epsilon.\end{equation}
Let us recall also that there is a constant $M$ such that, if $z\not\in\calZ$, then
\begin{equation}\label{dih}B_{n}(z,w)\le M|w|^{2c}, \quad B(z,w)\le M|w|^{2c},\quad (|w|\le 1)\end{equation}
by Theorem \ref{m1.2}.
(This is because $B_n(z,w)\le R_n(w)$ and $B(z,w)\le R(w)$.)

For $\ell\geq N$ and $z$ with $\dist(z, \calZ\cup\{0\})\geq\epsilon$ and $\babs{z}\leq 1/\epsilon$, we now obtain
\begin{align*}
&|C_{n_\ell}(z)-C(z)| \leq \int_\C |\tfrac{B_{n_\ell}(z,w)-B(z,w)}{z-w}| dA(w)\\
& =
\int_{\babs{w}<\tfrac{\epsilon}{2},\babs{z-w}<\tfrac{1}{\epsilon}} + \int_{\babs{z-w}<\tfrac{1}{\epsilon}, \babs{w}>\tfrac{\epsilon}{2}} +\int_{\babs{z-w}>\tfrac{1}{\epsilon}}\cdots
\\
&\leq 2 \epsilon^{1+2|c|}\! \int_{\babs{w}<\tfrac{\epsilon}{2}} \babs{w}^{2c}dA(w)+  2^{2|c|}\epsilon^2\int_{\babs{z-w}<\tfrac{1}{\epsilon}}\tfrac{dA(w)}{\babs{z-w}}+2\epsilon\\
&\leq 2^{2|c|-1} (1+c)^{-1} \epsilon^3 + 2^{2|c|+1}\epsilon+2\epsilon.
\end{align*}
Here, we have applied the mass-one inequality \eqref{mass} to estimate the integral over $\{\babs{z-w}>\tfrac{1}{\epsilon} \}$ and the inequality \eqref{Bnl} has been used for the estimate of the integral over $\{|z-w|<1/\epsilon\}$.

To show the local boundedness of the Cauchy transform $C(z)$, we fix a compact subset $\calV$ of $\C$
and a number $\epsilon=\epsilon_{\calV}$ such that $\calV\subset\{|z|<1/\epsilon\}$. Let us write $c'=-\min\{c,0\}$.

{Let $\delta$ be an arbitrary small number.} Using the estimate \eqref{dih}, we see that for $z\in
\calV\setminus(\calZ\cup\{0\})$ with $|z|\ge \delta$,{
\begin{align*}
\babs{C_{n_\ell}(z)}
&\leq ( \int_{\babs{z-w}<\delta/2} + \int_{\babs{z-w}>\delta/2} ) \tfrac{B_{n_\ell}(z,w)}{\babs{z-w}}dA(w)\\
&\leq \left(2/\delta\right)^{2c'}M\int_{\babs{z-w}<\delta/2}\tfrac{dA(w)}{\babs{z-w}} + 2\delta^{-1}\\
&\leq 2^{2c'}M\delta^{1-2c'} + 2\delta^{-1}.
\end{align*}
}
This shows that $|C_{n_\ell}(z)|\le M|z|^{-1}$ for all $z
\in \calV\setminus(\calZ\cup\{0\})$, where the constant $M$ depends only on $\calV$.
\end{proof}

Finally, the following lemma concludes the proof of Theorem \ref{m1.5}.
\begin{lem} If $R$ does not vanish identically, then $R>0$ everywhere on $\C^*$ and
\begin{equation}\label{ward:dist} \bar{\d}C=R-\Lap V_0-\Lap \log R
\end{equation}
in the sense of distributions on $\C$ and pointwise on $\C^*$.
\end{lem}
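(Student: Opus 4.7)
The plan is to pass to distributional limits in the prelimit Ward identity of Lemma \ref{lem:ward} along the subsequence $L_{n_\ell}\to L$, obtain \eqref{ward:dist} on all of $\C$, and then argue by a point-mass comparison that $R$ cannot vanish on $\C^*$. The pointwise version on $\C^*$ will come for free once the zeros have been ruled out, since on that set every term in the identity is smooth.

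On compact subsets of $\C^*\setminus\calZ$ the convergences $R_{n_\ell}\to R$, $C_{n_\ell}\to C$ (by Lemma \ref{C:bound}) and $\Lap\log R_{n_\ell}\to\Lap\log R$ are all locally uniform, giving \eqref{ward:dist} pointwise there. To upgrade to a distributional identity on all of $\C$, I would pass the Cauchy term in $L^1_\mathrm{loc}(\C)$: Lemma \ref{C:bound} provides locally uniform convergence off $\calZ\cup\{0\}$, and the general fact that Cauchy transforms of measures of total mass at most $1$ are uniformly locally bounded in $L^p(\C)$ for every $p<2$ promotes this to $L^1_\mathrm{loc}$-convergence. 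For $\Lap\log R_n$ I would exploit the subharmonicity of $\log L_n(z,z)$, which holds because it is the diagonal of the logarithm of a reproducing kernel. Locally uniform convergence $L_{n_\ell}(z,z)\to L(z,z)$ together with $L\not\equiv 0$ rules out a collapse to $-\infty$, so the standard compactness theorem for subharmonic functions yields $\log L_{n_\ell}(z,z)\to\log L(z,z)$ in $L^1_\mathrm{loc}(\C)$. Adding the smooth corrections from $\log E_n(z,z)\to -Q_0(z)$ and the fixed term $2c\log|z|$ in $\log R_n$ gives $\Lap\log R_{n_\ell}\to\Lap\log R$ in $\mathcal{D}'(\C)$, and the prelimit identity then passes to the distributional Ward identity \eqref{ward:dist} on $\C$.

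Strict positivity follows by contradiction. Suppose $R(z_0)=0$ for some $z_0\in\C^*$. Since $L$ is a positive matrix, Cauchy--Schwarz forces $L(z_0,\cdot)\equiv 0$; iterating the factorization of Lemma \ref{L:zero} via the same Cauchy--Schwarz argument yields an integer $m\ge 1$ and a Hermitian-entire $\hat L$ with $\hat L(z_0,z_0)>0$ and
\[
L(z,w)=(z-z_0)^m(\bar w-\bar z_0)^m\hat L(z,w).
\]
Then $\log L(z,z)=2m\log|z-z_0|+\log\hat L(z,z)$ with $\log\hat L(z,z)$ smooth near $z_0$, so $\Lap\log L(z,z)$ carries a strictly positive Dirac mass at $z_0$; as $\Lap V_0$ is smooth there, the right-hand side of \eqref{ward:dist} acquires a strictly \emph{negative} atomic component at $z_0$. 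But the factorization also gives
\[
B(z,w)=\frac{|w-z_0|^{2m}\,|\hat L(z,w)|^2\,e^{-V_0(w)}}{\hat L(z,z)},
\]
which extends continuously to $z=z_0$; dominated convergence then shows that $C(z)=\int B(z,w)/(z-w)\,dA(w)$ also extends continuously to $z_0$, so $\bar\d C$ carries no atomic mass there -- a contradiction. Hence $R>0$ on $\C^*$, and the pointwise form of \eqref{ward:dist} on $\C^*$ follows immediately. The principal obstacle is the distributional convergence $\Lap\log R_{n_\ell}\to\Lap\log R$: this relies crucially on the subharmonicity of $\log L_n(z,z)$ together with the nontriviality $L\not\equiv 0$, which together preclude the degenerate case where the subharmonic sequence collapses to $-\infty$ on a set of positive measure.
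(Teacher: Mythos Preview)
Your proof is correct and reaches the same conclusion through a route that differs from the paper's in two places.

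For the distributional convergence $\Lap\log R_{n_\ell}\to\Lap\log R$, the paper takes a shortcut: once $C_{n_\ell}\to C$ and $R_{n_\ell}\to R$ in $\mathcal{D}'(\C)$ are established, the prelimit identity $\bar\d C_{n_\ell}=R_{n_\ell}-\Lap V_0-\Lap\log R_{n_\ell}+o(1)$ forces $\Lap\log R_{n_\ell}\to\Lap\log R$ automatically---no subharmonic compactness is needed. Your direct argument via the $L^1_{\mathrm{loc}}$-compactness of locally uniformly bounded-above subharmonic families is valid, but requires the extra input that $\log L_n(z,z)$ is subharmonic (true since $L_n$ is a reproducing kernel) and an identification of the $L^1_{\mathrm{loc}}$-limit with $\log L(z,z)$ (granted by a.e.\ pointwise convergence, since the zero set of $L(z,z)$ is discrete). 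The paper's approach is shorter; yours is more self-contained.

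For the positivity argument, both proofs exploit that $\Lap\log L(z,z)$ carries a point mass at $z_0$ while $C$ stays bounded there, but they organise this differently. The paper factors only once, $L(z,z)=|z-z_0|^2g(z)$, and invokes Lemma~\ref{logsub} to know that $g$ is log-subharmonic near $z_0$; this gives $\Lap\log L(z,z)\ge\delta_{z_0}$ as measures, and boundedness of $C$ (from Lemma~\ref{C:bound}) then forces the Cauchy transform of $\Lap\log L\cdot\1_D$ to be bounded near $z_0$, contradicting the atom. You instead factor out the full order $m$ of vanishing to reach $\hat L(z_0,z_0)>0$---legitimate since $L(z,z)$ is real-analytic with isolated zeros, so $m$ is finite, and since positive-definiteness passes to $\tilde L$ at each division step---and then read off directly that $B(z,w)$ and hence $C(z)$ extend across $z_0$. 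Your version yields the slightly stronger conclusion (continuity rather than mere boundedness of $C$), at the cost of the full factorization and the finiteness of $m$; the paper's version is more economical in that it needs only one factorization together with log-subharmonicity of the quotient. One small imprecision in your write-up: the phrase about ``Cauchy transforms of measures of total mass at most~$1$'' does not literally apply, since the measure $B_n(z,\cdot)$ varies with $z$; what you actually use (and what Lemma~\ref{C:bound} supplies) is the uniform pointwise bound $|C_{n_\ell}(z)|\le M|z|^{-1}$, which already gives the $L^1_{\mathrm{loc}}$ domination.
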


\begin{proof}
By Lemma \ref{lem:ward}, we have
\begin{equation}\label{ward:n} \bar{\d} C_n(z) = R_n(z) - \Lap V_0(z) -\Lap \log R_n(z) + o(1)
\end{equation}
where $o(1)\to 0$ as a distribution on $\C$ and uniformly on each compact subset in $\C^*$. For a compact subset $\calV$ of $\C$, we moreover know that $\calV\cap\calZ$ is a finite set and $zC_{n_\ell}(z)\to zC(z)$ boundedly and locally uniformly on $\calV\setminus(\calZ\cup \{0\})$. {For a test function $\phi$ and a small number $\epsilon$, we take $\mathscr{V}$, $\mathscr{V}_{\epsilon}$ such that
$$\mathscr{V}=\supp \phi,\quad \mathscr{V}_{\epsilon}=\mathscr{V}\setminus \bigcup_{z\in \mathscr{Z}\cup \{0\}}D(z,\epsilon).$$
Here, $\mathscr{V}_{\epsilon}$ is compact and the (normalized) area of $\mathscr{V}\setminus\mathscr{V}_{\epsilon}$ is less than $N \epsilon$ for some $N>0$.
Moreover, since $|C_{n_\ell}(z)|\leq M|z|^{-1}$ for all $z$ near $0$ as in the proof of Lemma \ref{C:bound}, we get
$$\int_{D(0,\epsilon)}|C_{n_\ell}-C|\,dA \leq \int_{D(0,\epsilon)} 2M|z|^{-1}\,dA(z) \leq
4M\epsilon.$$
Thus the integral
\begin{align*}
\int_{\mathscr{V}} (C_{n_\ell}-C)\,\phi \,dA
&= \int_{\mathscr{V}_{\epsilon}} (C_{n_\ell}-C)\,\phi\,dA + \int_{\mathscr{V}\setminus \mathscr{V}_{\epsilon}} (C_{n_\ell}-C)\,\phi \,dA
\end{align*}
can be made small since $C_{n_\ell}\to C$ uniformly on $\mathscr{V_{\epsilon}}$ and $zC_{n_\ell}(z)$ is uniformly bounded on $\calV \setminus (\calZ\cup \{0\})$.}
Thus $C_{n_\ell}\to C$ in the sense of distributions on $\C$, which implies $\bar{\d} C_{n_\ell}\to \bar{\d} C$ in the same sense. In view of \eqref{ward:n} and the convergence $R_{n_\ell}\to R$ such that $R_{n_\ell}(z) \leq M|z|^{2c}$ for $|z| \leq 1$ and $R_{n_\ell}(z)$ are uniformly bounded in each compact subset of $\C^*$, we
conclude that the measures $\Lap\log R_{n_\ell}$ converge to $\Lap \log R$.
Passing to the limit as $\ell\to\infty$ we obtain \eqref{ward:dist} in the sense of distributions.

We now prove that $R>0$ on $\C^*$.
For this, we suppose $R(z_0)=0$ for some $z_0 \ne 0$. Then by Lemma \ref{L:zero},
$$R(z)=\babs{z-z_0}^2 L_1(z,z)e^{-V_0(z)}$$
 for some Hermitian-entire function $L_1$. We fix a small disk $D \subset \C^*$ centered at $z_0$ where $L_1(z,z)$ is logarithmically subharmonic (cf. Lemma \ref{logsub}), and
define two measures $\nu$ and $\nu_1$ by
\begin{align*}
d\nu(z)&=\1_D(z)\cdot \Lap_z \log L(z,z) \, dA(z), \\
d\nu_1(z)&=\1_D(z)\cdot \Lap_z \log L_1(z,z)\,dA(z).
\end{align*}
 By Lemma \ref{logsub}, these measures are both positive.

Now note that $\nu=\nu_1+\delta_{z_0}$ and consider the Cauchy transform
$$C^\nu(z)= \int \frac{d\nu(w)}{z-w}.$$
Since $\dbar C^{\nu}= \Lap Q_0 + \Lap\log R$ on $D$,
the Ward's equation \eqref{ward:dist} gives $\bar{\d} (C+C^{\nu})(z)= R(z)$ on $D$, whence
\begin{equation*}
C(z)+C^{\nu}(z)=v(z)\end{equation*} for some smooth function $v$ on $D$. Since $C$ is bounded in $D\setminus\{z_0\}$ by Lemma \ref{C:bound}, $C^{\nu}(z)$ remains bounded as $z\to z_0$. This implies that $\nu(\{z_0\})=0$, which contradicts that $\nu_1$ is positive. The contradiction shows that $R(z_0)>0$.

The positivity of $R$ implies that $\log R$ is real-analytic on $\C^*$, so the right hand side of \eqref{ward:dist} is smooth there. By Weyl's lemma which asserts that every weak solution of Laplace's equation is smooth, $C(z)$ is also smooth on $\C^*$ and hence Ward's equation holds pointwise on $\C^*$.
\end{proof}

\section{Symmetric solutions in the dominant radial case} \label{ud}

We will now prove our first principal result, Theorem \ref{mt2}. Our proof elaborates on arguments from the papers \cite{AK,AS}. We will initially allow for kernels $L$ which
are not necessarily rotationally symmetric.

\smallskip

We start by noting that if $R$ is a non-trivial limiting $1$-point function in Theorem \ref{m1.5}, then Ward's equation can be written in the form
$$\dbar C(z)=R(z)-\Lap_z \log L(z,z),$$
where $L$ is the corresponding limiting $1$-point function, $R(z)=L(z,z)e^{-V_0(z)}$.

We here consider the dominant radial case
$$Q_0(z) = Q_0(\babs{z}),\qquad V_0(z)=Q_0(z)-2c\log|z|,$$
and we
write
$$d\mu_0=e^{-V_0}\, dA.$$

Since the kernel $L$ is Hermitian-entire, we can represent it in
the form
\begin{equation*}L(z,w)=\sum_{j,k=0}^\infty a_{jk}z^j\bar{w}^k,\quad (a_{jk}=\bar{a}_{kj}).
\end{equation*}
We know also that $L$ is nontrivial, i.e., $L(z,z)>0$ for all $z\ne 0$ by Theorem \ref{m1.5}.

Let $L_0$ be the Bergman kernel for the space $L^2_a(\mu_0)$. Since $Q_0$ is radially symmetric, we have the formula
\begin{equation}\label{l0f}L_0(z,w)= \sum_{j=0}^{\infty} \frac{(z\bar{w})^j}{\norm{z^j}^2_{L^2(\mu_0)}}.\end{equation}
We want to prove that $L=L_0$.

\smallskip

Let us rewrite Ward's equation. Below we fix a complex number $z\ne 0$.

The Cauchy transform $C(z)$ is computed as follows:
\begin{align*}
C(z)&=\frac{1}{L(z,z)} \int_{\C} \frac{\babs{L(z,w)}^2}{z-w}e^{-Q_0(w)+2c\log\babs{w}}dA(w) \\
&=\frac{1}{L(z,z)} \sum_{j,k,l,m} a_{jk}\bar{a}_{lm} z^j \bar{z}^{l}\int_{\C} \frac{\bar{w}^k w^m}{z-w} e^{-Q_0(w)+2c\log\babs{w}}dA(w)\\
&=\frac{1}{L(z,z)}\sum_{j,k,l,m} a_{jk} a_{ml} z^j \bar{z}^l \int_{0}^{\infty}
r^{k+m} e^{-Q_0(r)+2c\log r}  dr \int_{0}^{2\pi} \frac{e^{i(m-k)\theta}}{z/r-e^{i\theta}} \frac{d\theta}{\pi}.
\end{align*}
From the fact that
\begin{align*}
\frac{1}{2\pi} \int_{0}^{2\pi} \frac{e^{i(m-k)\theta}}{z/r-e^{i\theta}}d\theta
= \begin{cases}
-\left(z/r\right)^{m-k-1} & \mathrm{if}\ \babs{z}<r, \ m-k\geq 1,\\
 \quad \left(z/r \right)^{m-k-1} & \mathrm{if}\ \babs{z}>r, \ m-k\leq 0, \\
 \quad \ 0 & \mathrm{otherwise},
\end{cases}
\end{align*}
we obtain that (all sums are over all $j,k,l,m$ unless otherwise is specified)
\begin{align*}
C(z) &=  \frac{2}{L(z,z)} \sum_{m\le k} a_{jk} a_{ml} z^j \bar{z}^l  \int_{0}^{\babs{z}}  r^{k+m} \left( \frac{z}{r} \right)^{m-k-1} e^{-Q_0(r)+2c\log r}dr, \\
&-  \frac{2}{L(z,z)} \sum_{m\ge k+1} a_{jk} a_{ml} z^j \bar{z}^l  \int_{|z|}^{\infty}  r^{k+m} \left( \frac{z}{r} \right)^{m-k-1} e^{-Q_0(r)+2c\log r} dr. \end{align*}
This can be rewritten $C(z)=S_1(z)-S_2(z)$ where
\begin{align*}S_1(z)&=\frac 2 {L(z,z)}\sum_{j,k,l,m}a_{jk}a_{ml}z^{j+m-k-1}\bar{z}^l\int_0^{|z|}
r^{2k+1}e^{-Q_0(r)+2c\log r}\, dr\\
&=\sum_{j,k}a_{jk}z^{j-k-1}\int_0^{|z|^2}t^{k}e^{-Q_0(\sqrt{t})+c\log t}\, dt.
\end{align*}
and
\begin{equation}\label{s2}S_2(z)=\frac 1 {L(z,z)}\sum_{m\ge k+1}a_{jk}a_{ml}z^{j+m-k-1}\bar{z}^l\int_0^{\infty}t^k e^{-Q_0(\sqrt{t})+c\log t}\, dt.\end{equation}

It follows that
\begin{align*}\dbar S_1(z)&=z\sum_{j,k} a_{jk}z^{j-k-1}|z|^{2k}e^{-Q_0(z)+c\log|z|^2}\\
&=L(z,z)e^{-V_0(z)}=R(z),\qquad (z\ne 0).
\end{align*}
Hence Ward's equation, which can be written in the form
$$\dbar S_1(z)-\dbar S_2(z)=R(z)-\Lap_z\log L(z,z)$$
is equivalent to just
$$\dbar(S_2(z)-\d\log L(z,z))=0.$$
Now,
$$\d\log L(z,z)=\frac 1 {L(z,z)}\sum_{m,l}ma_{ml}z^{m-1}\bar{z}^l,$$
so by \eqref{s2}, the distributional Ward's equation is equivalent to that the function
\begin{equation}\label{holo}\frac 1 {L(z,z)} \sum_{m,l}a_{ml}z^{m-1}\bar{z}^l(\sum_{j=0}^\infty\sum_{k=0}^{m-1}a_{jk}z^{j-k}\|z^k\|_{L^2(\mu_0)}^2-m)\end{equation}
has $\dbar$-derivative $0$ in the sense of distributions on $\C$. By Weyl's lemma, this implies that there exists some
entire function $G(z)$ such that
\begin{equation*}\sum_{m,l}a_{ml}z^{m-1}\bar{z}^l(\sum_{j=0}^\infty\sum_{k=0}^{m-1}a_{jk}z^{j-k}\|z^k\|_{L^2(\mu_0)}^2-m)
=G(z)\sum_{m,l}a_{ml}z^{m}\bar{z}^l.\end{equation*}

The Taylor coefficients $g_j$ in $G(z)=\sum g_jz^j$ depend on $a_{jk}$ in a complicated
way. Comparing coefficients of $\bar{z}^l$ in \eqref{holo}, we deduce for each $l\in\Z_+$ the identity
\begin{equation}\label{bod2}\sum_{m=1}^\infty a_{ml} (\sum_{j=0}^{\infty}
\sum_{k=0}^{m-1}a_{jk}z^{j+m-1-k}\|z^k\|_{L^2(\mu_0)}^2-mz^{m-1})
=G(z)\sum_{j=0}^\infty a_{jl}z^j.\end{equation}
Thus, for example, the constant term $g_0=G(0)$ obeys infinitely many relations
$$g_0a_{0l}=a_{1l}(a_{00}\|1\|^2_{L^2(\mu_0)}-1) +
\sum_{m=2}^\infty a_{ml}\,a_{0,m-1}\|z^{m-1}\|^2_{L^2(\mu_0)},\quad l=0,1,2,\ldots.$$

Due to the formidable appearance of these relations, we now abandon the quest
for the most general limiting kernel $L$ and restrict our attention to symmetric ones.

We thus assume that $L(z,w)=\sum a_{jk}z^j\bar{w}^k$ is rotationally symmetric, i.e., we assume that
$a_{jk}=a_j\delta_{jk}$ for some numbers $a_j$.
In this case, the system \eqref{bod2} becomes: for all $m\ge 1$
\begin{equation*}a_{m}z^{m-1}(\sum_{j=0}^{m-1}a_j\|z^j\|_{L^2(\mu_0)}^2-m)=
G(z)a_mz^m.\end{equation*}
By considering the smallest $m\ge 1$ such that $a_m\ne 0$ we conclude that $G\equiv 0$, i.e.,
\begin{equation}\label{bod4}a_{m}(\sum_{j=0}^{m-1}a_j\|z^j\|_{L^2(\mu_0)}^2-m)=0,\qquad m=1,2,\ldots.
\end{equation}

Before proving Theorem 5, recall that $L$ satisfies the mass-one inequality \eqref{mass}. The mass-one inequality can be written as
\begin{equation}{\label{massco}}\sum_{j=0}^\infty a_j^2|z|^{2j} \|z^j\|^2_{L^2(\mu_0)} \leq \sum_{j=0}^{\infty} a_j|z|^{2j}.\end{equation}

\begin{lem} \label{nupp} Let $L$ be a nontrivial rotationally invariant limiting kernel. Then
$$a_{j}= 1/\|z^j\|_{L^2(\mu_0)}^2,\qquad j=0,1,2,\ldots.$$
\end{lem}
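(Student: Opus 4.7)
The plan is as follows. Write $\alpha_j := a_j \|z^j\|_{L^2(\mu_0)}^2$. Rotational symmetry and positivity of the matrix $L$ force $a_j\ge 0$ (the diagonal-in-monomial-basis expansion of $L$ must have non-negative coefficients, as one sees by testing against point measures on the unit circle or by Bochner's theorem). Moreover, by Theorem \ref{mt1} we have $L\le L_0$ in the positive-matrix sense, and $L_0-L=\sum_j(\|z^j\|^{-2}-a_j)z^j\bar w^j$ by the explicit form \eqref{l0f}, so non-negativity of each coefficient gives $a_j\le\|z^j\|^{-2}$. Hence $\alpha_j\in[0,1]$ for every $j\ge 0$.

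Next, I would read off Ward's equation \eqref{bod4} as the alternative, for each $m\ge 1$:
\begin{equation*}
\alpha_m=0,\qquad\text{or}\qquad \sum_{j=0}^{m-1}\alpha_j=m.
\end{equation*}
Because each $\alpha_j\in[0,1]$, the equality in the second branch forces $\alpha_0=\alpha_1=\cdots=\alpha_{m-1}=1$. Thus whenever $\alpha_m\ne 0$, the coefficients below level $m$ are completely pinned down to the desired values.

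It remains to show that $\alpha_m\ne 0$ for infinitely many $m$. Suppose, for contradiction, that $\alpha_m=0$ for all $m$ beyond some threshold $N$. Then $L(z,z)=\sum_{j\le N}a_j|z|^{2j}$ is a polynomial in $|z|^2$, and so
\begin{equation*}
R(z)=L(z,z)\,e^{-V_0(z)}=L(z,z)\,e^{-\tau_0|z|^{2k}}|z|^{2c}
\end{equation*}
decays super-polynomially as $z\to\infty$. This flatly contradicts the asymptotic $R(z)=\Lap Q_0(z)(1+O(e^{-\alpha|z|^{2k}}))=k^2\tau_0|z|^{2k-2}(1+o(1))$ supplied by Theorem \ref{m1.2}. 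Hence $\{m:\alpha_m\ne 0\}$ is unbounded; letting $m$ tend to infinity along this set in the Ward alternative above yields $\alpha_j=1$ for every $j\ge 0$, which is the claim.

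The main obstacle is precisely this last step: Ward's equation \eqref{bod4} alone does not rule out the finite truncations $L_N(z,w)=\sum_{j\le N}\|z^j\|^{-2}z^j\bar w^j$, which are spurious rotationally symmetric solutions. External input is needed to discard them, and the natural source is the macroscopic equilibrium asymptotic of $R$ at infinity from Theorem \ref{m1.2}; everything else is bookkeeping once the alternative above has been set up.
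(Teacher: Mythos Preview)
Your argument is correct and follows the same overall strategy as the paper: Ward's relation \eqref{bod4} together with the coefficient bound $a_j\|z^j\|^2\le 1$ and the non-degeneracy supplied by Theorem \ref{m1.2}. The only noteworthy difference is how the bound $0\le a_j\le \|z^j\|^{-2}$ is obtained. You extract it directly from the positive-matrix ordering $0\le L\le L_0$ of Theorem \ref{mt1}, using that a rotationally invariant positive kernel $\sum c_j(z\bar w)^j$ has non-negative Fourier coefficients $c_j$; this gives all the bounds at once. The paper instead uses only the mass-one inequality \eqref{massco} and peels off the bounds one at a time inside an induction on $m$ (at each step, equality in the lower coefficients reduces \eqref{massco} to $a_m^2\|z^m\|^2\le a_m$ after dividing by $|z|^{2m}$ and letting $z\to 0$). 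Your route is a bit cleaner; the paper's has the minor advantage of using only the mass-one inequality rather than the full comparison $L\le L_0$.
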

\begin{proof}
We use \eqref{bod4}, \eqref{massco}, and induction on $m\ge 0$.

Put $N_m = \inf\{j\,;\, j\geq m,\, a_j\ne 0\}$ and note that, since $L(z,z)\sim \Lap Q_0(z)e^{V_0(z)}$ as $z\to\infty$
(by Theorem \ref{m1.2}), we have $N_m<\infty$ for each $m$.

Assume that $N_0 > 0$. Then $a_j=0$ for all $j\ge 0$ with $j< N_0$. Since $a_{N_0}\ne 0$, $\sum_{j=0}^{N_0-1}a_j\|z^j\|^2_{L^2(\mu_0)}=N_0$ by \eqref{bod4} which gives a contradiction. Thus $N_0 = 0$, and $0<a_0 \leq 1/\|1\|^2_{L^2(\mu_0)}$ by \eqref{massco}.

Suppose next that $N_1>1$. As above, $a_j=0$ when $j\ge 1$ and $j \leq N_1 -1$. Hence
$\sum_{j=0}^{N_1-1}a_j\|z^j\|^2_{L^2(\mu_0)}=a_0\|1\|^2_{L^2(\mu_0)}\leq 1<N_1$, which contradicts \eqref{bod4} with $m=N_1$. Hence $N_1=1$ and $a_1\ne 0$. By \eqref{bod4}, $a_0=1/\|1\|^2_{L^2(\mu_0)}$ and by \eqref{massco}, $0<a_1\leq 1/\|z\|^2_{L^2(\mu_0)}$.

Similarly, for $m\geq 2$ we have $a_{m-1}=1/\|z^{m-1}\|^2_{L^2(\mu_0)}$ and $0<a_m \leq 1/\|z^m\|_{L^2(\mu_0)}$, finishing the induction step.
\end{proof}

A comparison of the formula \eqref{l0f} for the kernel $L_0$ with the above lemma shows that $L=L_0$, finishing our proof of
Theorem \ref{mt2}.  q.e.d.

\section{Universality at radial type singularities} \label{ginen}

We now prove Theorem \ref{mt22}.

Suppose that the potential $V_n$ is of the form
$$V_n(\zeta)=Q(\zeta)-\tfrac {2c}n\log|\zeta|,\qquad Q=Q_r+h$$
where $Q_r$ is radially
symmetric and $Q_r(z)=|z|^{2\lambda} + O(|z|^{2\lambda+\epsilon})$ as $z\to 0$ for some $\epsilon>0$ and where $h$ is a harmonic polynomial with
$h(0)=0$. We assume that the degree of $h$ is at most $d$, where, without loss of generality, $d\ge 2\lambda$.

 We suppose in addition
 that $0\in\Int S$, that $S$ is connected, and that $\d\Pc S$ is an everywhere regular Jordan curve.

Now write $h=\re H$ where $H$ is the holomorphic polynomial
$$H(\zeta)=Q(0)+2\d Q(0)\cdot \zeta+\cdots+\tfrac 2 {d!}\d^dQ(0)\cdot \zeta^d.$$

The crucial property that we shall need from $Q_r$ is that
$$\zeta\d Q_r(\zeta)\in\R$$
for all $\zeta$. This is satisfied since $Q_r$ is radially symmetric.

Let $p_{n,j}$ be the $j$:th orthonormal polynomial with respect to $e^{-nV_n}$ and rescale about $0$ by letting
$$q_{n,j}(z)=r_n^{1+c}p_{n,j}(r_nz),\quad \tilde{V}_n(z)=nQ(r_nz)-2c\log|z|.$$
With $\tilde{H}_n(z)=nH(r_nz)$ we can then define a ``rescaled holomorphic kernel'' by
$$L_n(z,w)=\sum_{j=0}^{n-1}q_{n,j}(z)\bar{q}_{n,j}(w)
e^{-(\tilde{H}_n(z)+\bar{\tilde{H}}_n(w))/2}.$$
By a straightforward extension of normal families argument (see Section \ref{gener}) we obtain easily that each subsequence of $L_n$ has a further subsequence converging locally uniformly to a limiting holomorphic kernel $L$. (The terms in $\tilde{H}_n$ of degree $>2\lambda$ are negligible on compact sets.)

We want to prove the asymptotic rotation invariance
$$\d_\theta L_n(z,z)\to 0$$
as $n\to\infty$ along any subsequence. To prove this we observe that
$$\d_\theta L_n(z,z)=iz\d_z L_n(z,z)+(-i\bar{z})\dbar_zL_n(z,z)=-2\im(z\d_zL_n(z,z)).$$
Write $\tilde{h}_n=\re \tilde{H}_n$. Then
\begin{equation}\label{zdLn}
z\d_zL_n(z,z)=e^{-\tilde{h}_n(z)}\sum_{j=0}^{n-1}(z\d q_{n,j}(z)-zq_{n,j}(z)\d \tilde{h}_n(z))\bar{q}_{n,j}(z).
\end{equation}
Since $\d \tilde{h}_n$ is a holomorphic polynomial of degree at most $d-1$, $zq_{n,j}\d \tilde{h}_n$ is a polynomial of degree at most $j+d$. Hence it can be expressed as a linear combination of the polynomials $q_{n,l}$:
$$zq_{n,j}\d \tilde{h}_n(z)=\sum_{l=0}^{j+d}\langle zq_{n,j}\d \tilde{h}_n,q_{n,l}\rangle \, q_{n,l}(z)$$
where $\langle f,g\rangle=\int f\bar{g}e^{-\tilde{V}_n}\, dA$.

Now set $\tilde{Q}_r=nQ_r(r_nz)=|z|^{2\lambda}+\cdots$. An integration by parts gives
\begin{align*}\langle zq_{n,j}\d \tilde{h}_n,q_{n,l}\rangle &=
-\int zq_{n,j}(z)\bar{q}_{n,l}(z)\d(e^{-\tilde{h}_n(z)})e^{-\tilde{Q}_r+2c\log|z|}\, dA(z)\\
&=\langle q_{n,j},q_{n,l}\rangle+\langle z\d q_{n,j},q_{n,l}\rangle\\
&\quad+
\int zq_{n,j}(z)\bar{q}_{n,l}(z)(-nr_n\d Q_r(r_nz)+\frac c z)e^{-\tilde{V}_n(z)}\, dA(z).
\end{align*}
Let us define
\begin{equation}\label{ajl}\alpha_{j,l}=\int q_{n,j}(z)\bar{q}_{n,l}(z)nr_nz\d Q_r(r_nz)e^{-\tilde{V}_n(z)}\, dA(z).\end{equation}
We then have the identity
$$\langle zq_{n,j}\d \tilde{h}_n,q_{n,l}\rangle=(1+c)\langle q_{n,j},q_{n,l}\rangle+
\langle z\d q_{n,j},q_{n,l}\rangle-\alpha_{j,l}
$$
and from \eqref{zdLn},
\begin{align*}
z\d_z L_n(z,z)  &= e^{-\tilde{h}_n(z)} \sum_{j=0}^{n-1}\bar{q}_{n,j}(z)\Big(\sum_{l=0}^{j+d}\langle z\d q_{n,j} - zq_{n,j}\d \tilde{h}_n, q_{n,l}\rangle q_{n,l} \Big) \\
&=e^{-\tilde{h}_n(z)}(-(1+c)\sum_{j=0}^{n-1}|q_{n,j}(z)|^2+\sum_{j=0}^{n-1}
\sum_{l=0}^{j+d}\alpha_{j,l}q_{n,l}(z)\bar{q}_{n,j}(z)).
\end{align*}
Since $z\d Q_r(z)=\lambda|z|^{2\lambda}+\cdots$ is real-valued, we have $\alpha_{j,l}=\bar{\alpha}_{l,j}$ and hence
$$\alpha_{j,l}q_{n,l}(z)\bar{q}_{n,j}(z)+\alpha_{l,j}q_{n,j}(z)\bar{q}_{n,l}(z)\in \R$$
for all $l,j,z$. Therefore,
\begin{equation}\label{iml}\im(z\d_zL_n(z,z))=e^{-\tilde{h}_n(z)}
\sum_{j=n-d}^{n-1}
\sum_{l=n}^{d+j}
\im(\alpha_{j,l}q_{n,l}(z)\bar{q}_{n,j}(z)).
\end{equation}

In order to estimate the sum in the right hand side of \eqref{iml}, we need a good control of the polynomials $q_{n,j}$ with $|j-n|\le d$.
For this purpose, we will employ recent results on asymptotics for the orthonormal polynomials $p_{n,j}$
with respect to $e^{-nV_n}$, to obtain the following result.

\begin{lem} \label{Vk} Let $K$ be any fixed compact subset of
$\Int\Pc S$. Then for any given integer $\kappa\ge 0$ and any $j=j(n)$ with $|j-n|\le d$ we have
$$\int_K|p_{n,j}|^2e^{-nV_n}\, dA=O(n^{-\kappa-1}),\qquad n\to\infty.$$
\end{lem}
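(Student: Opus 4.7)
The plan is to reduce the estimate to an invocation of the Hedenmalm--Wennman asymptotic expansion for planar orthonormal polynomials from~\cite{W}. Under the standing hypotheses --- $S$ connected, $\d\Pc S$ an everywhere-regular analytic Jordan curve, and $Q$ real-analytic in a neighbourhood of $\d\Pc S$ --- this result produces, for each $\kappa\ge 0$ and each $j = j(n)$ with $|j-n|\le d$, an approximating polynomial $\tilde p_{n,j}\in\Pol(n)$ such that
\begin{equation*}
\norm{p_{n,j}-\tilde p_{n,j}}_{L^2(e^{-nV_n})}=O(n^{-\kappa-1}),
\end{equation*}
and with the additional property that the density $|\tilde p_{n,j}(\zeta)|^2 e^{-nV_n(\zeta)}$ concentrates in a shrinking tubular neighbourhood of $\d\Pc S$ of radial width $O(\sqrt{(\log n)/n})$. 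In particular, on the given compact $K\subset\Int\Pc S$, one has the pointwise bound $|\tilde p_{n,j}(\zeta)|^2 e^{-nV_n(\zeta)}=O(e^{-cn})$ uniformly for $\zeta\in K$ and some $c=c(K)>0$.

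Granting this, the lemma is immediate: by the triangle inequality
\begin{equation*}
\int_K|p_{n,j}|^2 e^{-nV_n}\,dA\le 2\!\int_K|p_{n,j}-\tilde p_{n,j}|^2 e^{-nV_n}\,dA+2\!\int_K|\tilde p_{n,j}|^2 e^{-nV_n}\,dA,
\end{equation*}
where the first integral is controlled by $2\norm{p_{n,j}-\tilde p_{n,j}}_{L^2(e^{-nV_n})}^2=O(n^{-2\kappa-2})$ and the second by $O(e^{-cn}) \cdot \operatorname{area}(K)$; both contributions are $O(n^{-\kappa-1})$.

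The main obstacle will be to verify that the construction of~\cite{W} applies (with at most routine modifications) to the perturbed potential $V_n=Q-(2c/n)\log|\zeta|$ rather than to the unperturbed $Q$. The point is that the log-perturbation is of size $O(1/n)$ and is localized at the interior point $0\in\Int S$, far from $\d\Pc S$, so the boundary-layer WKB analysis that drives the expansion is insensitive to it at leading orders. One route is to rerun the construction of~\cite{W} directly with the weight $e^{-nV_n}$, noting that $|\zeta|^{2c}$ is a smooth non-vanishing weight on a neighbourhood of $\d\Pc S$ and merely shifts some subleading terms in the boundary asymptotic profile. Alternatively, one may first establish the estimate for the orthonormal polynomials with respect to $e^{-nQ}$ and then transfer it to those for $e^{-nV_n}$ via a bounded perturbation argument, using that on any compact subset of $\C\setminus\{0\}$ the two weights differ by a bounded positive factor; the factor $|\zeta|^{2c}$ remains integrable at $0$ since $c>-1$, so it causes no problem even when $0\in K$.
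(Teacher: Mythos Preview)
Your approach is essentially the paper's own: invoke the Hedenmalm--Wennman expansion from \cite{W} (the extension of \cite{HW} to the perturbed weight $e^{-nV_n}$) to approximate $p_{n,j}$ in $L^2(e^{-nV_n})$ by a function concentrated near $\d\Pc S$, then conclude by the triangle inequality.

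One correction worth noting: the approximant supplied by \cite{HW,W} is not a polynomial in $\Pol(n)$ but a \emph{quasipolynomial} of the form $\chi F_{n,j}^{(\kappa)}$, where $\chi$ is a smooth cutoff supported in a thin tubular neighbourhood of $\d\Pc S$ (and vanishing near the logarithmic singularity), and $F_{n,j}^{(\kappa)}$ is built from the exterior conformal map $\phi$, the analytic function $\calQ$ with $\re\calQ=Q$ on $\d\Pc S$, and correction terms $\calB_p$. The paper's statement is that $\int_\C|p_{n,j}-\chi F_{n,j}^{(\kappa)}|^2 e^{-nV_n}\,dA=O(n^{-\kappa-1})$. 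Because $\chi$ vanishes identically on any compact $K\subset\Int\Pc S$, one gets directly
\[
\int_K|p_{n,j}|^2 e^{-nV_n}\,dA=\int_K|p_{n,j}-\chi F_{n,j}^{(\kappa)}|^2 e^{-nV_n}\,dA\le O(n^{-\kappa-1}),
\]
with no need for a separate $O(e^{-cn})$ pointwise bound on the approximant. Your version works too, but calling the approximant a polynomial is inaccurate, and the vanishing of $\chi$ on $K$ is the cleaner mechanism. Your remarks on accommodating the $-(2c/n)\log|\zeta|$ perturbation are in line with the paper, which simply cites the forthcoming \cite{W} for the required modification.
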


\begin{proof} We will use one of the main results from the paper \cite{HW} \footnote{In the
special case when $Q=|\zeta|^2+\re t\zeta$ we can alternatively use the strong asymptotics proved in the papers \cite{LY,WW}}, which implies that if
$V_n$ is any potential of the type indicated above,
 then for any fixed $\kappa\ge 0$ we have
\begin{equation}\label{fjun}\int|p_{n,j}-\chi F_{n,j}^{(\kappa)}|^2e^{-nV_n}\, dA=O(n^{-\kappa-1}),\quad |j-n|\le d.\end{equation}
Here $\chi$ is a smooth function which equals to $1$ in a small tubular neighbourhood of the outer boundary of $S$ while $\chi=0$ in the complement of a slightly larger tubular neighbourhood, which is still small enough that $\chi=0$ near all the logarithmic singularities occurring in the potential.

Following \cite{HW} we write $\calQ$ for (the analytic continuation of) the bounded holomorphic function on $\C\setminus \Pc S$ satisfying $\re\calQ=Q$ on $\d\Pc S$. (We fix $\calQ$ uniquely by requiring that the imaginary part vanishes at infinity.)

The function $F_{n,j}^{(\kappa)}$ is of the form
$$F_{n,j}^{(\kappa)}(\zeta)=n^{1/4}\sqrt{\phi'(\zeta)}\phi(\zeta)^{j} \,e^{n\calQ(\zeta)/2}\sum_{p=0}^\kappa n^{-p}\calB_p(\zeta)$$
where $\phi$ is the analytic continuation of a univalent map $\hat{\C}\setminus \Pc S\to \hat{\C}\setminus \D$ and $\calB_p$ are certain holomorphic functions, bounded on the support of $\chi$.

It is clear that the asymptotic in \eqref{fjun} implies the statement in the lemma.
\end{proof}

\begin{rem} In \cite{HW} there is an assumption that the potential $Q$ be everywhere $C^2$-smooth in the plane. It is however easy to see that
the results hold under the assumptions in the present paper. Indeed, the details of the potential in a given compact subset of $\Int S$ are
completely irrelevant in \cite{HW} (as long as it is admissible in the sense of \cite{ST}). Similarly, the details
of the potential in the exterior, away from the boundary, are not relevant. The only point where any kind of higher regularity is used
is in the form of real-analyticity near the boundary.
\end{rem}

\begin{lem} The integral $\alpha_{j,l}$ in \eqref{ajl} satisfies
$\alpha_{j,l}=O(n)$ when $|j-n|\le d$, $|l-n|\le d$,
and $n\to\infty$.
\end{lem}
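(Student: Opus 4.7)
The plan is to undo the rescaling to express $\alpha_{j,l}$ via the unrescaled orthonormal polynomials $p_{n,m}$, and then apply Cauchy-Schwarz together with orthonormality. Substituting $\zeta = r_n z$ in \eqref{ajl} and using $q_j(z) = r_n^{1+c}\, p_{n,j}(r_n z)$, $dA(z) = r_n^{-2}\, dA(\zeta)$, $e^{-\tilde V_n(z)} = r_n^{-2c} e^{-nV_n(\zeta)}$, and $n r_n z\, \partial Q_r(r_n z) = n\zeta\,\partial Q_r(\zeta)$, a routine computation yields
$$\alpha_{j,l} \;=\; n\int_\C p_{n,j}(\zeta)\,\bar p_{n,l}(\zeta)\,\zeta\,\partial Q_r(\zeta)\,e^{-nV_n(\zeta)}\,dA(\zeta).$$
Since $Q_r$ is radially symmetric, $\zeta\,\partial Q_r(\zeta)\in\R$ for all $\zeta$, so Cauchy-Schwarz combined with $\|p_{n,j}\|_{L^2(nV_n)} = 1$ gives
$$|\alpha_{j,l}|^2 \;\le\; n^2 \int_\C |p_{n,l}(\zeta)|^2\,|\zeta\,\partial Q_r(\zeta)|^2\,e^{-nV_n(\zeta)}\,dA(\zeta).$$
The task is thereby reduced to bounding the latter integral by a constant, uniformly in $n$ and in $l$ with $|l-n|\le d$.

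To do this, I would split $\C$ into three regions. First, fix a compact set $K'\subset\C$ consisting of $\Pc S$ together with a small closed tubular neighborhood of $\partial \Pc S$; on $K'$ the factor $|\zeta\,\partial Q_r(\zeta)|$ is bounded by some constant $M_{K'}$, so the contribution of $K'$ is at most $M_{K'}^2\,\|p_{n,l}\|^2_{L^2(nV_n)} = M_{K'}^2$. Next, on a compact set $K\subset\Int\Pc S$ kept bounded away from $\partial\Pc S$, Lemma \ref{Vk} (with any $\kappa$) supplies $\|p_{n,l}\|^2_{L^2(K,nV_n)} = O(n^{-\kappa-1})$, so that region's contribution is negligible. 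Finally, on the exterior region $\C\setminus K'$, which stays at positive distance from $\Pc S$, one expects $|p_{n,l}(\zeta)|^2 e^{-nV_n(\zeta)}$ to be sub-exponentially small in $n$.

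The main obstacle is making the exterior estimate rigorous and uniform in $n$. For a bounded annular region just outside $\Pc S$, the Hedenmalm-Wennman WKB expansion $p_{n,l}\approx \chi F_{n,l}^{(\kappa)}$ (extended analytically across $\partial \Pc S$) yields exponential decay of $|p_{n,l}|^2 e^{-nV_n}$, because $|\phi(\zeta)|>1$ and $\re \calQ(\zeta) < Q(\zeta)$ there. For the genuine tail $|\zeta|\to\infty$, the growth condition \eqref{gro} (together with the algebraic form $Q=|\zeta|^{2k}+\cdots$) makes $e^{-nV_n(\zeta)}$ decay much faster than the polynomial factor $|p_{n,l}|^2\,|\zeta\,\partial Q_r|^2$ can grow; a standard Bernstein-Markov/off-droplet bound, available for the class of potentials appearing in Theorem \ref{mt22} (cf.~\cite{AKMW}; see also \cite{LY,WW} in special cases), supplies the remaining $O(1)$ contribution. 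Assembling the three pieces gives $\alpha_{j,l} = O(n)$, as required.
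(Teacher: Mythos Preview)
Your proof is correct and follows essentially the same route as the paper: undo the rescaling, then combine Cauchy--Schwarz with the boundedness of $\zeta\,\partial Q_r$ near the droplet and the exterior decay of $|p_{n,l}|^2 e^{-nV_n}$ supplied by the Hedenmalm--Wennman asymptotics. The paper restricts to a neighbourhood of the droplet \emph{before} applying Cauchy--Schwarz rather than after, and your interior step invoking Lemma~\ref{Vk} is redundant (that region is already contained in your $K'$), but these are cosmetic differences.
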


\begin{proof} By a change of variables we have
\begin{equation}\label{bo}\alpha_{j,l}=n\int z\d Q_r(z) p_{n,j}(z)\bar{p}_{n,l}(z) \,e^{-nV_n(z)}\, dA(z).\end{equation}
Since the $|p_{n,j}|^2 e^{-nV_n}$ are negligible outside the support of $\chi$ (see remark below) we can restrict to some small neighbourhood of the droplet. We can then bound the integral in \eqref{bo}
by $C\int|p_{n,j}||p_{n,l}|e^{-nV_n}$, so by the Cauchy-Schwarz inequality it follows that $|\alpha_{j,l}|\le Cn$.
\end{proof}

\begin{rem} To make the proof above precise, we here provide an exterior estimate of $|p_{n,j}|^2e^{-nV_n}$ on $\C\setminus(\supp \chi\cup \Pc S)$, i.e., outside of a neighborhood of $\Pc S$. Indeed, a modification of the pointwise asymptotics in Theorem 1.3.5 from \cite{HW} gives the asymptotics
$$|p_{n,j}|^2 e^{-nV_n} = \sqrt{n}|\phi'||\phi|^{2j}e^{-n(Q-\re\calQ)}|\calB_0+O(n^{-1})|^{2}$$ which is uniform for $z$ outside a neighborhood of $S$.
Moreover, according to the terminology in the paper \cite{HW},
$\re \calQ +\log |\phi|^2 = \breve{Q}$, which gives a very fast decay of $|p_{n,j}|^2 e^{-nV_n}$ outside an arbitrarily small neighborhood of $\Pc S$.
\end{rem}

Now let $K$ be a given compact subset of $\C^*$. We shall estimate the integral
$$J_n:=\int_K|\d_\theta L_n(z,z)|\, dA(z).$$
By \eqref{iml} and the last lemma,
$$J_n\le Cn\sum_{j=n-d}^{n-1}\sum_{l=n}^{d+j}\int_K |q_{n,l}||q_{n,j}|e^{-\tilde{V}_n}\, dA.$$
Hence by Lemma \ref{Vk}, we have $J_n=O(n^{-\kappa})$ for any given $\kappa>0$. (The compact set $K$ corresponds under the rescaling to the dilated set $r_nK$ which is surely bounded away from the outer boundary when $n$ is large enough.)

It follows that if $L=\lim L_{n_l}$ is a limiting holomorphic kernel, then for each compact set $K\subset\C$,
$\int_K|\d_\theta L(z,z)|\, dA(z)=0$. This is only possible if $\d_\theta L(z,z)=0$ identically, i.e., $L$ is
rotationally invariant.

The proof of Theorem \ref{mt22} is complete. q.e.d.

\section{Insertion as a balayage operation} \label{sec5}

In this section, we will discuss the effect of insertion of a point charge, by comparing the
first intensity of the process $\{\zeta_j\}_1^n$ with respect to the inserted point charge $c$
to the intensity of the corresponding process with $c=0$.
Our discussion elaborates on
the concluding remarks
from \cite[Section 7.7]{AHM2}.

\smallskip

We will now exploit the freedom of choosing the perturbation $\pert$ in the potential
\begin{equation*}V_n=Q-\tfrac {2c}n\ell-\tfrac 1 n \pert,\end{equation*}
where we write $\ell(\zeta)=\log|\zeta|$. Here, we take $u$ to be smooth and harmonic. One natural choice is to put $\pert=0$, i.e., to set
\begin{equation}\label{no1}V_n=Q-\tfrac {2c} n\ell\end{equation}
We will call \eqref{no1} the ``pure $\log$-normalization'' of $V_n$.
Another interesting choice, the ``Green's form'', is to set
\begin{equation}\label{pot2}V_n=Q+\tfrac {2c}nG\end{equation}
where $G$ is the Green's function for $S$ with pole at $0$,
\begin{equation}\label{gu}G(\zeta)=\log\tfrac 1{|\zeta|}-\pert(\zeta),\end{equation}
$\pert$ is harmonic in $\Int S$ and $G=0$ on the boundary $\d S$.

\smallskip

In order to apply the theory of \cite{AHM2,AM} we need to impose a few conditions on the geometry and
topology of the droplet $S$.
First of all, we recall our assumption that $Q$ be real-analytic in a neighbourhood of $S$ with exception at the origin. (See \eqref{lam_lim}.) We will now
also impose the condition that $\Lap Q$ is strictly positive in a neighbourhood of the boundary $\d S$.
It is well-known that this guarantees that $S$ has finitely many components, and that
$\d S$ is a finite union of real-analytic curves, possibly having finitely many singular points
of known types (cusps or double points). (Cf. \cite{AKMW,LM} and references.) We shall assume that $S$ is connected, and that the boundary
is everywhere smooth. In this case, the Green's function $G(\zeta)$ can be extended to a smooth function on $\C^*$ with $G\equiv 0$ near $\infty$. We fix such an extension and insist on calling it $G$.

\smallskip

Let $\bfR_n$ and $\tilde{\bfR}_n$ be the $1$-point functions associated, respectively, with the
external potentials $Q$ and $V_n$.
We shall measure the effect of the insertion by studying the difference
$$\rho_n=\tilde{\bfR}_n-\bfR_n,$$
where we use the convention that $\bfR_n$ is identified with the measure $\bfR_n\, dA$ (and likewise
for $\tilde{\bfR}_n$).

Note that the asymptotics of $\rho_n(f)$ can be nontrivial only if the support of $f$ contains
either the point $0$, or some portion of the boundary of $S$. Indeed, if we exclude neighbourhoods of $0$ and
of the boundary, then on the rest of the plane
both $\bfR_n$ and $\tilde{\bfR}_n$ are very close
to $n\Lap Q\cdot \1_S$, in the sense that the difference is negligible, cf. \cite{AKM,AS2}.

\smallskip

In order to simplify the following discussion, we assume now that the potential $Q$ has a dominant
radial bulk singularity at $0$ of Mittag-Leffler type, and that the coefficient $\tau_0=1$. That is, we assume according to the canonical decomposition \eqref{candec} that
\begin{equation*}Q_0=|z|^{2\lambda},\quad V_0=|z|^{2\lambda}-2c\log|z|.\end{equation*}
In addition we will assume that \textit{each limiting kernel is rotationally symmetric}, that is (by Theorem \ref{mt2})
each limiting kernel is equal to $L(z,w)=\lambda\cdot E_{1/\lambda,(1+c)/\lambda}(z\bar{w})$.
Finally, we assume that there are no other singular points in the bulk, except the one at $0$.

\smallskip

In the rest of this section, we assume that all of the above conditions are satisfied.

For a domain $\Omega$ bounded by a Jordan curve $\gamma$ in $\C$,
the harmonic measure with respect to $\Omega$ evaluated at a point $z\in\Omega$ is the hitting probability distribution on $\gamma$ of
a Brownian motion starting at $z$. We refer to \cite{GM} for other equivalent definitions of the harmonic measure.

\begin{mth} \label{thm1} Let $\tilde{\bfR}_n$ be the $1$-point functions associated with the potential
\begin{enumerate}[label=(\roman*)]
\item\label{fi} $V_n=Q+2(c/n)G$. Then, in the weak sense of measures
$$\rho_n\to c(\omega_0 -\delta_0),$$
Here $\delta_0$ is the Dirac measure at $0$ and $\omega_0$ the harmonic measure with respect to $S$, evaluated at
$0$.
\item\label{se}
$V_n=Q-(2c/n)\ell$.
Then
$$\rho_n\to c(\omega_\infty -\delta_0),$$
where $\omega_\infty$ is the harmonic measure with respect to $\C\setminus S$, evaluated at $\infty$.
\end{enumerate}
\end{mth}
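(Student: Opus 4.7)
The approach is to identify $\rho_n$ at leading order as $n$ times the shift in Frostman's equilibrium measure, $n(\sigma[V_n] - \sigma[Q])$, and to compute this limit by a classical balayage calculation. Writing $\sigma_n = \sigma[V_n]$ and $\sigma = \sigma[Q]$, subtracting the two Frostman variational equalities gives on $S$
$$U^{\sigma_n - \sigma}(\zeta) = -\tfrac{1}{2}(V_n - Q)(\zeta) + \mathrm{const},\qquad \int d(\sigma_n - \sigma) = 0.$$
In case (i), $V_n - Q = 2cG/n$, and the balayage identity $U^{\delta_0 - \omega_0}(\zeta) = G(\zeta)$ on $\Int S$ (with $U^{\delta_0 - \omega_0} = 0$ outside $S$) uniquely yields $\sigma_n - \sigma = (c/n)(\omega_0 - \delta_0)$. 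In case (ii), $V_n - Q = -2c\ell/n$, and the Frostman identity $U^{\omega_\infty} \equiv -\log \mathrm{cap}(S)$ on $\Pc S$ gives $U^{\omega_\infty - \delta_0}(\zeta) = \log|\zeta| + \mathrm{const}$ on $S$, yielding $\sigma_n - \sigma = (c/n)(\omega_\infty - \delta_0)$. In either case, $n(\sigma_n - \sigma) \to c(\omega - \delta_0)$ weakly, where $\omega$ is the relevant harmonic measure.

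To transfer this macroscopic shift to the one-point functions, the two components are verified separately. For the $-c\delta_0$ part, apply Theorem \ref{mt2} at the scale $\zeta = r_n z$ with $r_n = n^{-1/2k}$: under the symmetry hypothesis, the rescaled one-point function converges locally uniformly to $R^{(c)}(z) = k\, E_{1/k,(1+c)/k}(|z|^2)\, e^{-|z|^{2k} + 2c\log|z|}$, and by \eqref{abc} the difference $R^{(c)} - \Lap Q_0$ decays exponentially at infinity, so $R^{(c)} - R^{(0)}$ is integrable on $\C$. A direct telescoping gamma-function computation (using $P(s+1,x) - P(s,x) = -x^s e^{-x}/\Gamma(s+1)$ for the regularized incomplete gamma $P$) yields $\int_\C (R^{(c)} - R^{(0)})\, dA = -c$. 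For any test function $\psi$ supported in a small disk about $0$, dominated convergence then gives $\rho_n(\psi) \to -c\,\psi(0)$, matching the $\delta_0$ part.

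For the $+c\omega$ part, use that on compact subsets of $\Int S \setminus \{0\}$ the asymptotic $\bfR_n = n\Lap Q + \tfrac12 \Lap \log \Lap Q + o(1)$ and its analogue for $\tilde\bfR_n$ hold (cf.~\cite{AHM2} and Section \ref{asyl} for algebraic $Q$). Since $V_n - Q$ is harmonic on $\Int S \setminus \{0\}$, the bulk $\Lap Q$ and $\tfrac12 \Lap \log \Lap Q$ terms coincide for $V_n$ and $Q$, so the bulk contribution to $\rho_n$ vanishes. The boundary contribution is manifest in the macroscopic shift itself: in case (i), $(c/n)\omega_0$ appears as an honest surface charge of $\tilde\sigma_n$ on $\d S$, seen at the one-point-function level as a modified boundary-layer erfc profile for $\tilde\bfR_n$ relative to $\bfR_n$; in case (ii), the droplet of $V_n$ extends past $\d S$ in a thin shell of area $O(1/n)$ carrying total mass $\to (c/n)\omega_\infty$. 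Combining with the mass conservation $\int \rho_n\, dA = 0$ then identifies the boundary contribution as exactly $c\omega$.

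The main obstacle is the boundary analysis: rigorously identifying the weak limit of the boundary part of $\rho_n$ as $c\omega$ requires precise control of the boundary erfc profile under the $O(1/n)$ potential perturbation (in case (i), where the perturbation produces an implicit surface charge) or of the narrow extension of the droplet (in case (ii)). For algebraic potentials satisfying the hypotheses of Theorem \ref{mt22}, the required precision is furnished by the Hedenmalm--Wennman asymptotics of \cite{W} together with the Tian--Catlin--Zelditch-type expansion of Section \ref{asyl}.
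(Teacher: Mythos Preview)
Your treatment of the interior contribution (the $-c\delta_0$ part) is essentially the paper's Lemma~\ref{vott}: rescale at scale $r_n=n^{-1/2k}$, invoke the explicit Mittag-Leffler limiting densities $R^{(c)}$, and compute $\int_\C(R^{(c)}-R^{(0)})\,dA=-c$. The paper carries out this integral via the identity $I_{c,k}=c+(1-k)/2$ and incomplete gamma functions; your telescoping argument is a variant of the same computation.

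The boundary contribution is where your proposal and the paper diverge, and where your argument has a genuine gap. You propose to read off the $c\omega$ term from the equilibrium-measure shift $n(\sigma[V_n]-\sigma[Q])$, computed formally by subtracting Frostman conditions. But this balayage calculation is only a heuristic: the identity $\sigma_n-\sigma=(c/n)(\omega-\delta_0)$ produces a \emph{signed} measure with a point mass at $0$, whereas the actual equilibrium measure $\sigma[V_n]$ is absolutely continuous; moreover the supports of $\sigma_n$ and $\sigma$ need not coincide, so subtracting the variational equalities ``on $S$'' is not justified. You then acknowledge that transferring the $\omega$-part to the one-point function (``the main obstacle is the boundary analysis'') requires controlling the erfc boundary layer under the $O(1/n)$ perturbation, and you leave this unresolved, gesturing toward \cite{W} and Section~\ref{asyl} without showing how those results would yield the weak limit on $\partial S$.

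The paper bypasses the boundary-layer analysis entirely. For test functions $f$ vanishing near $0$, it applies the fluctuation theorem of \cite[Theorem~1.3]{AM} directly: that result encodes the linear response of the one-point function to the smooth perturbation $g$ (here $g=-G$ or $g=\ell$) as
\[
\lim_{n\to\infty}\rho_n(f)=\frac{c}{2}\int \nabla f^S\cdot\nabla g^S\,dA,
\]
and a single integration by parts converts this Dirichlet pairing into $-\frac{c}{2\pi}\int_{\partial S} f\,\partial_n G\,ds=c\,\omega_0(f)$ in case~(i), and analogously $c\,\omega_\infty(f)$ in case~(ii). No equilibrium-measure perturbation, erfc profile, or droplet deformation enters; the fluctuation theorem already packages the boundary response. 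This is the key tool your proposal is missing.
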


The theorem shows that the insertion of a point mass, corresponding to different natural boundary conditions,
gives rise to different kinds of balayage operations, see Figure \ref{bala}, cf. \cite{GM,ST} for the basic facts about balayages.

\begin{figure}[ht]
\begin{center}
\includegraphics[width=.4\textwidth]{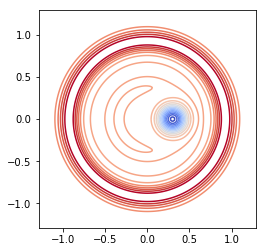}
\qquad
\includegraphics[width=.4\textwidth]{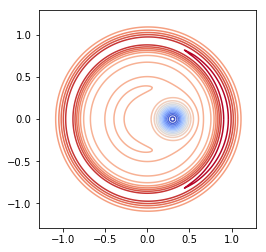}
\end{center}
\caption{Some level curves of $\rho_n$ with respect to Ginibre potential $V_n=|\zeta|^2-2c\log|\zeta-a|/n$
when $n=40$, $c=1$, $a=0.3$ with pure-log normalization and Green's normalization, respectively. (Blue means negative and red means positive.)} \label{bala}
\end{figure}

\begin{lem} \label{vott} Suppose that the test-function $f$ has compact support in the interior of $S$. Then
$$\lim_{n\to\infty}\rho_n(f)=-cf(0).$$
In other words, $\rho_n\to -c\delta_0$ in the interior of the droplet.
\end{lem}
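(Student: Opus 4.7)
The plan is to localize $f$ near the singular point $0$, dispose of the far part via bulk asymptotics, and compute the near part by rescaling and evaluating a limit integral involving Mittag--Leffler densities.

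Fix $\delta>0$ with $\overline{\D(0;2\delta)}\subset\Int S$ and let $\chi\in C_c^\infty(\D(0;2\delta))$ satisfy $\chi\equiv 1$ on $\D(0;\delta)$. Decompose $f=\chi f+(1-\chi)f$. The ``far'' piece $(1-\chi)f$ is compactly supported in $\Int S\setminus\D(0;\delta/2)$, so by the remark preceding the lemma both $\bfR_n$ and $\tilde{\bfR}_n$ agree with $n\Lap Q$ there up to negligible error, and hence $\rho_n((1-\chi)f)\to 0$.

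For the ``near'' piece I rescale by $\zeta=r_nz$ and set $R_n(z)=r_n^2\tilde{\bfR}_n(r_nz)$, $R_n^{(0)}(z)=r_n^2\bfR_n(r_nz)$, so that
\begin{equation*}
\rho_n(\chi f)=\int_{\C}\chi(r_n z)\,f(r_n z)\bigl(R_n(z)-R_n^{(0)}(z)\bigr)\,dA(z).
\end{equation*}
The standing rotational-symmetry hypothesis combined with Theorem~\ref{mt2} gives locally uniform convergence $R_n\to R$ and $R_n^{(0)}\to R^{(0)}$, where $R,R^{(0)}$ are the Mittag--Leffler densities \eqref{lnoll} corresponding to parameters $(k,c)$ and $(k,0)$, respectively. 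Theorem~\ref{m1.2} yields exponential decay of $R-\Lap Q_0$ and $R^{(0)}-\Lap Q_0$ at infinity, so $R-R^{(0)}\in L^1(\C)$. To pass to the limit I would upgrade locally uniform convergence to dominated convergence: one produces an $n$-uniform envelope of the form $C|z|^{2c}\,\1_{|z|\leq 1}+C e^{-\alpha|z|^{2k}}\,\1_{|z|>1}$, which is integrable thanks to $c>-1$ and whose existence follows from the off-diagonal kernel estimates of \cite{AS2} underlying Theorems~\ref{mt1}--\ref{m1.2}. Pointwise convergence $\chi(r_nz)f(r_nz)\to f(0)$ then combines with dominated convergence to give
\begin{equation*}
\rho_n(\chi f)\longrightarrow f(0)\int_{\C}(R-R^{(0)})\,dA.
\end{equation*}

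Finally, I would establish the identity $\int_\C(R-R^{(0)})\,dA=-c$ by a direct computation. With $\|z^j\|^2_{L^2(\mu_0)}=\Gamma((j+1+c)/k)/k$, passing to polar coordinates and substituting $t=|z|^2$ converts the truncated integral to
\begin{equation*}
\int_{|z|\leq M}(R-R^{(0)})\,dA=\sum_{j=0}^\infty\Bigl[P\bigl(\tfrac{j+1+c}{k},M^{2k}\bigr)-P\bigl(\tfrac{j+1}{k},M^{2k}\bigr)\Bigr],
\end{equation*}
where $P(a,x)=\gamma(a,x)/\Gamma(a)$ is the regularized lower incomplete gamma function. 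Writing each summand as $\int_0^{c/k}\partial_aP(a_j+s,M^{2k})\,ds$ with $a_j=(j+1)/k$, interchanging sum and integral, and invoking the Riemann-sum asymptotic $\sum_{j\geq 0}\partial_aP(a_j+s,X)\to -k$ as $X\to\infty$ produces the value $\int_0^{c/k}(-k)\,ds=-c$. (For $c$ a positive integer, one can alternatively iterate the Mittag--Leffler recursion $zE_{1/k,2/k}(z)=E_{1/k,1/k}(z)-1/\Gamma(1/k)$ to obtain a closed-form expression for $R-R^{(0)}$ and integrate term-by-term.) The main obstacle is the uniform envelope in the previous paragraph; the integral identity, though a little delicate because the series are only conditionally summable, is essentially bookkeeping with Mittag--Leffler series.
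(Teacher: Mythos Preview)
Your overall architecture matches the paper's: localize near $0$, rescale, reduce to the identity $\int_\C(R-R^{(0)})\,dA=-c$. But the passage from $\rho_n(\chi f)$ to $f(0)\int(R-R^{(0)})$ is where the real work lies, and your dominated-convergence envelope does not quite close the gap. The estimates from \cite{AS2} that underlie Theorems~\ref{mt1}--\ref{m1.2} give the bound $|R_n(z)-\Lap Q_0(z)|\le C|z|^{2k-2}e^{-\alpha|z|^{2k}}$ only in the range $M\le|z|\le\log n$; they do not furnish an $n$-uniform exponential envelope all the way out to $|z|\le 2\delta/r_n$, which is what your integral actually runs over. The paper handles this by splitting into two annuli: on $A_n=\{M\le|z|\le\log n\}$ it uses the \cite{AS2} bound, while on the original-scale annulus $B_n=\{r_n\log n\le|\zeta|\le d\}$ it invokes a separate H\"ormander-type estimate (adapted from \cite{AKM}) which gives only that $|\tilde{\bfR}_n-\bfR_n|$ is \emph{bounded} there, and then makes the contribution small by shrinking $d$. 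Your single envelope $Ce^{-\alpha|z|^{2k}}\1_{|z|>1}$ does not follow from the cited sources, so as written this is a genuine gap; it is repairable by exactly the two-regime argument the paper uses.

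For the integral identity the paper takes a different, more explicit route: rather than differencing $R$ and $R^{(0)}$ directly, it computes $I_{c,k}=\int_\C(\Lap Q_0-\tilde R)\,dA=c+\tfrac{1-k}{2}$ (and hence $I_{0,k}=\tfrac{1-k}{2}$) and subtracts. The evaluation of $I_{c,k}$ is reduced to $k=1$ via the algebraic decomposition $I_{c,k}=\sum_{l=0}^{k-1}I_{(l+c+1)/k-1,1}$, and $I_{c,1}=c$ is then proved from the closed form $t^ce^{-t}E_{1,1+c}(t)=\gamma(c,t)/\Gamma(c)$ together with the primitive identity $\tfrac{d}{dt}\bigl(\Gamma(c+1,t)-t\Gamma(c,t)\bigr)=-\Gamma(c,t)$. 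Your Riemann-sum approach via $\sum_j\partial_a P(a_j+s,X)\to -k$ is plausible but would require care with the interchange of the $j$-sum, the $s$-integral, and the limit $X\to\infty$; the paper's computation avoids those interchanges entirely.
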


\begin{proof} Rescaling about $0$ on the scale $n^{-1/2\lambda}$, we obtain the $1$-point intensities $R_n$ and $\tilde{R}_n$ which,
by Theorem \ref{mt2} have known locally uniform limits
$R$ and $\tilde{R}$. We shall show that the total mass of the measure $R-\tilde{R}$ is
\begin{equation}\label{box}(R-\tilde{R})(\C)=c.\end{equation}
In view of well-known bulk estimates of $\bfR_n$ and $\tilde{\bfR}_n$ away from a microscopic neighbourhood of
$0$, this will prove the lemma. Before proceeding with the proof, we pause to recall some details about the estimates that come into play here.

Consider the annular regions
\begin{equation*}
A_n:\quad Mr_n \leq |\zeta| \leq r_n (\log n)^{\frac 1{\lambda}},\qquad
B_n:\quad r_n (\log n)^{\frac{1}{\lambda}} \leq |\zeta| \leq d.
\end{equation*}
where $M$ is large
and $d$ is some small constant so that $B_n$ lies inside the droplet. We claim that the total variation of $\rho_n$ on $A_n\cup B_n$ can be made less than any given $\epsilon>0$ by choosing large $M$ and small $d$.
Indeed, for $\zeta \in A_n$, the asymptotics from \cite[Section 3]{AS2} shows that there exist some constants $c$, $C$ such that
$$|\tilde{R}_n(z)-\Lap Q_0(z)| \leq C|z|^{2\lambda-2} e^{-c|z|^{2\lambda}}\!,\quad |R_n(z)-\Lap Q_0(z)| \leq C|z|^{2\lambda-2} e^{-c|z|^{2\lambda}}$$ for all large $n$. Thus
\begin{align*}
\int_{M\leq |z|\leq (\log n)^{\frac{1}{\lambda}}}|\tilde{R}_n(z) - R_n (z)| dA(z) \leq 2C\int_{M\leq |z|\leq (\log n)^{\frac{1}{\lambda}}}|z|^{2\lambda-2} e^{-c|z|^{2\lambda}} dA(z),
\end{align*} where the right hand side can be made small by taking $M$ large enough.

By a slight adaptation of the H\"{o}rmander estimate in \cite[Section 5]{AKM}, we obtain the following estimate.

\begin{lem}
There exist some constants $c$, $C>0$ such that for $\zeta\in B_n$
\begin{equation}\label{voxd}
\begin{cases}
|\tilde{\bfR}_n(\zeta)-n\Lap Q(\zeta)| &\leq C(1+n^{\max(1,\frac{1}{\lambda})}\cdot e^{-c(\log n)^2})\cr
&\cr
|\bfR_n(\zeta)-n\Lap Q(\zeta)| &\leq C(1+n^{\max(1,\frac{1}{\lambda})}\cdot e^{-c(
\log n)^2})\cr
\end{cases}
\end{equation}
for all large $n$.
\end{lem}
\begin{proof}
Using the pointwise-$L^2$ estimate in \cite[Section 3]{AKM}, we obtain the elementary estimate for $\zeta \in B_n$
\begin{equation}\label{elbd}
\bfR_n(\zeta) \leq C n \Lap Q(\zeta)
\end{equation}
for some constant $C>0$.

For a point $\zeta\in B_n$ and a small enough constant $\alpha$, let $\epsilon_n(\zeta)=\alpha\log n/\sqrt{n\Lap Q(\zeta)}$. Consider a smooth function $\chi_n$ such that $\chi_n = 1$ on $D(\zeta,\epsilon_n/2)$, $\chi_n =0$ outside of $D(\zeta,\epsilon_n)$, and $\|\dbar \chi_n\|_{L^2}\le C$. With a suitable choice of $\alpha$, $D(\zeta, \epsilon_n)$ does not contain the origin. In order to use the estimate in \cite[Section 5]{AKM}, we write $Q(\eta,\omega)$ for the Hermitian-analytic extension of $Q$ satisfying $Q(\eta,\eta)=Q(\eta)$ in a neighborhood of $\zeta$ and
define the approximate kernel $\bfK_n^{\#}$ by
$$\bfK^{\#}_n(\eta,\omega) = n\d_\eta\dbar_\omega Q(\eta,\omega)\cdot e^{nQ(\eta,\omega)}e^{-nQ(\eta)/2-nQ(\omega)/2}.$$
Writing $\bfK_{n,\eta}(\omega)=\bfK_n(\omega,\eta)$,
we define the operator $\Pi_n$ by
$$\Pi_n[f](\eta)=\int f(\zeta)\, \bar{\bfK}_n(\zeta) \, dA(\zeta).$$
By adapting \cite[Theorem 5.1]{AKM}, we readily obtain
$$|\bfK_n(\zeta,\zeta) - \Pi_n[\chi_n\bfK_{n,\zeta}^{\#}](\zeta)| \le CM_n(\zeta)\sqrt{\bfK_{n}(\zeta,\zeta)}$$
for some constant $C$ where $$M_n(\zeta)=\frac{1}{\sqrt{n\Lap Q(\zeta)}}+\frac{1}{\epsilon_n}e^{-c'n\Lap Q(\zeta)\epsilon_n^2}.$$
It follows from \eqref{elbd} that
\begin{align*}
|\bfK_n(\zeta,\zeta) - \Pi_n[\chi_n\bfK_{n,\zeta}^{\#}](\zeta)| & \le C_1 + C_2 \,n \Lap Q(\zeta) \, e^{-c(\log n)^2}.
\end{align*}
If $\lambda<1$, then $\Lap Q(\zeta) \leq C\,(r_n (\log n)^{\frac1 {\lambda}})^{2\lambda-2}\le C\,n^{-1+\frac1{\lambda}}$. Otherwise, $\Lap Q(\zeta)$ is bounded in $B_n$. Thus we have
\begin{align*}
|\bfK_n(\zeta,\zeta) - \Pi_n[\chi_n\bfK_{n,\zeta}^{\#}](\zeta)| & \le C_1 + C_2 \,n^{\max(\frac1{\lambda},1)}  \, e^{-c(\log n)^2}.
\end{align*}
On the other hand, following the argument in \cite[Section 5.2]{AKM} leads to the estimate
\begin{align*}
|\Pi_n[\chi_n \bfK_{n,\zeta}^{\#}]-\bfK_{n}^{\#}(\zeta,\zeta)|\le Cn\Lap Q(\zeta)\,e^{-c(\log n)^2}\le Cn^{\max(\frac1{\lambda},1)}e^{-c(\log n)^2}.
\end{align*}
Combining the above estimates, we obtain the estimate we want.

For the potential $V_n = Q - \frac{2c}{n}\ell - \frac{2}{n}u$, we use a Hermitian-analytic extension in a small neighborhood of $\zeta \in B_n$
$$V_n(\eta, \omega) = Q(\eta, \omega) - \tfrac{c}{n}(\log \eta + \overline{\log \omega}) - \tfrac{2}{n}u(\eta,\omega),$$
where $u(\eta,\omega)$ is a polarization of $u$, to define the approximate kernel $\tilde{\bfK}_n^\#$ by
$$\tilde{\bfK}_n^\#(\eta,\omega) = n \d_{\eta}\bar{\d}_{\omega}V_n(\eta,\omega) \cdot e^{n(V_n(\eta,\omega) - V_n(\eta)/2 - V_n(\omega)/2)}.$$
Here, $\tilde{\bfK}_n^\#(\zeta,\zeta) = n\Lap Q(\zeta) $ since $u$ is harmonic. The argument using approximate Bergman projection in [6, Section 5] considers a local behavior in the ball of radius $\epsilon_n$ away from the origin, so that the same argument applies for  $\tilde{\bfR}_n$.
\end{proof}

The estimates in \eqref{voxd} show that for $\zeta\in B_n$
$$|\tilde{\bfR}_n(\zeta)-\bfR_n(\zeta)| \leq C\left(1+n^{\max(1,\frac{1}{\lambda})}e^{-c(\log n)^2}\right)$$
is bounded, and the total variation of $\rho_n$ over $B_n$ can be made as small as we want by choosing a small $d$. On the other hand, away from a small neighborhood of $0$, the asymptotic formula (cf. \cite{AHM2,AM})
\begin{align*}
\tilde{\bfR}_n &= n\Lap Q+\frac{1}{2}\Lap\log\Lap Q + O(n^{-1})
\end{align*}
where $O(n^{-1})$ is uniform on each compact subset in $\Int S\setminus \{0\}$ can be used. We have thus reduced our problem to proving the
identity in \eqref{box}.

To show \eqref{box} we recall by Theorem \ref{m1.2} that both $R(z)$ and $\tilde{R}(z)$ approach $\Lap Q_0(z)$ quickly
as $z\to\infty$. In particular, the integrals
$$\int_\C(\Lap Q_0-\tilde{R})\, dA\qquad \int_\C(\Lap Q_0-R)\, dA$$
are convergent; if we can compute them,
we can obtain \eqref{box} by a simple subtraction.

Hence it suffices to show that
\begin{equation}\label{sts}\int_\C(\Lap Q_0-\tilde{R})\, dA=c+\frac {1-\lambda}{2}.\end{equation}

But
$\Lap Q_0=\lambda^2|z|^{2\lambda-2},$
and the limiting $1$-point function $\tilde{R}$ equals to
\begin{align*}\tilde{R}(z)&=\lambda E_{1/\lambda,(1+c)/\lambda}(|z|^2)e^{-V_0(z)}=\lambda \sum_{j=0}^\infty \frac{|z|^{2j}}{\Gamma(\frac {j+1+c}{\lambda})}|z|^{2c}e^{-|z|^{2\lambda}}.
\end{align*}
Our problem is thus to evaluate the integral
$$I_{c,\lambda}=\lambda\int_\C(\lambda|z|^{2\lambda-2}-\sum_{j=0}^\infty \frac{|z|^{2j}}{\Gamma(\frac {j+1+c}\lambda)}|z|^{2c}e^{-|z|^{2\lambda}})\, dA(z),$$
which is the same as
$$I_{c,\lambda}=\int_0^\infty(\lambda-e^{-t}\sum_{j=0}^\infty \frac {t^{\frac{j+c+1}{\lambda}-1}}{\Gamma(\frac {j+1+c}{\lambda})})\, dt.$$

We first consider the case when $\lambda$ is a rational number. The computation of the integral can be reduced to the case $\lambda=1$ by the following observation.

A rational number $\lambda$ with $ \lambda > 0$ is expressed as the fraction $p/q$ of two integers $p,\ q\ge 1$. Letting $j=mp + l$ for some integers $m, l$ with $0\leq l \leq p-1$, we have
$$I_{c,\lambda}=\sum_{l=0}^{p-1}\int_0^\infty\left[\frac{1}{q}-e^{-t}\sum_{m=0}^\infty\frac {t^{mq+\frac{q(l+c+1)}{p}-1}}{\Gamma(mq+\frac {q(l+1+c)}{p})}\right]\, dt.$$

Now we note that
$$
E_{1,c}(t) = E_{q,c}(t^q) + tE_{q,c+1}(t^q) + \cdots + t^{q-1}E_{q,c+q-1}(t^q).$$
Writing
$$F_{s}(t)=e^{-t}t^{c+s}E_{q,c+1+s}(t^q),\quad 0\leq s \leq q-1,$$
we have
$$e^{-t}t^{c}E_{1,c+1}(t) = F_0(t) + \cdots + F_{q-1}(t)$$ and
$F_s = F_{s-1} - F'_{s}$ for all $1\leq s\leq q-1$. This implies that
\begin{equation}\label{EFrel}e^{-t}t^{c}E_{1,c+1}(t) = q F_0(t) - \sum_{s=1}^{q-1} (q-s)F'_{s}(t).\end{equation}
Using the asymptotic formula (cf. \cite[eq. (4.4.18)]{GKMR})
$$E_{q,c}(t^q) = \frac{1}{q} \sum_{|2\pi n|<\frac{3\pi q}{4}}(te^{2\pi i n/q})^{1-c}e^{t e^{2\pi i n/q}}+O(t^{-q}),\quad t\to \infty,$$
we now obtain $F_{s}(\infty)=\frac{1}{q}$ for $0\leq s \leq q-1$. In view of equation \eqref{EFrel}, we thus obtain for $c>-1$
$$I_{c,1}=\int_{0}^{\infty} [1-e^{-t}t^cE_{1,c+1}(t)]\, dt =  \int_{0}^{\infty}\left[1-q\,e^{-t}t^cE_{q,c+1}(t^q)\right]dt + \tfrac{q-1}{2}.
$$
If we temporarily accept that $I_{c,1}=c$ for $c>-1$, we now deduce that
$$\int_{0}^{\infty}\left[\frac{1}{q}-e^{-t}t^cE_{q,c+1}(t^q)\right]dt = \frac{1}{q}(c-\frac{q-1}{2})$$
and hence
$$I_{c,\lambda}=\frac{1}{q}\sum_{l=0}^{p-1} (\frac{q(l+c+1)}{p}-1 - \frac{q-1}{2})=c+\frac{1-\lambda}{2}.$$
The identity \eqref{sts} now follows by the continuity in $\lambda$ for all real numbers $\lambda>0$, provided that we can show it for $\lambda=1$.

It remains to verify that $I_{c,1}=c$, where
$$I_{c,1}=\int_0^\infty(1-t^ce^{-t}E_{1,c+1}(t))\, dt.$$
Our verification that $I_{c,1}=c$ is somewhat lengthy, and works in fact for complex $c$ with $\re c>-1$. An alternative short proof in the case when $c$ is a positive integer
is given in the remark below.

We first assume that $\re c>0$. To show that $I_{c,1}=c$, it is convenient to call on the lower and upper incomplete gamma functions
$$\gamma(c,t)=\int_0^ts^{c-1}e^{-s}\, ds,\qquad \Gamma(c,t)=\int_t^\infty s^{c-1}e^{-s}\, ds,\qquad
(\Gamma(c)=\gamma(c,t)+\Gamma(c,t)).$$
We claim that (for $\re c>0$ and $t>0$)
\begin{equation}\label{cla}t^ce^{-t}E_{1,1+c}(t)=\frac {\gamma(c,t)}{\Gamma(c)}.\end{equation}
To see this, we integrate the right hand side by parts,
\begin{equation}\label{j0}\frac1 {\Gamma(c)}\int_0^ts^{c-1}e^{-s}\, ds=\frac 1 {\Gamma(c+1)}t^ce^{-t}
+\frac 1 {\Gamma(c+1)}\int_0^ts^ce^{-s}\, ds.\end{equation}
To repeat the integration, we similarly observe that for all $\nu=0,1,2,\ldots$
\begin{equation}\label{j1}\frac1{\Gamma(c+\nu)}\int_0^t s^{c+\nu-1}e^{-s}\, ds=\frac 1 {\Gamma(c+1+\nu)}t^{c+\nu}e^{-t}+\frac
1 {\Gamma(c+1+\nu)}\int_0^ts^{c+\nu}e^{-s}\, ds,\end{equation}
and note that, for fixed $t>0$,
\begin{equation}\label{j2}\frac {\gamma(c+\nu,t)}{\Gamma(c+\nu+1)}\to 0,\qquad (\nu\to\infty).\end{equation}
In view of \eqref{j0}-\eqref{j2} we have
$$\frac {\gamma(c,t)}{\Gamma(c)}=t^c e^{-t}\sum_{\nu=0}^\infty \frac {t^{\nu}}{\Gamma(c+1+\nu)}
=t^ce^{-t}E_{1,c+1}(t),$$
and \eqref{cla} is proved.

It follows from \eqref{cla} that
\begin{equation}\label{cla2}I_{c,1}=\int_{0}^\infty\left[1-\frac {\gamma(c,t)}{\Gamma(c)}\right]\, dt=\frac 1 {\Gamma(c)}
\int_0^\infty \Gamma(c,t)\, dt.\end{equation}
To calculate this, we note that
\begin{equation}\label{cla3}\tfrac d {dt}(\Gamma(c+1,t)-t\Gamma(c,t))=-\Gamma(c,t),\end{equation}
which is immediate since $\tfrac d {dt}\Gamma(c,t)=-t^{c-1}e^{-t}$. Moreover, it is easy to see that
\begin{equation}\label{cla4}\lim_{t\to\infty}(\Gamma(c+1,t)-t\Gamma(c,t))=0.\end{equation}
Combining \eqref{cla2}-\eqref{cla4} we find that
$$I_{c,1}=\tfrac {\Gamma(c+1,0)}{\Gamma(c)}=c.$$
For the case when $-1<\re c<0$, we write
\begin{align*}
t^c e^{-t} E_{1,1+c}(t) &= t^c e^{-t}(\frac{1}{\Gamma(c+1)}+\sum_{j\geq 1}^{\infty}\frac{t^j}{\Gamma(j+c+1)})\\
&=\frac{t^ce^{-t}}{\Gamma(c+1)}+t^{c+1}e^{-t}E_{1,c+2}(t) = \frac{t^c e^{-t}}{\Gamma(c+1)}+\frac{\gamma(c+1,t)}{\Gamma(c+1)}.
\end{align*}
Thus, we have
$$I_{c,1} = \int_{0}^{\infty} (1 -
\frac{\gamma(c+1,t)}{\Gamma(c+1)}-\frac{t^ce^{-t}}{\Gamma(c+1)} )\,dt = (c+1)-1 =c,$$
finishing the proof of the lemma.
\end{proof}

\begin{rem} When $c$ is a positive integer, we can easily prove the identity $I_{c,1}=c$ as follows.
We have
$$\sum_{j=0}^\infty\frac {t^{j+c}}{\Gamma(j+1+c)}=e^t-P_c(t)$$
where
$$P_c(t)=\sum_{j=0}^{c-1}\frac {t^j}{j!}.$$
It follows that
$$I_{c,1}=\int_0^\infty P_c(t)e^{-t}\, dt=c,\qquad c=1,2,\ldots.$$
\end{rem}

In order to see the effect of the insertion near the boundary, we will apply the boundary fluctuation theorem in \cite{AM}.

For a suitable function $h$, let
$V_n = Q -2h/n$ be the perturbed potential and $\bfR_n^h$ be the corresponding $1$-point function. Write $\rho_n^h$ for the difference ${\bfR}_{n}^h - \bfR_n$.
We define for any suitable function $f$, a new function $f^S$ in the following way: $f^S=f$ on $S$ and $f^S$ equals the harmonic extension of $f|_{\d S}$ to $\hat{\C}\setminus S$ on that set.
\begin{mth}\label{crate}
Suppose that $f$ is a smooth test-function with compact support, which vanishes in a small, fixed neighborhood of $0$ and $h$ is $C^2$-smooth outside the origin and has compact support. Then there exists a small number $\alpha>0$ such that
\begin{equation}\label{rrate}\rho_n^h(f) = \frac{1}{2} \int \nabla f^S \bigcdot \nabla h^S + O(n^{-\alpha}),\quad n\to \infty.\end{equation}
\end{mth}

\begin{proof}
 The estimate \eqref{rrate} is given in \cite[Theorem 2.3]{AM}, except for the order of the error-term $O(n^{-\alpha})$.
To obtain this error-term, we just need to carefully examine the proof in \cite{AM}. In the following lines we will explain
how this can be seen, temporarily borrowing notation from \cite{AM}. (The reader who wishes to understand the following details
should thus have a copy of \cite{AM} at hand.)

Let $v$ be a suitable test-function, smooth in a neighbourhood of the boundary $\d S$, satisfying conditions (3.4-i)--(3.4-iii) from \cite{AM}.

The main tool behind the convergence of fluctuations in \cite{AM} is the limit form of Ward's identity:
$$2\int_{\C} D^h_n (v\d\dbar Q - \dbar v(\d \check{Q}-\d Q))=-\frac{1}{2}\sigma(\d v + 4v\d h)  +\eps^1_n[v]+\eps_{n}^2[v]+({\sigma}^h_n - \sigma)(\frac {\d v}2 + 2v\d h),$$
where $\eps_n^1[v]$ and $\eps_n^2[v]$ are certain error-terms given in \cite{AM}.

By Prop. 4.5 and Prop. 4.6 in \cite{AM}, we have
$$\eps^1_n[v] = O(n^{-1/2}[\log n]^4) \quad \text{and}\quad  \eps^2_n[v]=O(n^{-\beta/2}(\|v\|_{L^\infty}+ \|\nabla v\|_{L^\infty})),$$
for some small $\beta>0$, and it is easy to show that the term $({\sigma}^h_n - \sigma)(\d v/2 + 2v\d h)$, is $O(\delta_n)$ where $\delta_n=\log^2 n/\sqrt{n}$.
Hence it follows that
\begin{equation}\label{lw}
2\int_{\C} {D}^h_n (v\d\dbar Q - \dbar v(\d \check{Q}-\d Q))=-\frac{1}{2}\sigma(\d v) - 2\sigma(v\d h ) + O(n^{-\alpha})\end{equation}
for some $\alpha>0$.

Now we assume that $f$ is a suitable function supported near the boundary. Write $f=f_+ +f_0$ and $v=v_+ + v_0$ where
$$v_+ = \frac{\dbar f_+}{\Lap Q}1_S,\quad \text{and} \quad v_0 = \frac{\dbar f_0}{\Lap Q}1_S + \frac{f_0}{\d (Q-\check{Q})}1_{\C\setminus S}$$
 as in \cite[Section 5]{AM}. Then \eqref{lw} implies that,
$$2\int_\C {D}^h_n \cdot \dbar f = -\frac{1}{2}\sigma(\d v) - 2\sigma(v\d h)+ O(n^{-\alpha}).$$
Hence the fluctuation term ${\nu}^h_n(f)$ obeys the asymptotic
$$\nu^h_{n}(f) = \frac{1}{4}\sigma(\d v)+\sigma(v\d h) + O(n^{-\alpha}),$$
and we obtain
\begin{align*}
\rho_n^h(f)&={\nu}^h_n(f)-\nu_n(f) = \sigma(v\d h)+O(n^{-\alpha})\\
&= 2\int_S \dbar f \cdot \d h + O(n^{-\alpha})
= \frac{1}{2} \int \nabla f^s \bigcdot \nabla h^s + O(n^{-\alpha}).
\end{align*}
The general case $f=f_0+f_++f_-$ is immediate from this, as in \cite[Section 5]{AM}.
\end{proof}

To prove Theorem \ref{thm1}, we consider the cases when $h=G$ and when $h=\ell$. The function $G$ is an extension of the Green's function, which is smooth on $\C^*$ and has compact support. For the case of "pure log-normalization", the limit form of Ward's identity in \cite{AM} can be applied and Theorem \ref{crate} holds for $\ell$ since the function $\ell$ does not grow too fast and $\d \ell(z) \sim z^{-1}$ near $\infty$.

\begin{proof}[Proof of Theorem \ref{thm1}] Assume that the external potential $V_n$ has the Green's form in \eqref{pot2}. We then have $\rho_n = \rho_n ^{G}$.

\begin{lem} \label{pelm}
Suppose that $f$ is a smooth test-function which vanishes in some small, fixed neighbourhood of $0$, and let
$\omega_0$ be harmonic measure for $S$ evaluated at $0$. Then
\begin{equation*}
\lim_{n\to\infty} \rho_n(f)\to c\omega_0(f).\end{equation*}\end{lem}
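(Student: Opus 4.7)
The plan mirrors the proof of Lemma~\ref{vott}, but now the effect will be localized near the boundary $\partial S$ rather than at $0$. The strategy is to decompose $\rho_n$ into an ``equilibrium measure'' piece, which produces the harmonic measure contribution, plus a subleading correction which vanishes in the limit.

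First I would identify the leading behaviour of $\tilde\sigma_n - \sigma$, where $\tilde\sigma_n$ is the Frostman equilibrium measure for $V_n = Q + (2c/n)G$. Assuming $\supp\tilde\sigma_n \subset S$ (justified for large $n$ up to a microscopic correction near $0$), the Euler--Lagrange equations for $\tilde\sigma_n$ and $\sigma$, combined with $G \equiv 0$ on $\partial S$, yield
$$U^{\tilde\sigma_n - \sigma}(\zeta) + (c/n)G(\zeta) = \mathrm{const}, \qquad \zeta \in S.$$
Taking Laplacians in $\Int S$ and using the $\delta_0$-type singularity of $\Lap G$ at the origin shows that $n(\tilde\sigma_n - \sigma) = -c\,\delta_0$ in $\Int S$. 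Since both measures are probabilities, the missing mass $c/n$ must lie on $\partial S$; the boundary restriction of the Euler--Lagrange identity, together with the defining property of harmonic measure as the balayage of $\delta_0$ onto $\partial S$ from inside $S$, identifies this boundary piece with $(c/n)\omega_0$. Hence, as signed measures,
$$n(\tilde\sigma_n - \sigma) = c(\omega_0 - \delta_0).$$

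Next, I would invoke the standard one-term asymptotic expansion of the one-point function from \cite{AHM2, AM}, of the schematic form $\bfR_n^V(f) = n\,\sigma_V(f) + M(f; V) + o(1)$, where $M(f; V)$ is an $O(1)$ correction depending on $V$ through its Laplacian in a neighbourhood of $S$ and its boundary data on $\partial S$. For test functions $f$ vanishing in a fixed neighbourhood of $0$, the relevant input to $M(f; V)$ depends only on the restrictions of $V_n$ and $Q$ to a neighbourhood of $S$ disjoint from $0$, and on this set $V_n - Q = (2c/n)G \to 0$ uniformly. Thus $M(f; V_n) - M(f; Q) = o(1)$, and combining with the previous step,
$$\rho_n(f) = n(\tilde\sigma_n - \sigma)(f) + \bigl[M(f; V_n) - M(f; Q)\bigr] + o(1) \longrightarrow c\,\omega_0(f),$$
where the $\delta_0$ contribution drops since $f$ vanishes near $0$.

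The main obstacle lies in verifying continuity of the subleading correction $M(f; V)$ under the $n$-dependent perturbation, despite the logarithmic singularity of $G$ at $0$. Since $f$ vanishes in a fixed neighbourhood of $0$, the relevant test set is a compact region where $G$ (and hence $V_n$) is bounded and smooth, and the continuity follows from the explicit bulk/boundary form of $M$ recorded in \cite{AHM2}. A secondary technicality is that for $c>0$ the measure $\tilde\sigma_n$ actually exhibits a small hole around $0$ rather than a literal $-\delta_0$ atom, but this does not affect $\tilde\sigma_n(f)$ when $f$ vanishes in a fixed neighbourhood of $0$.
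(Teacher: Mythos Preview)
Your approach is genuinely different from the paper's, and it has a real gap in the equilibrium-measure step. The paper does not go through $\tilde\sigma_n$ at all: it observes that $V_n=Q-\frac{1}{n}(-2cG)$ is exactly the kind of $O(1/n)$ perturbation treated by the variational method in \cite{AM}, so $\rho_n(f)$ is a limiting covariance. Theorem~1.3 of \cite{AM} then gives directly
\[
\lim_{n\to\infty}\rho_n(f)=\tfrac{c}{2}\int_S \nabla f\cdot(-\nabla G)\,dA,
\]
using that $G^S\equiv 0$ outside $S$ (since $G=0$ on $\partial S$). One integration by parts, together with $f\equiv 0$ near $0$, turns this into $-\tfrac{c}{2\pi}\int_{\partial S} f\,\partial_n G\,ds=c\,\omega_0(f)$.

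The gap in your route is the claim $n(\tilde\sigma_n-\sigma)=c(\omega_0-\delta_0)$. Frostman equilibrium measures here are absolutely continuous, and on $\supp f\cap\Int S$ one has $\Lap V_n=\Lap Q$ (because $G$ is harmonic there), so $\tilde\sigma_n-\sigma$ has \emph{no} singular part on $\partial S$. The only way a contribution $c\,\omega_0$ can appear is through an $O(1/n)$ displacement of the outer droplet boundary $\partial\tilde S_n$ relative to $\partial S$, producing a thin boundary layer whose rescaled mass converges to $c\,\omega_0$. Your Euler--Lagrange/balayage heuristic does not establish this: balayage tells you which boundary measure matches potentials in the exterior, but it does not show that the positive Frostman minimizer actually deforms in that precise way. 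Proving this requires a quantitative droplet-perturbation analysis that you do not supply. A second, smaller gap is that the expansion $\tilde{\bfR}_n(f)=n\tilde\sigma_n(f)+M(f;V_n)+o(1)$ in \cite{AHM2,AM} is proved for fixed smooth potentials, not for $n$-dependent potentials with a logarithmic singularity; invoking it uniformly in $n$ needs justification. The paper's covariance argument avoids both issues in one stroke.
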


\begin{proof} The assumptions on $f$ imply that we can apply the boundary fluctuation theorem \ref{crate}. Also let $G$ be the Green's function in \eqref{gu}, so $G^S=0$ in $\C\setminus S$.

The asymptotic
formula for the variance of fluctuations in Theorem \ref{crate} implies that, for
the test-functions $f$ under consideration,
\begin{align*}
\lim_{n\to\infty}\rho_n(f)&=\frac c {2}\int_S \nabla f\bigcdot(-\nabla G)\, dA
\\
&=-\frac c {2\pi}\int_{\d S}f\frac {\d G}{\d n}\, ds +2c\int f\Lap G\, dA\\
&=-\frac c {2\pi}\int_{\d S}f\frac {\d G}{\d n}\, ds,
\end{align*}
since $f$ vanishes near $0$. Here $\d/\d n$ is differentiation in the outwards normal direction.
We refer to the argument in \cite[p. 76]{AHM2} for details about the calculation.

The lemma follows, since $-\frac 1 {2\pi}\frac {\d G}{\d n}\, ds$ is the harmonic measure $\omega_0$
(see \cite{GM}).
\end{proof}

Combining lemmas \ref{pelm} and \ref{vott}, we conclude the proof of
part \ref{fi} of Theorem \ref{thm1}.

Now assume that the external potential $V_n$ has the pure $\log$-form, i.e.,
$$V_n=Q-(2c/n)\ell,\qquad \ell(\zeta)=\log|\zeta|.$$
In this case, we have $\rho_n = \rho_n^\ell$. Let $G_\infty$ be the Green's function of
$\C\setminus S$ with pole at $\infty$ (so $G_\infty(\zeta)\sim\log|\zeta|$ as $\zeta\to\infty$). We consider
$G_\infty$ as being extended to $\C$ in some smooth way. Note that
$$\ell^S(\zeta)=\ell(\zeta)-G_\infty(\zeta),\qquad \zeta\in \C\setminus S,$$
and that $\ell^S$ is harmonic on $\hat{\C}\setminus S$.

We now fix a function $f\in C_0^\infty$ which vanishes near $0$ and apply the result of \cite{AM} as in the
proof of Lemma \ref{pelm}. The result is this time that
\begin{align*}\lim_{n\to\infty}(\tilde{\bfR}_n-\bfR_n)(f)&=\frac c {2}\int_S \nabla f\bigcdot\nabla \ell
+\frac c {2}\int_{S^c} \nabla f^S\bigcdot\nabla \ell^S\\
&=\frac c {2\pi}\int_{\d S}f\frac {\d G_\infty}{\d n}\, ds=c\int_{\d S}f\, d\omega_\infty,
\end{align*}
where $\omega_\infty$ is harmonic measure of $\C\setminus S$ evaluated at $\infty$. Combining with Lemma
\ref{vott}, we conclude the proof of part \ref{se} of Theorem \ref{thm1}.
\end{proof}

\section{A Central Limit Theorem}\label{CLT}
In this section we prove Theorem \ref{mt7}.
Our strategy is to first give the proof in the simplified model Mittag-Leffler case, and after that we will
adapt the proof to a general algebraic insertion potential.

\begin{lem} \label{voxda} Suppose that $V_n^{\sharp}$ is the model Mittag-Leffler potential
\begin{equation}\label{short}V_n^{\sharp}=|\zeta|^{2\lambda}-\frac {2c} n\log|\zeta|.\end{equation}
Then the random variables
$$X_n=\tfrac 2 {\sqrt{\log n}}(\tr_n\ell-\bfE_n \tr_n\ell)$$
converge in distribution to the centered normal distribution with
variance $1/\lambda$.
\end{lem}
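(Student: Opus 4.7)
The plan is to exploit the rotational symmetry of the model Mittag-Leffler potential to reduce the statement to a classical central limit theorem for an independent (but not i.i.d.) sum. Since the weight $e^{-nV_n(\zeta)}=|\zeta|^{2c}e^{-n|\zeta|^{2k}}$ is radial, Kostlan's theorem for rotation-invariant determinantal Coulomb gases gives that the multiset $\{|\zeta_j|^2\}_{j=1}^n$ is equidistributed with an unordered tuple $\{Y_0,Y_1,\ldots,Y_{n-1}\}$ of \emph{independent} random variables, where $Y_j$ has density proportional to $y^{j+c}e^{-ny^k}$ on $(0,\infty)$. Setting $U_j=nY_j^k$, a straightforward change of variables shows $U_j\sim\mathrm{Gamma}(\alpha_j,1)$ with $\alpha_j=(j+c+1)/k$. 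Since $\tr_n\ell=\sum_j\log|\zeta_j|^2$ and $\log|\zeta_j|^2=k^{-1}(\log U_j-\log n)$, I obtain the distributional identity
\[
\fl_n\ell \;\stackrel{d}{=}\; \frac{1}{k}\sum_{j=0}^{n-1}\bigl(\log U_j-\psi(\alpha_j)\bigr),
\]
a sum of independent centered random variables.

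Next I would compute the variance. The identity $\Var(\log U_j)=\psi'(\alpha_j)$ together with the trigamma asymptotics $\psi'(\alpha)=\alpha^{-1}+O(\alpha^{-2})$ as $\alpha\to\infty$, and boundedness of $\psi'$ on any compact subset of $(0,\infty)$, give
\[
\Var(\fl_n\ell)=\frac{1}{k^2}\sum_{j=0}^{n-1}\psi'(\alpha_j)=\frac{1}{k}\sum_{j=0}^{n-1}\frac{1}{j+c+1}+O(1)=\frac{\log n}{k}+O(1),
\]
so $\Var(X_n)\to 1/k$. For the asymptotic normality, I would verify the Lyapunov condition with exponent $3$. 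For large $\alpha_j$ the Gamma random variable $U_j$ concentrates on a window of width $\sqrt{\alpha_j}$ around $\alpha_j$; a change of variables $U_j=\alpha_j(1+t/\sqrt{\alpha_j})$ and a Taylor expansion of $\log$, together with the Gaussian-like tails of the rescaled Gamma density, yield $\bfE|\log U_j-\psi(\alpha_j)|^3=O(\alpha_j^{-3/2})$ for $\alpha_j\geq 1$, while only finitely many $j$ contribute $O(1)$ each. Hence
\[
\sum_{j=0}^{n-1}\bfE\bigl|\log U_j-\psi(\alpha_j)\bigr|^3 \;=\; O\!\Bigl(\sum_{j\geq 1}j^{-3/2}\Bigr)\;=\;O(1),
\]
whereas $(\Var(\fl_n\ell))^{3/2}\sim(\log n/k)^{3/2}$. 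The ratio tends to zero, so the Lyapunov CLT delivers $X_n\Rightarrow N(0,1/k)$.

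The main obstacle is recalling and applying Kostlan's theorem in the precise form needed---the elementary but slightly fiddly Vandermonde/permanent calculation identifies the density of $\{|\zeta_j|^2\}$ with that of an unordered tuple of independent variables, and one must track the index shift and the normalization so that the shape parameters $\alpha_j$ come out correctly. Once independence is in hand, the remainder is a bookkeeping exercise with the digamma/trigamma asymptotics and standard Gamma moment estimates.
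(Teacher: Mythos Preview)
Your proof is correct and takes a genuinely different route from the paper. The paper proceeds via the cumulant generating function: it perturbs the potential to $V_{n,t}=V_n-(2c_{n,t}/n)\log|\zeta|$ with $c_{n,t}=t/\sqrt{\log n}$, writes $F_n'(t)=\bfE_{n,t}[X_n]$ as an integral of $\ell$ against the difference of one-point functions, and then evaluates this explicitly using the closed form \eqref{nfor} for $R_n$. After the Gamma-integral identity $\int_0^\infty s^{x-1}e^{-s}\log s\,ds=\Gamma(x)\psi(x)$ this becomes a telescoping sum of digamma values; a Taylor expansion and the trigamma series from Lemma~\ref{prelem} give $F_n'(t)\to t/k$, hence $F_n(t)\to t^2/(2k)$.

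Your approach instead uses Kostlan's theorem to turn $\fl_n\ell$ into a sum of independent centered $\log$-Gamma variables and applies Lyapunov directly. Both routes hit the same trigamma sum $\sum_j\psi'((j+c+1)/k)\sim k\log n$ for the variance; yours is more elementary and self-contained for the radial model. The paper's method, though heavier here, is chosen with the extension to Theorem~\ref{mt7} in mind: for non-radial algebraic potentials Kostlan fails, but $F_n'(t)$ is still an integral of $\ell$ against $\bfR_{n,t}-\bfR_n$, and the present lemma then serves as the comparison term $C_{n,t}$ after replacing $R_n$ by the truncated Mittag-Leffler profile $R_n^\sharp$.

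One small technical point: the bound $\bfE|\log U_j-\psi(\alpha_j)|^3=O(\alpha_j^{-3/2})$ is true but your Taylor-plus-tails sketch is a bit informal. It is cleaner to run Lyapunov with exponent $4$: the fourth cumulant of $\log U_j$ is $\psi'''(\alpha_j)=O(\alpha_j^{-3})$, so the fourth central moment equals $3\psi'(\alpha_j)^2+\psi'''(\alpha_j)=O(\alpha_j^{-2})$, and $\sum_j O(\alpha_j^{-2})=O(1)$ against $(\Var\fl_n\ell)^2\sim(\log n)^2/k^2$.
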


We remind the reader of the notation $\ell(\zeta)=\log|\zeta|$,
$\tr_n\ell=\sum\ell(\zeta_j)$ where $\{\zeta_j\}_1^n$ is a random sample. $\bfE_n$ denotes the expectation with respect to the potential $V_n^{\sharp}$, see \eqref{bogi}.
We also recall that the rescaled one-point function of \eqref{short} equals to
\begin{equation} \label{nfor} R_n^{\sharp}(z)=\lambda\sum_{j=0}^{n-1}\frac {|z|^{2(j+c)}}
{\Gamma(\frac {j+c+1} \lambda)}\, e^{-|z|^{2\lambda}}.
\end{equation}
(See the example in Section \ref{imr}.)

\begin{proof}[Proof of the lemma.]
We shall apply the variational approach from \cite{AM} with the
scale of potentials
$$V_{n,t}^{\sharp}=V_n^{\sharp}-\frac {2c_{n,t}}n\log|\zeta|,\quad c_{n,t}=\frac t{\sqrt{\log n}}.$$
Here $t$ is a fixed real constant and $n$ is large enough so that $c_{n,t}>-1$.

The method uses the function
$$F_n(t)=\log\bfE_n\exp(tX_n),$$
where $\bfE_n$ is expectation with respect to the potential $V_n^{\sharp}$.

As in \cite{AHM2,AM} we note that
$$F_n'(t)=\frac {\bfE_n[X_ne^{tX_n}]}{\bfE_n[e^{tX_n}]}=
\frac {\int X_n e^{2c_{n,t}\sum\ell(\zeta_j)}\, d\bfP_n}
{\int e^{2c_{n,t}\sum\ell(\zeta_j)}\, d\bfP_n}=
\bfE_{n,t}[X_n],$$
where $\bfE_{n,t}$ is expectation with respect to potential $V_{n,t}^{\sharp}$.
Hence
\begin{equation}\label{xl}F_n'(t)=\frac 1 {\sqrt{\log n}}\int_\C\ell\cdot (\bfR_{n,t}^{\sharp}-\bfR_n^{\sharp})\, dA\end{equation}
where $\bfR_{n,t}^{\sharp}$ and $\bfR_{n}^{\sharp}$ are $1$-point functions with respect to $V_{n,t}^{\sharp}$ and $V_{n}^{\sharp}$, respectively.

Inserting explicit expressions (see \eqref{nfor}) we obtain
\begin{align*}
F_n'(t)
&=\frac {2\lambda} {\sqrt{\log n}}\int_\C \ell(zn^{-1/2\lambda})\sum_{j=0}^{n-1}(\frac
{(|z|^2)^{j+c+c_{n,t}}}{\Gamma(\frac{j+1+c+c_{n,t}}{\lambda})}-
\frac {(|z|^2)^{j+c}}{\Gamma(\frac{j+c+1}{\lambda})})e^{-|z|^{2\lambda}}\, dA(z)\\
&=\frac {\lambda^{-1}} {\sqrt{\log n}}\int_0^\infty(\log s-\log n)\sum_{j=0}^{n-1}(
\frac {s^{\frac{j+1+c+c_{n,t}}{\lambda}-1}}{\Gamma(\frac{j+1+c+c_{n,t}}{\lambda})}
-\frac {s^{\frac{j+1+c}{\lambda}-1}}{\Gamma(\frac{j+1+c}{\lambda})})e^{-s}\, ds.
\end{align*}

Let $\psi$ be the polygamma function,
$$\psi(x)=\frac {\Gamma'(x)}{\Gamma(x)}=\frac 1 {\Gamma(x)}\int_0^\infty s^{x-1}
e^{-s}\log s\, ds.$$
The computations above show that
\begin{equation*}F_n'(t)=\frac {\lambda^{-1}} {\sqrt{\log n}}\sum_{j=1}^{n}(\psi(
\frac{j+c+c_{n,t}}{\lambda})-\psi(\frac{j+c}{\lambda})).\end{equation*}
We now use Taylor's formula to write
\begin{equation*}\psi(\frac{j+c+c_{n,t}}{\lambda})-\psi(\frac{j+c}{\lambda})=
\frac{c_{n,t}}{\lambda}\psi'(\frac{j+c}{\lambda})+\frac{1}{2}
(\frac {c_{n,t}}{\lambda})^2\psi''(\xi_j),
\end{equation*}
where $\xi_j$ is some number between $(j+c)/\lambda$ and $(j+c+c_{n,t})/\lambda$, and have

\begin{align}\label{by}
F_n'(t) = \frac{t}{\lambda^2\log n}\sum_{j=1}^{n}\psi'(\frac{j+c}{\lambda}) +\frac{t^2}{2(\lambda^2\log n)^{3/2}}\sum_{j=1}^{n}\psi''(\xi_j)+o(1).\end{align}
To proceed with this,
we note the following lemma.

\begin{lem} \label{prelem} $\psi$ and $\psi'$ have the series expansions
$$\psi(x+1)=-\gamma+\sum_{k=1}^\infty(\frac 1 k-\frac 1 {k+x})
\quad \text{and}\quad \psi'(x+1)=\sum_{k=1}^\infty\frac 1 {(k+x)^2},$$
for $x\not\in\{-1,-2,\ldots\}$, where $\gamma$ is the Euler constant.
\end{lem}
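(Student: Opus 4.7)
The plan is to derive both identities from the Weierstrass product representation of the Gamma function and the standard functional equation $\psi(x+1)=\psi(x)+1/x$.

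First I would recall the Weierstrass canonical product
\[
\frac{1}{\Gamma(x)}=x\,e^{\gamma x}\prod_{k=1}^\infty\left(1+\frac{x}{k}\right)e^{-x/k},
\]
which converges locally uniformly on $\C\setminus\{0,-1,-2,\ldots\}$. Taking the logarithmic derivative term by term (legitimate since the logarithmic series converges locally uniformly after the $e^{-x/k}$ compensators) gives
\[
-\psi(x)=\frac{1}{x}+\gamma+\sum_{k=1}^\infty\left(\frac{1}{k+x}-\frac{1}{k}\right),\qquad x\notin\{0,-1,-2,\ldots\}.
\]
This is the standard Gauss expansion of the digamma function.

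Next I would use the functional identity $\Gamma(x+1)=x\Gamma(x)$, which upon logarithmic differentiation yields $\psi(x+1)=\psi(x)+1/x$. Substituting the expansion above, the singular term $-1/x$ cancels and we obtain
\[
\psi(x+1)=-\gamma+\sum_{k=1}^\infty\left(\frac{1}{k}-\frac{1}{k+x}\right),\qquad x\notin\{-1,-2,\ldots\},
\]
which is the first claimed identity. The summands are $O(1/k^2)$, so the series converges absolutely and locally uniformly in $x$ on compact subsets avoiding the negative integers.

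Finally, the termwise derivative of $\bigl(1/k-1/(k+x)\bigr)$ is $1/(k+x)^2$, and this differentiated series converges locally uniformly on the same domain (being dominated by $\sum 1/k^2$ on any compact set). Hence termwise differentiation is justified and gives
\[
\psi'(x+1)=\sum_{k=1}^\infty\frac{1}{(k+x)^2},
\]
which is the second claimed identity. There is no real obstacle here; the only point requiring a word of care is the justification of termwise differentiation, which is immediate from the $O(1/k^2)$ tail estimate.
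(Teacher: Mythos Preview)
Your argument is correct and is essentially the same as the paper's, which simply cites the Weierstrass product for $\Gamma(x+1)$ and declares the lemma an immediate consequence. The only cosmetic difference is that you start from the product for $1/\Gamma(x)$ and then invoke $\psi(x+1)=\psi(x)+1/x$, whereas the paper's form of the product already has the shift built in; the content is identical.
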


The proof of the lemma is an immediate consequence of Weierstrass' form of the Gamma function,
$$\Gamma(x+1)=e^{-\gamma x}\prod_{k=1}^\infty(1+\frac x k)^{-1}e^{x/k}.$$

\medskip

Since $\psi'$ is decreasing for $x>0$, we obtain
\begin{equation}\label{intcom}
\int_{1}^{n+1} \psi'(\frac{t+c}{\lambda})\,dt \leq \sum_{j=1}^{n}\psi'(\frac{j+c}{\lambda})\leq \psi'(1) + \int_{1}^{n} \psi'(\frac{t+c}{\lambda})\, dt.
\end{equation}
Lemma \ref{prelem} shows that
\begin{equation}\label{estpg}
\psi(x+1)=\log x+\frac 1 {2x}+O(x^{-2})\quad
\text{and}\quad \psi'(x+1)=\frac 1 x+O(x^{-2})
\end{equation}
as $x\to\infty$. Then it follows from \eqref{intcom} and \eqref{estpg} that
$$\frac{t}{\lambda^2\log n}\sum_{j=1}^{n}\psi'(\frac{j+c}{\lambda}) = \frac{t}{\lambda} +o(1) \quad \text{as} \quad n\to \infty.$$
We also have
$$\frac {t^2}{2(\lambda^2\log n)^{3/2}}\sum_{j=1}^{n}\psi''(\xi_j)\to 0\quad \text{as}\quad
n\to\infty$$
since $\psi''(x)= -1/x^2 + O(x^{-3})$ for large $x$.
By \eqref{by} we now obtain that $F_n'(t)\to t/\lambda$ as $n\to\infty$, and it is easy to
see that our estimates give locally uniform convergence on $\R$. Since $F_n(0)=0$, we conclude
that $F_n(t)=t^2/2\lambda$ as $n\to\infty$, i.e.,
$$\bfE_n e^{tX_n}\to e^{t^2/2\lambda}.$$
It is well-known that this implies convergence in distribution to the normal distribution with mean zero and variance $1/\lambda$.
\end{proof}

We now generalize to an arbitrary potential of the form
\begin{equation}\label{macl}V_n(\zeta)=|\zeta|^{2\lambda}+h(\zeta)-\frac {2c} n\log|\zeta|\end{equation}
where $h$ is a harmonic polynomial (in some neighbourhood of the droplet). Let $\bfR_n$ and $R_n$ denote the $1$-point function and the rescaled one with respect to $V_n$. We know that the rescaled $1$-point function $R_n$ converges to
$$R(z)=\lambda\cdot E_{\frac 1 \lambda,\frac {1+c}\lambda}(|z|^2)e^{-V_0(z)},\qquad (V_0=|z|^{2\lambda}-2c\log|z|).$$
However, in order to apply the argument in Lemma \ref{voxda}, it will be better to work with the $1$-point function $$R_n^{\sharp}(z)=\lambda\sum_{j=0}^{n-1}\frac {|z|^{2(j+c)}}{\Gamma(\frac {j+c+1}\lambda)}e^{-|z|^{2\lambda}}$$
with respect to $V_n^{\sharp}(\zeta) = |\zeta|^{2\lambda}-\frac{2c}{n}\log|\zeta|$, which is the model Mittag-Leffler case considered in Lemma \ref{voxda}.
Note also that Theorem \ref{mt22} implies the convergence $R_n-R_n^{\sharp}\to 0$ in the sense of distributions on $\C$ and locally uniformly on $\C^*$.

Now fix a suitable $t\in\R$ and put
$$c_{n,t}=\frac t{\sqrt{\log n}}.$$
We consider the perturbed potential
$$V_{n,t}=V_n-\frac {2c_{n,t}}n\log|\zeta|$$
and we write $\bfE_{n,t}$, $\bfR_{n,t}$ for the corresponding expectation and $1$-point function, respectively. We also write $R_{n,t}$, $R_{n,t}^{\sharp}$ for the corresponding rescaled 1-point functions, i.e., $R_{n,t}$ and $R_{n,t}^{\sharp}$ are the rescaled 1-point functions associated with the potentials $V_{n,t}$ and $V_{n,t}^{\sharp}$, respectively.

As before, we introduce
$$F_n(t)=\log\bfE_n\exp(tX_n)$$
where
$$X_n=\tfrac 2 {\sqrt{\log n}}(\tr_n\ell-\bfE_n\tr_n \ell),\qquad (\ell(\zeta)=\log|\zeta|).$$
We know that
$F_n'(t)=\bfE_{n,t}(X_n),$
so
\begin{equation}\label{ex1}F_n'(t)=\frac 1 {\sqrt{\log n}}\int_\C
2\ell(\zeta)\cdot (\bfR_{n,t}(\zeta)-\bfR_n(\zeta)).\end{equation}

Consider the set $\calR_n=\{\zeta\in\Int S;\, |\zeta|>r_n(\log n)^{\frac{1}{\lambda}},\, \dist(\zeta,\d S)>r_n(\log n)^{\frac{1}{\lambda}}\}$.
We shall use the uniform estimate
\begin{equation}\label{oj}\sup_{\zeta\in \calR_n}|\bfR_n(\zeta)-n\Lap Q_0(\zeta)|\le Ce^{-\alpha\log^{2}n}\end{equation}
with some constants $C,\alpha>0$. The detailed proof
of \eqref{oj} depends on an adaptation of the technique of approximate Bergman projections, which is postponed to the next section (see Theorem \ref{asymthm} below).

By the same token, the obvious counterpart to \eqref{oj} is true also for the difference $\bfR_{n,t}-n\Lap Q_0$, so we obtain the result that
\begin{equation}\label{boxer}\sup_{\zeta\in\calR_n}|\bfR_{n,t}(\zeta)-\bfR_n(\zeta)|\le
Ce^{-\alpha\log^{2}n}.\end{equation}

In the vicinity of the boundary $\d S$, we do not have such a strong uniform control, but due to our discussion of balayages in Section \ref{sec5}
we know that $\bfR_{n,t}-\bfR_n\sim c_{n,t}\omega_\infty$ there,
where $\omega_\infty$ is harmonic measure of $\d S$ evaluated at $\infty$.

Outside a neighborhood of $S$, the one point function is controlled by the exterior estimate, see e.g., \cite[Section 4.1.1]{AM}. For any $m>0$, there exists a constant $C_m$ such that
$\bfR_{n,t}(\zeta) \leq C_m n^{- m} (1 + |\zeta|)^{-3}$ for all $\zeta$ outside a fixed neighborhood of $S$.

Combining these asymptotic estimates, we find that
\begin{equation}\label{goodest}\int_{|\zeta|>r_n(\log n)^{\frac{1}{\lambda}}}|\ell(\zeta)|\cdot |\bfR_{n,t}(\zeta)-\bfR_n(\zeta)|=O(\tfrac 1{\sqrt{\log n}}).\end{equation}
Here the $O$-constant is proportional to $|t|$. The convergence $\bfR_{n,t}-\bfR_n \to c_{n,t}(\omega_\infty-\delta_0)$ shown in Theorem \ref{thm1} gives
\begin{equation}\label{finest} \frac{2\log r_n}{\sqrt{\log n}}
\int_{|\zeta|>r_n(\log n)^{\frac{1}{\lambda}}} (\bfR_{n,t}-\bfR_n) = c_{n,t}\frac{2\log r_n}{\sqrt{\log n}}+o(1) = -\frac{t}{\lambda} +o(1).
\end{equation}
Here, the error bound $o(1)$ is obtained by the error bound of the convergence in Theorem \ref{crate}.

Let us denote
$$J_{n,t}=\frac 2 {\sqrt{\log n}}\int_{|\zeta|\le r_n(\log n)^{\frac{1}{\lambda}}}\ell(\zeta)\cdot(\bfR_{n,t}(\zeta)-\bfR_n(\zeta)).$$
By \eqref{goodest} we have that
$$|F_n'(t)-J_{n,t}|\le\frac C{\log n}.$$
Let us introduce the function
$$
\ell_n(z):=\tfrac {2\log|z|}{\sqrt{\log n}}\, \chi_{D(0;(\log n)^{{1}/{\lambda}})}(z).$$
Changing variables in \eqref{ex1} by $\zeta=r_nz$ and the asymptotic expansion \eqref{finest} give
\begin{align*}
F_n'(t)&=\frac 2 {\sqrt{\log n}}\int_{|z|\le(\log n)^{\frac{1}{\lambda}}} \log|r_n z|\cdot(R_{n,t}(z)-R_n(z))+O(1/\log n)\\
&=(R_{n,t}-R_n)(\ell_n) + \frac{t}{\lambda} +o(1).
\end{align*}
We split the last expression as
\begin{align*}(R_{n,t}-R_n)(\ell_n)&=(R_{n,t}-R_{n,t}^\sharp)(\ell_n)-
(R_n-R_n^\sharp)(\ell_n)+
(R_{n,t}^\sharp-R_n^\sharp)(\ell_n)\\
&=A_{n,t}-B_{n}+C_{n,t}.
\end{align*}
We prove by Lemma \ref{voxda} that
$$C_{n,t}= o(1) \quad \text{as}\quad n\to\infty.$$
Indeed, by the approximations above applied to suitable model Mittag-Leffler ensembles, we obtain
$$\frac{2}{\sqrt{\log n}}\int_{\C} \log|r_nz|\cdot(R^{\sharp}_{n,t}(z)-R_n^{\sharp}(z)) = C_{n,t}+\frac{t}{\lambda}+o(1)$$ and
the integral converges to $t/\lambda$ by Lemma \ref{voxda}.

We now want to show that $A_{n,t},B_n\to 0$ as $n\to\infty$. For this, we recall the following estimates, which are proved in \cite{AS2}.

 First of all, note that $R_n - R_n^\sharp \to 0$ and $R_{n,t}-R_{n,t}^\sharp\to 0$ locally uniformly on $\C^*$ and in the sense of distribution on $\C$. (See Theorem \ref{mt2} and Theorem \ref{mt22}.) Also note that Theorem \ref{mt1} implies that these $1$-point functions are dominated near the origin ($R_n(z)\leq C|z|^{2c}$, $|z|\leq 1$). Thus, there is no problem to estimate the integral of
 $\ell_n\cdot (R_n-R_n^\sharp)$ over a large compact disk $|z|\le M$.

 Suppose now that $M\le|z|\le (\log n)^{\frac{1}{\lambda}}$. There are then, again by \cite[Theorem 4]{AS2}, constants $C,\alpha>0$ such that (i)
$|R_n(z)-\Lap Q_0(z)|\le C|z|^{2\lambda-2}e^{-\alpha|z|^{2\lambda}}$, (ii)
 $|R_n^\sharp(z)-\Lap Q_0(z)|\le C|z|^{2\lambda-2}e^{-\alpha|z|^{2\lambda}}$, (iii)
 $|R(z)-\Lap Q_0(z)|\le C|z|^{2\lambda-2}e^{-\alpha|z|^{2\lambda}}$.

All in all, using that $|\ell_n(z)|=O(\log\log n/\sqrt{\log n})$ when $M\le |z|\le (\log n)^{\frac{1}{\lambda}}$, we obtain the estimate
$$|B_n|=o(1)+o(1)\int_{M\le |z|\le (\log n)^{\frac{1}{\lambda}}}|z|^{2\lambda}e^{-\alpha|z|^{2\lambda}},$$
i.e. $B_n\to 0$ as $n\to\infty$. Similarly, $A_{n,t}\to0$ as $n\to\infty$.

\section{Asymptotics for the 1-point function} \label{asyl}

In this section, we will be dealing with the class of \textit{generalized Hele-Shaw potentials}, by which we mean potentials of the form
\begin{equation}\label{genhs}V_n(\zeta)=|\zeta|^{2\lambda}-\tfrac {2c}n\log|\zeta|+2\re H(\zeta)\end{equation}
 where $c>-1$ and where $H$ is holomorphic in a neighbourhood of the droplet. We assume, as always, that $0$ is an interior point of the droplet.

Below we will denote by $r_n=n^{-1/2\lambda}$.
We define for each $n$ a set $\calR_n\subset \Int S$ of ``regular bulk points'' by
$$
\calR_n=\{\zeta\in\Int S;\, |\zeta|>r_n(\log n)^{\frac{1}{\lambda}},\quad \dist(\zeta,\d S)>r_n(\log n)^{\frac{1}{\lambda}}\}.$$

We have the following theorem, which generalizes a result from the ``ordinary'' Hele-Shaw case $Q=|\zeta|^2+2\re H(\zeta)$ (see \cite[Theorem 2.2]{A2}).

\begin{mth} \label{asymthm} If $\zeta\in \calR_n$ then there are numbers $C,\alpha>0$ such that
$$|\bfR_n(\zeta)-n\lambda^2|\zeta|^{2\lambda-2}|\le Ce^{-\alpha\log^{2}n}.$$
\end{mth}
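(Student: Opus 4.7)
The plan is to apply the method of approximate Bergman projections (as in \cite{AHM2,AKM} and, for the polynomial Hele-Shaw model, \cite{A2}), adapted to handle both the logarithmic singularity at the origin and the possibility that $\Lap V_n$ degenerates near $0$. We write $\bfR_n(\zeta_0)=\bfk_n(\zeta_0,\zeta_0)\,e^{-nV_n(\zeta_0)}$ where $\bfk_n$ is the reproducing kernel of $\Pol(n-1)\subset L^2(e^{-nV_n}\,dA)$; the goal is to construct, for each $\zeta_0\in\calR_n$, a polynomial $f\in\Pol(n-1)$ with $|f(\zeta_0)|^2 e^{-nV_n(\zeta_0)}=nk^2|\zeta_0|^{2k-2}+O(e^{-\alpha\log^{2k}n})$ and $\|f\|_{L^2(e^{-nV_n})}^2=1+O(e^{-\alpha\log^{2k}n})$, from which the estimate follows.

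First, I would build an explicit quasi-polynomial. Let $\calQ(\zeta,\eta)$ be the Hermitian polarization of $V_n$ in a neighbourhood of $\zeta_0$, i.e.~the unique function holomorphic in $\zeta$ and antiholomorphic in $\eta$ with $\calQ(\zeta,\zeta)=V_n(\zeta)$; this is well defined since $V_n$ is real-analytic on $D(\zeta_0;r_n\log n)\subset S\setminus\{0\}$. Because $\re H$ is harmonic, $\Lap V_n(\zeta)=k^2|\zeta|^{2k-2}$ on this disk, and Taylor expansion gives
$$nV_n(\zeta)-2n\,\re\calQ(\zeta,\zeta_0)=n\Lap V_n(\zeta_0)\,|\zeta-\zeta_0|^2+O\bigl(n|\zeta-\zeta_0|^3\bigr).$$
Let $\chi$ be a smooth cutoff with $\chi\equiv1$ on $D(\zeta_0;\tfrac12 r_n\log n)$ and $\chi\equiv0$ outside $D(\zeta_0;r_n\log n)$, and set
$$f_0(\zeta)=c_{\zeta_0}\,\chi(\zeta)\,e^{2n\calQ(\zeta,\zeta_0)},\qquad c_{\zeta_0}=\sqrt{n\Lap V_n(\zeta_0)}.$$
Then $f_0(\zeta_0)=c_{\zeta_0}$ and $|f_0(\zeta_0)|^2 e^{-nV_n(\zeta_0)}=nk^2|\zeta_0|^{2k-2}$ by construction.

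Second, I would correct $f_0$ into a true polynomial. Since $nV_n$ is strictly plurisubharmonic on $\supp\chi$ with $\Lap(nV_n)\ge n\Lap V_n(\zeta_0)(1+o(1))$, Hörmander's $L^2$-estimate gives a solution $u$ of $\dbar u=\dbar f_0$ with
$$\int_\C|u|^2 e^{-nV_n}\,dA\;\le\;\frac{1}{n\Lap V_n(\zeta_0)}\int_\C|\dbar f_0|^2 e^{-nV_n}\,dA.$$
The $\dbar$-data $\dbar f_0=c_{\zeta_0}\,(\dbar\chi)\,e^{2n\calQ(\zeta,\zeta_0)}$ is supported in the annulus $A=\{\tfrac12 r_n\log n\le|\zeta-\zeta_0|\le r_n\log n\}$, and on $A$ the weight factor $|e^{2n\calQ(\zeta,\zeta_0)}|^2 e^{-nV_n(\zeta)}=e^{-nV_n(\zeta_0)}\cdot e^{-n\Lap V_n(\zeta_0)|\zeta-\zeta_0|^2(1+o(1))}$ is bounded by
$$e^{-nV_n(\zeta_0)}\cdot\exp\!\bigl(-c\,n\,k^2|\zeta_0|^{2k-2}\,(r_n\log n)^2\bigr)\;\le\;e^{-nV_n(\zeta_0)}\cdot e^{-\alpha\log^{2k}n},$$
since $|\zeta_0|\ge r_n\log n$ and $nr_n^{2k}=1$, so $n|\zeta_0|^{2k-2}(r_n\log n)^2\ge nr_n^{2k}\log^{2k}n=\log^{2k}n$. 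Setting $f=f_0-u$, one verifies by the growth of $V_n$ at infinity that $f$ is (after a harmless truncation argument as in \cite{AKM}) an element of $\Pol(n-1)$ with $\|f\|^2=1+O(e^{-\alpha\log^{2k}n})$. A standard submean-value estimate applied to $|u|^2 e^{-nV_n}$ on a disk of radius $\sim 1/c_{\zeta_0}$ about $\zeta_0$ upgrades the $L^2$-bound on $u$ to the pointwise bound $|u(\zeta_0)|^2e^{-nV_n(\zeta_0)}\le C\,e^{-\alpha\log^{2k}n}$. Finally, the extremal characterization $\bfk_n(\zeta_0,\zeta_0)=\sup\{|g(\zeta_0)|^2/\|g\|^2 : g\in\Pol(n-1)\}$, together with the mass-one identity, pins down $\bfR_n(\zeta_0)$ to within $O(e^{-\alpha\log^{2k}n})$ of $|f_0(\zeta_0)|^2 e^{-nV_n(\zeta_0)}=nk^2|\zeta_0|^{2k-2}$.

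The main obstacle I anticipate is that $\Lap V_n$ is not bounded below by a fixed positive constant; it degenerates like $|\zeta|^{2k-2}$ near the origin, and the logarithmic singularity of $V_n$ sits at finite distance $\ge r_n\log n$ from $\zeta_0$. This forces one to track the precise scale $1/\sqrt{n\Lap V_n(\zeta_0)}\sim r_n/\log^{k-1}n$ of the local Gaussian and to verify that all remainder terms of the form $n|\zeta-\zeta_0|^3\cdot(\text{third derivatives of }V_n)$ on $\supp\chi$ are controlled uniformly for $\zeta_0\in\calR_n$—this is precisely what dictates the separation $r_n\log n$ from both $0$ and $\d S$. A parallel technical check is that the approximate kernel, which uses the \emph{local} polarization $\calQ$ rather than a globally defined one, interacts correctly with the boundary of $S$; this is handled by the exponential decay of the globally defined Bergman kernel off the droplet (cf.~\cite{AHM2,AS2}), which dominates any contribution from $\dist(\zeta_0,\d S)\gtrsim r_n\log n$ by the same $e^{-\alpha\log^{2k}n}$.
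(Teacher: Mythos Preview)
Your overall strategy (approximate Bergman projection plus H\"ormander $\dbar$) is right, but there is a genuine gap that costs you the super-polynomial precision $e^{-\alpha\log^{2k}n}$. The quasi-mode $f_0(\zeta)=c_{\zeta_0}\chi(\zeta)e^{n\calQ(\zeta,\zeta_0)}$ with the \emph{constant} prefactor $c_{\zeta_0}=\sqrt{n\Lap V_n(\zeta_0)}$ does not have unit norm to the required accuracy. Indeed, with the exact polarization one has $|f_0|^2e^{-nV_n}= c_{\zeta_0}^2\,\chi^2\,e^{nV_n(\zeta_0)}e^{-n|\zeta^k-\zeta_0^k|^2}$, and after the change of variable $w=\zeta^k$ the integral $\int\chi^2 e^{-n|\zeta^k-\zeta_0^k|^2}\,dA(\zeta)$ equals $\frac{1}{n}\int e^{-n|w-w_0|^2}\frac{n}{k^2|\zeta(w)|^{2k-2}}\,dA(w)$. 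The factor $|\zeta(w)|^{2k-2}$ varies across the Gaussian peak, and when $|\zeta_0|$ is near the inner edge $r_n\log n$ of $\calR_n$ this variation produces a relative error of order $1/\log^{k}n$, not $e^{-\alpha\log^{2k}n}$. Consequently $\|f\|^2=1+O(1/\log^{k}n)$ at best, and the extremal characterization then only yields $\bfR_n(\zeta_0)= nk^2|\zeta_0|^{2k-2}(1+O(1/\log^{k}n))$. A second, related issue is that the extremal characterization gives only the lower bound on $\bfk_n(\zeta_0,\zeta_0)$; your appeal to ``the mass-one identity'' does not by itself supply the matching upper bound.

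The paper repairs both points by exploiting the specific algebra of $|\zeta|^{2k}$. First, it takes as approximate kernel $\bfL_n^*(\zeta,\eta)=n\,k^2(\zeta\bar\eta)^{k-1}\chi_n(\eta)\,e^{nV_n(\zeta,\eta)}$, i.e.\ with the \emph{variable} prefactor $k^2(\zeta\bar\eta)^{k-1}$ (the polarization of $\Lap V_n$) rather than the constant $k^2|\zeta_0|^{2k-2}$. Second, it replaces the ordinary Cauchy kernel by $1/(\zeta^k-\eta^k)$, noting that on the small disc $D(\zeta_0,3\delta_n)$ one has $\dbar_\eta\{1/(\zeta^k-\eta^k)\}=-\frac{1}{k\zeta^{k-1}}\delta_\zeta(\eta)$. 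Together with the exact identity $2\re V_n(\zeta,\eta)-V_n(\zeta)-V_n(\eta)=-|\zeta^k-\eta^k|^2$, this yields an \emph{exact} local reproducing formula $u(\zeta)=\langle u,\bfL^*_{n,\zeta}\rangle_{nV_n}+\mathrm{II}_n^*u(\zeta)$ for all holomorphic $u$, with the remainder $\mathrm{II}_n^*$ supported where $\dbar\chi_n\neq0$ and hence of size $O(e^{-\alpha\log^{2k}n})$. Applying this with $u=\bfL_{n,\eta}$ gives the two-sided estimate, and since $\bfL_n^*(\zeta,\zeta)e^{-nV_n(\zeta)}=nk^2|\zeta|^{2k-2}$ \emph{exactly} where $\chi_n=1$, no Taylor remainder ever enters. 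In short: your Taylor-based Gaussian comparison is what loses the precision; the missing idea is the exact reproducing identity built from $1/(\zeta^k-\eta^k)$ and the polarized Laplacian $k^2(\zeta\bar\eta)^{k-1}$.
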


Note that Theorem \ref{asymthm} in particular completes our  argument for the CLT in Section \ref{CLT} (i.e., we obtain the missing estimate \eqref{oj}).

Before discussing the proof, it is instructive to compare the result with other kinds of approximations.

Using a recursive scheme, reproduced in \cite{A2} in the present setting,
it is not hard to see that for $\zeta\in\Int S\setminus\{0\}$, the coefficients $b_j$ in the formal (Tian-Catlin-Zelditch type) expansion \begin{equation}\label{TCZ}
\bfR_n(\zeta)=nb_0(\zeta)+b_1(\zeta)+n^{-1}b_2(\zeta)+\cdots\end{equation}
are just $b_0(\zeta)=\Lap |\zeta|^{2\lambda}=\lambda^2|\zeta|^{2\lambda-2}$, $b_1=\frac 1 2 \Lap\log b_0=0$, and and $b_j=0$ for $j\ge 2$.
 It could thus be surmised that the approximation $\bfR_n\sim nb_0$ should hold to a very good accuracy in the domain $\calR_n$.
We shall prove that this is indeed the case, by adapting a method from \cite[Section 6]{A2}.

\smallskip
To prepare the ground, we fix a sequence $\zeta_n$ with $\zeta_n\in \calR_n$, and we put $$
\delta_n:=c_\lambda n^{-1/2}|\zeta_n|^{1-\lambda}\log n
$$
for a small enough constant $c_\lambda$ less than $1/\max\{\lambda,10\}$. If $\lambda =1$, the distance $\delta_n = O(n^{-1/2}\log n)$ is chosen so as to be ``sufficiently large'' compared with
the typical inter-particle distance $n^{-1/2}$.
In general, our choice of $\delta_n$ is used to deduce the inequality \eqref{gelog}, which will be seen to imply the exponential decay in Theorem \ref{asymthm}.

We can easily see that
$${\delta_n}/{|\zeta_n|}=c_\lambda n^{-1/2}|\zeta_n|^{-\lambda}\log n \leq c_\lambda,\quad \zeta_n \in \calR_n.$$
We also fix a sequence of cut-off functions $\chi_n$ with $\chi_n=1$ on $D(\zeta_n;2\delta_n)$, $\chi_n=0$ outside $D(\zeta_n;3\delta_n)$ and
$\|\dbar\chi_n\|_{L^2}\le C$ (independent of $n$).

Finally, when $\phi:\C\to \R_+$ is a suitable weight function, we denote the scalar product in the space $L^2(e^{-\phi})=L^2(e^{-\phi},dA)$ by
$$\langle f,g\rangle_\phi=\int_\C f\bar{g}e^{-\phi}\, dA.$$

Polarizing in the formula for $V_n$, we define for $\zeta,\eta\in D(\zeta_n,3\delta_n)$
$$V_n(\zeta,\eta)=\zeta^\lambda\bar{\eta}^\lambda+H(\zeta)+\bar{H}(\eta)-\tfrac c n(\log\zeta+\overline{\log}\,\eta),$$
where $\log$ is some determination of the logarithm in the disk $D(\zeta_n;3\delta_n)$. The main fact to remember below is that
$$2\re V_n(\zeta,\eta)-V_n(\zeta)-V_n(\eta)=-|\zeta^\lambda-\eta^\lambda|^2.$$

Following the idea in \cite[Section 6]{A2} we note that we can write
$$V_n(\zeta,\eta)-V_n(\eta,\eta)=(\zeta^\lambda-\eta^\lambda)\bar{\eta}^\lambda+\text{holomorphic in $\eta$}.$$

Next note that for each fixed $\zeta\ne 0$, we have in the sense of distributions on $D(\zeta_n,3\delta_n)$
$$\dbar_\eta(\tfrac 1 {\zeta^\lambda-\eta^\lambda})=
-\tfrac1 {\lambda\zeta^{\lambda-1}}\,\delta_\zeta(\eta).$$
(For if $\zeta,\eta\in D(\zeta_n,3\delta_n)$, then $\zeta^\lambda-\eta^\lambda=0\iff\zeta=\eta$.)

Now assume $\zeta\in D(\zeta_n;\delta_n)$. By Cauchy's formula, and the above, we have for each function
$u$ holomorphic and bounded in
$D(\zeta_n;3\delta_n)$,
\begin{align*}u(\zeta)&=\lambda\zeta^{\lambda-1}\int\frac {\dbar(u(\eta)\chi_n(\eta)e^{n(V_n(\zeta,\eta)-V_n(\eta,\eta))})}
{\zeta^\lambda-\eta^\lambda}\, dA(\eta)\\
&=n\int u(\eta)\chi_n(\eta)\lambda^2(\zeta\bar{\eta})^{\lambda-1}
 e^{n(V_n(\zeta,\eta)-V_n(\eta,\eta))}\, dA(\eta)\\
 &+\lambda\zeta^{\lambda-1}\int\frac {u(\eta)\dbar\chi_n(\eta)}{\zeta^\lambda-\eta^\lambda}
e^{n(V_n(\zeta,\eta)-V_n(\eta,\eta))}\, dA(\eta)\\
&=:\mathrm{I}_n^*u(\zeta)+\mathrm{II}_n^*u(\zeta).
\end{align*}

We now come to an important observation. Since $|\zeta-\eta|\ge \delta_n$ when $|\zeta-\zeta_n|\le\delta_n$ and $\dbar\chi_n(\eta)\ne 0$, we have by choosing $c_\lambda$ small enough and by Taylor's formula
\begin{equation}\label{gelog}
n|\zeta^\lambda-\eta^\lambda|^2\ge \const \,n |\zeta_n|^{2\lambda-2} \delta_n^{2}\ge 2\alpha\log^{2}n
\end{equation}
for some constant $\alpha>0$.

Using this, we now estimate the term $\mathrm{II}_n^*u(\zeta)$ as follows
\begin{align*}|\mathrm{II}^*_nu(\zeta)|&
\le C\int|\tfrac {u(\eta)\dbar\chi_n(\eta)}{\zeta-\eta}| e^{-n|\zeta^\lambda-\eta^\lambda|^2/2-nV_n(\eta)/2+nV_n(\zeta)/2}\, dA(\eta)\\
&\le Ce^{nV_n(\zeta)/2}\delta_n^{-1}e^{-\alpha\log^{2}n}\int_{|\eta-\zeta|\ge \delta_n}
|u(\eta)|e^{-nV_n(\eta)/2}|\dbar\chi_n(\eta)|\, dA(\eta)\\
&\le \delta_n^{-1}e^{-\alpha \log^{2}n}\|\dbar\chi_n\|_{L^2}e^{nV_n(\zeta)/2}\|u\|_{nV_n},
\end{align*}
where we used the Cauchy-Schwarz inequality.
We summarize our findings in the following lemma.

\begin{lem} \label{dif} There are constants $C,\alpha>0$ such that if $\zeta\in D(\zeta_n,\delta_n)$, then
$$|u(\zeta)-\mathrm{I}_n^*u(\zeta)|\le C\delta_n^{-1}e^{-\alpha\log^{2}n}\|u\|_{nV_n}e^{nV_n(\zeta)/2}.$$
\end{lem}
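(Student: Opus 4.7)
The plan is to combine the integral representation $u = \mathrm{I}_n^* u + \mathrm{II}_n^* u$ derived in the paragraphs just preceding the lemma with a Cauchy--Schwarz bound for the boundary term $\mathrm{II}_n^* u(\zeta)$. The key is that this term involves integration against $\dbar\chi_n$, which is supported in the annulus $2\delta_n \le |\eta - \zeta_n| \le 3\delta_n$, well separated from the point $\zeta\in D(\zeta_n,\delta_n)$.

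First I would rewrite the weight factor using the polarization identity
\[
2\re V_n(\zeta,\eta) - V_n(\zeta) - V_n(\eta) = -|\zeta^k - \eta^k|^2,
\]
so that, since $V_n(\eta,\eta) = V_n(\eta)$,
\[
\babs{e^{n(V_n(\zeta,\eta)-V_n(\eta,\eta))}} = e^{nV_n(\zeta)/2}\, e^{-nV_n(\eta)/2}\, e^{-n|\zeta^k-\eta^k|^2/2}.
\]
Next I would extract uniform Gaussian decay from the support condition. Because $\zeta_n \in \calR_n$ we have $|\zeta_n| > r_n\log n$, so $\zeta_n$ stays bounded away from the branching locus of $\zeta\mapsto\zeta^k$; hence in $D(\zeta_n,3\delta_n)$ the factorization $\zeta^k-\eta^k = (\zeta-\eta)(\zeta^{k-1}+\cdots+\eta^{k-1})$ yields $|\zeta^k-\eta^k| \ge c_1|\zeta-\eta|$ for some $c_1>0$ independent of $n$. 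When $\zeta\in D(\zeta_n,\delta_n)$ and $\eta \in \supp\dbar\chi_n$ this gives $|\zeta-\eta|\ge \delta_n$, whence
\[
n|\zeta^k-\eta^k|^2 \ge c_1^2 n \delta_n^{2k} \ge 2\alpha\,\log^{2k} n
\]
for a suitable $\alpha>0$, using the definition $\delta_n = c_k^{-1} r_n\log n$ together with $r_n^{2k} = n^{-1}$.

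Finally I would estimate the boundary term by
\[
|\mathrm{II}_n^* u(\zeta)| \le C\,e^{nV_n(\zeta)/2}\, e^{-\alpha\log^{2k} n} \int \frac{|u(\eta)|\,|\dbar\chi_n(\eta)|}{|\zeta-\eta|}\, e^{-nV_n(\eta)/2}\, dA(\eta),
\]
where the prefactor $k\zeta^{k-1}$ is absorbed into $C$ (bounded since $|\zeta_n|$ is controlled from above on the droplet). On the support of $\dbar\chi_n$ we have $|\zeta-\eta|^{-1}\le \delta_n^{-1}$, so pulling this factor outside and applying Cauchy--Schwarz against the weight $e^{-nV_n(\eta)}\,dA(\eta)$ produces $\delta_n^{-1}\,\|\dbar\chi_n\|_{L^2}\,\|u\|_{nV_n}$. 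The desired inequality then follows from $|u(\zeta)-\mathrm{I}_n^* u(\zeta)| = |\mathrm{II}_n^* u(\zeta)|$.

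The main technical point, though not an obstacle, is verifying the lower bound on $|\zeta^k-\eta^k|$ uniformly in $n$: this requires that $\zeta_n$ stay bounded away from $0$, which is precisely where the constraint $|\zeta_n|>r_n\log n$ in the definition of $\calR_n$ enters. The remaining estimates are straightforward applications of Cauchy--Schwarz and the elementary pointwise bound on $\dbar\chi_n$.
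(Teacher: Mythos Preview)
Your overall strategy matches the paper's exactly: decompose $u = \mathrm{I}_n^* u + \mathrm{II}_n^* u$, rewrite $|e^{n(V_n(\zeta,\eta) - V_n(\eta))}|$ via the polarization identity, use the separation $|\zeta - \eta| \ge \delta_n$ on $\supp\dbar\chi_n$ to extract Gaussian decay, and finish with Cauchy--Schwarz. The gap is in your justification of the lower bound on $n|\zeta^k - \eta^k|^2$.

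You assert that $|\zeta_n| > r_n\log n$ keeps $\zeta_n$ ``bounded away'' from the origin, and hence that $|\zeta^{k-1} + \cdots + \eta^{k-1}| \ge c_1$ with $c_1$ independent of $n$. But $r_n\log n = n^{-1/2k}\log n \to 0$, so $\zeta_n$ may drift toward the origin; when it does, the symmetric factor is only of order $k|\zeta_n|^{k-1}\to 0$, and no $n$-independent $c_1$ exists. Separately, even granting $|\zeta^k - \eta^k| \ge c_1|\zeta - \eta| \ge c_1\delta_n$, the next line would read $c_1^2 n\delta_n^{2}$, not $c_1^2 n\delta_n^{2k}$.

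The fix (what the paper means by ``Taylor's formula'') is to use the \emph{relative} bound $|\zeta_n| \ge c_k\delta_n$ rather than an absolute one. On $D(\zeta_n, 3\delta_n)$ one has $|\zeta^{k-1} + \cdots + \eta^{k-1}| \ge c\,|\zeta_n|^{k-1} \ge c\,(c_k\delta_n)^{k-1}$, whence $|\zeta^k - \eta^k| \ge c'\,\delta_n^{k-1}\,|\zeta - \eta| \ge c'\delta_n^k$. This yields $n|\zeta^k - \eta^k|^2 \ge (c')^2 n\delta_n^{2k} = (c')^2 c_k^{-2k}\log^{2k} n$, after which the rest of your argument goes through unchanged. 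So the role of the constraint $|\zeta_n|>r_n\log n$ is precisely the comparability $|\zeta_n|\ge c_k\delta_n$, not boundedness away from $0$.
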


We now define (for suitable points $\zeta,\eta$ near $\zeta_n$) the approximate kernel $\bfL_n^*$ by
$$\bfL_{n,\zeta}^*(\eta)=\bfL_{n}^*(\eta,\zeta)=n\chi_n(\eta)\lambda^2(\eta\bar{\zeta})^{\lambda-1}
e^{nV_n(\eta,\zeta)}.$$
Then the operator $\mathrm{I}_n^*u(\zeta)$ defined above is just
$$\mathrm{I}_n^*u(\zeta)=\langle u,\bfL_{n,\zeta}^*\rangle_{nV_n}.$$

Let $\bfL_n(z,w)$ be the reproducing kernel for the space $\Pol(n)$ with norm of $L^2(e^{-nV_n})$. By Lemma \ref{dif}, we have the estimate
$$|\bfL_{n,\eta}(\zeta)-\mathrm{I}_n^*\bfL_{n,\eta}(\zeta)|\le
C\delta_n^{-1}e^{-\alpha\log^{2}n}\|\bfL_{n,\eta}\|_{nV_n}e^{nV_n(\zeta)/2}.$$

The following simple lemma is an adaptation of e.g.~\cite[Lemma 3.1]{AKM}.

\begin{lem}\label{lemsubm}
 Let $\eta\in D(\zeta_n,\delta_n)$.
Suppose that $u$ is analytic in $D:=D(\eta,tn^{-{1}/{2}}|\eta|^{1-\lambda})$
and let $f=ue^{-nV_n/2}$. Then there is a constant $C$ depending only on $t$ such that
$|f(\eta)|^2\le Cn|\eta|^{2\lambda-2}\int_D|f|^2.$
\end{lem}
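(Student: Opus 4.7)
The plan is to exploit that on a disk of radius $t/\sqrt{n}$ the weight $e^{-nV_n}$ varies by only a bounded factor, reducing everything to the sub-mean value inequality for the holomorphic function $u$. The key observation is that $n\delta^2$, where $\delta=t/\sqrt{n}$ is the radius of $D$, equals $t^2$, so the weight is essentially frozen on $D$ up to a multiplicative constant depending only on $t$ (and on derivative bounds of $V_n$ that are uniform in the region of interest, since $\eta\in D(\zeta_n,\delta_n)$ keeps $D$ uniformly far from the singular point $0$ and from $\d S$).

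First I would fix $\eta$ and produce a local holomorphic ``complexification'' of $nV_n/2$. Concretely, set
\[
G_n(z) = \tfrac{n}{2}V_n(\eta) + n\,\d V_n(\eta)\,(z-\eta) + \tfrac{n}{2}\d^2V_n(\eta)\,(z-\eta)^2,
\]
which is holomorphic in $z$. A Taylor expansion gives $nV_n(z) = 2\re G_n(z) + n\Lap V_n(\eta)\,|z-\eta|^2 + O\bigl(n|z-\eta|^3\bigr)$ for $z$ near $\eta$. Since $|z-\eta|\le t/\sqrt{n}$ on $D$, we have $n|z-\eta|^2\le t^2$ and $n|z-\eta|^3\le t^3/\sqrt{n}$, and the relevant derivatives of $V_n$ are uniformly bounded on $D(\zeta_n,\delta_n)$ because this region is bounded away from $0$ and from $\d S$. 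Hence
\[
\bigl|\,nV_n(z)-2\re G_n(z)\,\bigr|\le C(t), \qquad z\in D,
\]
with $C(t)$ depending only on $t$.

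Second, I would apply the usual sub-mean value inequality to $|ue^{-G_n}|^2$. Since $ue^{-G_n}$ is holomorphic on $D$, the function $|ue^{-G_n}|^2$ is subharmonic, and because the disk $D$ has (normalized) area $t^2/n$,
\[
|u(\eta)|^2 e^{-2\re G_n(\eta)} \le \frac{n}{t^2}\int_D |u|^2 e^{-2\re G_n}\,dA.
\]
Substituting the estimate from the previous step on both sides, with $f=ue^{-nV_n/2}$,
\[
|f(\eta)|^2 = |u(\eta)|^2 e^{-nV_n(\eta)} \le e^{2C(t)}\frac{n}{t^2}\int_D |u|^2 e^{-nV_n}\,dA = e^{2C(t)}\frac{n}{t^2}\int_D |f|^2\,dA,
\]
which is the desired inequality with $C = t^{-2}e^{2C(t)}$ depending only on $t$.

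The only subtle point is to make sure the bound on $nV_n - 2\re G_n$ is genuinely uniform in $n$ and in $\eta\in D(\zeta_n,\delta_n)$. This follows since $V_n$ of the form \eqref{genhs} has $|\d^\alpha V_n(\zeta)|\le C_\alpha$ (with $\alpha$ a multi-index up to order $3$) uniformly for $\zeta$ in a fixed compact subset of $\C^*\cap \Int S$ contained in $\calR_n$ — the $\log|\zeta|/n$ term contributes only $O(1/n)$ times a bounded quantity there — and the third-order remainder is then $O(t^3/\sqrt n) = o(1)$. No further ingredients are needed, so I do not foresee any genuine obstacle in the argument.
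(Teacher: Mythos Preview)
Your argument is correct, but it uses a different mechanism from the paper's. The paper rescales $z\mapsto \eta+z/\sqrt{n}$ and multiplies $f$ by a Gaussian counter-weight: setting $F_n(z)=f(\eta+z/\sqrt{n})\,e^{a|z|^2/2}$, one computes $\Lap_z\log|F_n(z)|^2\ge -k^2|\eta+z/\sqrt{n}|^{2k-2}+a$, which is nonnegative for $a$ large (the harmonic and logarithmic pieces of $V_n$ drop out of $\Lap V_n$ entirely). Subharmonicity of $|F_n|^2$ then gives the mean-value bound directly. Your route instead peels off a holomorphic quadratic $G_n$ so that $ue^{-G_n}$ is holomorphic, and controls the discrepancy $nV_n-2\re G_n$ via a Taylor remainder. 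Both reduce to the sub-mean value inequality; the paper's version is slightly cleaner because it needs only that $\Lap V_n$ is bounded near $\eta$, whereas yours must control third-order derivatives of $V_n$ on $D$.

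On that last point your justification is a bit loose: you appeal to a ``fixed compact subset of $\C^*\cap\Int S$'', but $\eta\in D(\zeta_n,\delta_n)$ may approach $0$ at the rate $r_n\log n=n^{-1/2k}\log n$, so no such fixed compact set exists. The fix is easy. The third derivatives of $|\zeta|^{2k}$ and of $\re H$ are bounded on a neighbourhood of $S$, contributing $O(t^3/\sqrt{n})$ to the remainder; and the logarithmic piece of $nV_n$ is just $-2c\log|\zeta|$ (no factor $n$), whose third derivatives on $D$ are $O(|\zeta|^{-3})=O(n^{3/2k}/\log^3 n)$, giving a contribution $O(t^3 n^{3/2k-3/2}/\log^3 n)=o(1)$. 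With this correction your proof goes through.
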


\begin{proof} Let $a>0$ be a constant and form the function $$F_n(z)=f(\eta+z n^{-{1}/{2}}|\eta|^{1-\lambda})\cdot e^{a|z|^2/2}.$$ Then
$\Lap \log|F_n(z)|^2\ge -\lambda^2 |\eta|^{2-2\lambda}\cdot |\eta+zn^{-{1}/{2}}|\eta|^{1-\lambda}|^{2\lambda-2}+a>0$ for $|z|\le t$ if $a$ is large enough. This implies that $|F_n|^2$ is (logarithmically) subharmonic in $D(0,t)$. The following submean-value property
\begin{align*}
|F_n (0)|^2 \leq C' \int_{D(0,t)} |F_n(z)|^2 dA(z)
\end{align*}
implies the inequality
\begin{align*}
|f(\eta)|^2 &\leq C' \int_{D(0,t)} |f(\eta + zn^{-1/2}|\eta|^{1-\lambda})|^2 e^{a|z|^2}dA(z) \\
&\leq C \,n |\eta|^{2\lambda-2} \int_{D} |f(\zeta)|^2 dA(\zeta),
\end{align*}
where $C'$ and $C$ are some constants depending on $t$.
\end{proof}

If $\eta\in D(\zeta_n,\delta_n)$, then recalling that
$$\bfL_n(\eta,\eta)=\sup\{|u(\eta)|^2;\, u\in \Pol(n),\, \|u\|_{nV_n}\le 1\},$$ we get by Lemma \ref{lemsubm} that
\begin{align*}\|\bfL_{n,\eta}\|_{nV_n}^2=\bfL_n(\eta,\eta) \le Cn|\eta|^{2\lambda-2}e^{nV_n(\eta)}.\end{align*}
More precisely, Lemma \ref{lemsubm} implies that for all $u\in \Pol(n)$ with $\|u\|_{nV_n}\leq 1$, $$|u(\eta)|^2 \leq Cn|\eta|^{2\lambda-2}e^{nV_n(\eta)}\int_{D}|u|^2 e^{-nV_n} dA \leq Cn|\eta|^{2\lambda-2}e^{nV_n(\eta)},$$
which gives the bound of $\bfL_n(\eta,\eta)$. We conclude that
\begin{equation}\label{e11}|\bfL_n(\zeta,\eta)-\mathrm{I}_n^*\bfL_{n,\eta}(\zeta)|\le C\sqrt{n}\delta_n^{-1}|\eta|^{\lambda-1}e^{-\alpha\log^{2} n} e^{n(V_n(\zeta)+V_n(\eta))/2}.\end{equation}

We now note that
$$\overline{\mathrm{I}_n^*\bfL_{n,\zeta}(\eta)}=\mathrm{P}_n\bfL_{n,\eta}^*(\zeta),$$
where $\mathrm{P}_n$ is the polynomial Bergman projection,
$$\mathrm{P}_nf(\zeta)=\langle f,\bfL_{n,\zeta}\rangle_{nV_n}=
\int f(\eta)\bfL_n(\zeta,\eta)e^{-nV_n(\eta)}\, dA(\eta).$$

Applying \eqref{e11} now gives.

\begin{lem}
For $\zeta,\eta\in D(\zeta_n,\delta_n)$ we have
$$|\bfL_n(\zeta,\eta)-\mathrm{P}_n\bfL_{n,\eta}^*(\zeta)|\le Cn|\zeta_n|^{2\lambda-2}e^{-\alpha\log^{2} n}e^{n(V_n(\zeta)+V_n(\eta))/2}.$$
\end{lem}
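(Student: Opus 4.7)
My proposed approach is to deduce the inequality directly from \eqref{e11}, using the two identities that have just been established: the adjoint-type relation $\overline{\mathrm{I}_n^*\bfL_{n,\zeta}(\eta)} = \mathrm{P}_n\bfL_{n,\eta}^*(\zeta)$ and the Hermitian symmetry $\bfL_n(\zeta,\eta) = \overline{\bfL_n(\eta,\zeta)}$. Combining these two gives the clean identity
\begin{equation*}
\bfL_n(\zeta,\eta) - \mathrm{P}_n\bfL_{n,\eta}^*(\zeta) = \overline{\bfL_n(\eta,\zeta) - \mathrm{I}_n^*\bfL_{n,\zeta}(\eta)},
\end{equation*}
and since both $\zeta$ and $\eta$ lie in $D(\zeta_n,\delta_n)$, the roles of the two variables in \eqref{e11} may be freely interchanged. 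Passing to absolute values on both sides then yields the claimed pointwise bound.

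In this sense, all of the real analysis --- the $\dbar$-trick with the cutoff $\chi_n$, the Hermitian polarization that produces the off-diagonal Gaussian factor $e^{-n|\zeta^k-\eta^k|^2/2}$, and the Cauchy--Schwarz estimate of the boundary term $\mathrm{II}_n^*$ --- has already been carried out in deriving \eqref{e11}. The purpose of the present lemma is simply to recast that estimate in a form adapted to the true Bergman projection $\mathrm{P}_n$, which is what the next step of the argument will match against the explicit leading-order kernel $nk^2(\zeta\bar\eta)^{k-1}e^{nV_n(\zeta,\eta)}$ in order to conclude the asymptotic stated in Theorem \ref{asymthm}.

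The main (essentially cosmetic) obstacle is tracking the exponent $2\alpha$ in the statement versus the $\alpha$ produced directly by \eqref{e11}. This traces back to whether one keeps the geometric input $n|\zeta^k-\eta^k|^2 \ge 2\alpha\log^{2k}n$ intact or halves it when writing $e^{-n|\zeta^k-\eta^k|^2/2}$ in the proof of \eqref{e11}. If a genuine square-power improvement is really desired, the plan would be to combine the above with the orthogonal-projection identity $\bfL_n(\zeta,\eta)-\mathrm{P}_n\bfL_{n,\eta}^*(\zeta) = \mathrm{P}_n[\bfL_{n,\eta}-\bfL_{n,\eta}^*](\zeta)$ (valid since $\bfL_{n,\eta}\in\Pol(n-1)$) and apply $|\mathrm{P}_n g(\zeta)|\le \sqrt{\bfL_n(\zeta,\zeta)}\,\|g\|_{nV_n}$ with $\bfL_n(\zeta,\zeta)\le Cne^{nV_n(\zeta)}$; the squared integrand carries the full $e^{-n|w^k-\eta^k|^2}$, provided the exterior tail $|w-\zeta_n|>3\delta_n$ (where $\bfL_{n,\eta}^*$ vanishes) is controlled by the off-diagonal Bergman-kernel estimates from \cite{AS2}. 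In either reading, no new analytic input is required beyond what has already been done.
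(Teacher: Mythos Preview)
Your approach is essentially identical to the paper's: the paper's entire proof is the one-line ``Applying \eqref{e11} now gives,'' which unwinds to exactly the conjugation-and-swap identity you wrote down. Your observation about the exponent is correct as well---the direct application of \eqref{e11} yields $e^{-\alpha\log^{2k}n}$, not $e^{-2\alpha\log^{2k}n}$, so the factor $2$ in the stated lemma appears to be a typo; it is immaterial downstream, since the final proof of Theorem~\ref{asymthm} explicitly absorbs constants by shrinking~$\alpha$.
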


Now fix $\eta\in D(\zeta_n,\delta_n)$ and introduce the function
$$u_{n,\eta}(\zeta)=\bfL_{n,\eta}^*(\zeta)-\mathrm{P}_n\bfL_{n,\eta}^*(\zeta).$$
Observe that $u_{n,\eta}$ is the $L^2(e^{-nV_n})$-minimal
solution to the problem $\dbar u=\dbar \bfL_{n,\eta}^*$ and $u-\bfL_{n,\eta}^*\in \Pol(n)$.
In view of a standard H\"{o}rmander estimate in \cite[Lemma 5.2]{A2}, we infer that there is a constant $C$ such that
\begin{equation}\label{w-est}
\|u_{n,\eta}\|_{nV_n}\le Cn^{-{1}/{2}}|\eta|^{1-\lambda}\|\dbar\bfL_{n,\eta}^*\|_{nV_n}.\end{equation}

\begin{lem}
 Let $\zeta,\eta\in D(\zeta_n,\delta_n)$. Then
$$|u_{n,\eta}(\zeta)|\le Cn|\zeta_n|^{2\lambda-2}e^{-\alpha\log^{2}n}e^{n(V_n(\zeta)+V_n(\eta))/2}.$$
\end{lem}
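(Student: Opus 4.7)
The plan is to use the Hörmander bound \eqref{w-est} to control $\|u_{n,\eta}\|_{nV_n}$, then upgrade to a pointwise bound via the sub-mean-value lemma, noting that $u_{n,\eta}$ is holomorphic on the whole disk $D(\zeta_n,2\delta_n)$ because $\dbar\bfL_{n,\eta}^*$ is supported in the annulus where $\chi_n$ varies.

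First I compute $\dbar_\zeta\bfL_{n,\eta}^*(\zeta)$. Because $|\zeta_n|>r_n\log n\gg \delta_n$, the disk $D(\zeta_n,3\delta_n)$ avoids the origin, so a single-valued branch of $\log\zeta$ exists there and $\zeta\mapsto (\zeta\bar\eta)^{k-1}e^{nV_n(\zeta,\eta)}$ is holomorphic on $D(\zeta_n,3\delta_n)$. Hence
$$\dbar_\zeta\bfL_{n,\eta}^*(\zeta)=nk^2(\zeta\bar\eta)^{k-1}e^{nV_n(\zeta,\eta)}\dbar\chi_n(\zeta),$$
which is supported in the annulus $\{2\delta_n\le |\zeta-\zeta_n|\le 3\delta_n\}$. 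Using the Hermitian identity $2\re V_n(\zeta,\eta)-V_n(\zeta)-V_n(\eta)=-|\zeta^k-\eta^k|^2$, I get
$$|\dbar\bfL_{n,\eta}^*(\zeta)|^2e^{-nV_n(\zeta)}=n^2k^4|\zeta|^{2k-2}|\eta|^{2k-2}|\dbar\chi_n(\zeta)|^2 e^{-n|\zeta^k-\eta^k|^2}e^{nV_n(\eta)}.$$

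Next I use the key lower bound that was already established earlier in the section: for $\eta\in D(\zeta_n,\delta_n)$ and $\zeta\in\supp\dbar\chi_n$ one has $|\zeta-\eta|\ge\delta_n$, hence by Taylor's formula $n|\zeta^k-\eta^k|^2\ge 2\alpha\log^{2k}n$. Since $|\zeta|,|\eta|$ stay in a fixed compact away from $0$ and $\|\dbar\chi_n\|_{L^2}\le C$, integrating gives
$$\|\dbar\bfL_{n,\eta}^*\|_{nV_n}^2\le Cn^2 e^{-2\alpha\log^{2k}n}e^{nV_n(\eta)},$$
and then Hörmander's estimate \eqref{w-est} yields
$$\|u_{n,\eta}\|_{nV_n}\le Cn\,e^{-\alpha\log^{2k}n}e^{nV_n(\eta)/2}.$$

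Finally I pass to pointwise control. Since $\dbar u_{n,\eta}=\dbar\bfL_{n,\eta}^*$ vanishes on $D(\zeta_n,2\delta_n)$, the function $u_{n,\eta}$ is holomorphic on that disk. For $\zeta\in D(\zeta_n,\delta_n)$ the smaller disk $D(\zeta;t/\sqrt n)$ is contained in $D(\zeta_n,2\delta_n)$ when $n$ is large, so the sub-mean-value lemma preceding this proof applies and gives
$$|u_{n,\eta}(\zeta)|^2 e^{-nV_n(\zeta)}\le Cn\int_{D(\zeta;t/\sqrt n)}|u_{n,\eta}|^2 e^{-nV_n}\le Cn\,\|u_{n,\eta}\|_{nV_n}^2.$$
Combining with the $L^2$-bound produces $|u_{n,\eta}(\zeta)|\le Cn^{3/2}e^{-\alpha\log^{2k}n}e^{n(V_n(\zeta)+V_n(\eta))/2}$, which is the claim.

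The only delicate point is the holomorphy of $u_{n,\eta}$ on the full disk $D(\zeta_n,2\delta_n)$; once that is secured (via the support of $\dbar\chi_n$ and the fact that $D(\zeta_n,3\delta_n)$ avoids $0$ so $\log\zeta$ has a branch), everything is a direct chain of Hörmander plus subharmonicity, with the decisive gain coming from the factor $e^{-n|\zeta^k-\eta^k|^2}\le e^{-2\alpha\log^{2k}n}$ on the support of $\dbar\chi_n$.
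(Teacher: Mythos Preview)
Your proof is correct and follows essentially the same route as the paper: compute $\dbar\bfL_{n,\eta}^*$, use the Gaussian decay $e^{-n|\zeta^k-\eta^k|^2}\le e^{-2\alpha\log^{2k}n}$ on $\supp\dbar\chi_n$ together with $\|\dbar\chi_n\|_{L^2}\le C$ to bound $\|\dbar\bfL_{n,\eta}^*\|_{nV_n}$, apply the H\"ormander estimate \eqref{w-est}, and finish with the pointwise--$L^2$ inequality. Your explicit observation that $u_{n,\eta}$ is holomorphic on $D(\zeta_n,2\delta_n)$ (so the sub-mean-value lemma applies) is a point the paper leaves implicit by citing an external lemma, but the argument is otherwise identical.
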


\begin{proof} First we fix $\eta\in D(\zeta_n,\delta_n)$ but let $\zeta$ be
unrestricted.
We have that
$$\dbar u_{n,\eta}(\zeta)=\dbar\bfL_{n,\eta}^*(\zeta)=\dbar\chi_n(\zeta) n\lambda^2
(\zeta\bar{\eta})^{\lambda-1}e^{nV_n(\zeta,\eta)}.$$
This implies that
$$|\dbar u_{n,\eta}(\zeta)|^2e^{-nV_n(\zeta)}=|\dbar\chi_n(\zeta)|^2\lambda^4n^2|\zeta\eta|^{2\lambda-2}e^{-n|\zeta^\lambda-\eta^\lambda|^2}
e^{nV_n(\eta)}.$$
Since $n|\zeta^\lambda-\eta^\lambda|^2\ge 2\alpha\log^{2}n$ when $\dbar\chi_n(\zeta)\ne 0$, we obtain
$$|\dbar u_{n,\eta}(\zeta)|^2e^{-nV_n(\zeta)}\le Cn^2|\dbar \chi_n(\zeta)|^2 |\zeta\eta|^{2\lambda-2}e^{-2\alpha\log^{2}n}e^{nV_n(\eta)},$$
whence by \eqref{w-est}
$$\|u_{n,\eta}\|_{nV_n}\le Cn^{1/2}|\zeta_n|^{\lambda-1}e^{-\alpha\log^{2}n}e^{nV_n(\eta)/2}.$$
Applying the pointwise-$L^2$ estimate in Lemma \ref{lemsubm}
(or see \cite[Lemma 4.1]{A2})
$$|u_{n,\eta}(\zeta)|e^{-nV_n(\zeta)/2}\le Cn^{1/2}|\zeta|^{\lambda-1}\|u_{n,\eta}\|_{nV_n}$$
we conclude the proof of the lemma.
\end{proof}

\begin{proof}[Proof of Theorem \ref{asymthm}] Using the above lemmas,
we have for $\zeta\in\calR_n$
\begin{align*}|\bfL_n(\zeta,\zeta)-\bfL_n^*(\zeta,\zeta)|&\le |\bfL_n(\zeta,\zeta)-P_n\bfL_{n,\zeta}^*(\zeta)|+
|u_{n,\zeta}(\zeta)|\\
&\le Cn|\zeta_n|^{2\lambda-2}e^{-\alpha\log^{2}n}\cdot e^{nV_n(\zeta)}.
\end{align*}
By choosing $\alpha>0$ slightly smaller, we obtain that
$$|\bfR_n(\zeta)-n\lambda^2|\zeta|^{2\lambda-2}|\le Ce^{-\alpha\log^{2}n}.$$
The proof is complete. \end{proof}

\end{document}